\renewcommand{\cite}{\citep}
\newcommand{\dimp}{\Leftrightarrow}
\newcommand{\AX}{\ifmmode\mathrm{AX}\else\textbf{AX}\fi}
\newcommand{\A}{{\cal A}}
\newcommand{\I}{{\cal I}}
\renewcommand{\P}{{\cal P}}
\newcommand{\U}{{\cal U}}
\newcommand{\V}{{\cal V}}
\newcommand{\T}{{\cal T}}
\newcommand{\thm}{\begin{theorem}}
\newcommand{\ethm}{\end{theorem}}
\newtheoremstyle{axiom}%
{\topsep}{\topsep}{}{}{\bfseries\upshape}{.}{ }{}
\theoremstyle{axiom}
\newtheorem{axiom}{A\hspace{-3pt}}
\newtheorem*{atwo}{$\AthreeprAX$}
\newtheorem*{arch}{Arch}
\newcommand{\intr}{\cap}
\newcommand{\union}{\cup}
\newcommand{\bbox}{\vrule height7pt width4pt depth1pt}
\newcommand{\<}{\langle}\renewcommand{\>}{\rangle}
\newcommand{\rimp}{\Rightarrow}
\newcommand{\true}{\mathit{true}}
\newcommand{\false}{\mathit{false}}
\newcommand{\btrue}{\mathbf{true}}
\newcommand{\bfalse}{\mathbf{false}}
\newcommand{\ifthnels}{\textbf{if} \ldots \textbf{then} \ldots
\textbf{else}}
\newcommand{\ift}[3]
{\mbox{\textbf{if ${#1}$ then ${#2}$ else ${#3}$}}}
\newcommand{\rhoso}{\rho_{SO}}
\newtheorem{prop}{Proposition}
\newtheorem{Example}{Example}
\newcommand{\xam}{\begin{Example}}
\newcommand{\exam}{\bbox\end{Example}}
\newcommand{\Aone}{\ifmmode\mathbf{A1}\else\textbf{A1}\fi}
\newcommand{\AthreeAX}{\mathbf{A2}}
\newcommand{\AthreeprAX}{\mathbf{A2'}}
\newcommand{\AthreepsAX}{\mathbf{A2^*}}
\newcommand{\AthreeprpAX}{\mathbf{A2^\dag}}
\newcommand{\Afour}{\ifmmode\mathbf{A3}\else\textbf{A3}\fi}
\newcommand{\Arch}{\ifmmode\mathbf{Arch}\else\textbf{Arch}\fi}
\newcommand{\Afive}{\ifmmode\mathbf{A4}\else\textbf{A4}\fi}
\newcommand{\Asix}{\ifmmode\mathbf{A5}\else\textbf{A5}\fi}
\newcommand{\Aseven}{\ifmmode\mathbf{A6}\else\textbf{A6}\fi}
\newcommand{\ecsa}{extended statewise cancellation}
\newcommand{\csa}{statewise cancellation}
\newcommand{\Csa}{Statewise cancellation}
\newcommand{\ca}{cancellation}
\newcommand{\ecm}{extended mixture cancellation}
\DeclareMathOperator{\Atoms}{At}
\newcommand{\AtomsAX}{\Atoms_{\T}}
\newcommand{\commentout}[1]{}
\newcommand{\fullv}[1]{#1}
\newcommand{\shortv}[1]{\commentout{#1}}
\newcommand{\EXAX}{\mathit{EX_\mathrm{AX}}}
\newcommand{\EXAXp}{\mathit{EX_\mathrm{AX}^+}}
\newcommand{\CC}{\mathit{CC}}
\newenvironment{thmlist}{\begin{list}{\arabic{enumi}.}
{\usecounter{enumi}
\setlength{\topsep}{-\parskip}\addtolength{\topsep}{3pt}
\setlength{\itemindent}{0pt}
\setlength{\leftmargin}{3pc}
\setlength{\labelwidth}{1pc}}}{\end{list}}
\begin{document}
\title{Constructive Decision Theory%
\renewcommand{\thefootnote}{}\footnotetext{\hskip-2\parindent\kern3pt
Supported in part by NSF under grants
CTC-0208535, ITR-0325453, IIS-0534064, IIS-1703846, and IIS-1718108, by
ONR under grants N00014-00-1-03-41 and N00014-01-10-511, 
by ARO under grant W911NF-17-1-0592, 
by a USDoD Multidisciplinary University Research Initiative (MURI) program
administered by the ONR under grant N00014-01-1-0795, and
by a MURI program administered by the ARO under grant W911NF-19-1-0217.
An extended
abstract of this paper \citep{BEH06} appeared in the {\it Tenth International
Conference on Principles of Knowledge Representation and Reasoning}.
An early version of this paper was written for and first presented at
a conference on decision theory organized by Karl Vind in May 2004,
shortly before his death.  As a scholar and a colleague, Karl has left
us all much to be grateful for.  We are particularly grateful to him
for pushing us to deliver a paper for this conference, and then to
generously comment on it despite his ill health. We are also grateful
to several individuals for comments, to the associate editor and
reviewers of the paper, 
and to the many seminar audiences 
who have listened to and commented on this paper. 
In particular, we thank Jin-Yi Cai for suggesting the proof of Proposition~\ref{prop:cancellationbound}.
}}  \author{Lawrence Blume$^{a,c}$, David Easley$^{a}$ and Joseph
  Y. Halpern$^b$\\~\\
$^a$ Dept.~of Economics and Dept.~of Information Science, Cornell University\\
$^b$ Dept.~of Computer Science, Cornell University\\
$^c$ Institute for Advanced Studies, Vienna}
\date{\today}
\maketitle

\thispagestyle{empty}
\noindent\textbf{Abstract:}
\noindent In most contemporary approaches to decision making under uncertainty, a decision
problem is described by a set of states and set of outcomes, and a rich set
of \emph{acts}, which are functions from states to outcomes over which the
decision maker (DM) has preferences. Many interesting decision problems,
however, do not come with a state
space and an outcome space.  Indeed, in complex problems it is often far
from clear what the state and outcome spaces would be.  We present an
alternative foundation for decision making, in which the primitive objects
of choice are \emph{syntactic programs}. A representation theorem is
proved in the spirit of standard representation theorems,
showing that if the DM's preference relation on objects of choice satisfies
appropriate axioms, then there exist a set $S$ of states, a set $O$ of
outcomes, a way of interpreting the objects of choice as functions from $S$ to $O$, a
probability on $S$, and a utility function on $O$, such that the DM
prefers choice $a$ to choice $b$ if and only if the expected
utility of $a$ is higher than that of $b$.  Thus, the state space and
outcome space are subjective, just like the probability and utility;
they are not part of the description of the problem.  In principle, a
modeler can test for SEU behavior without having access to states or
outcomes. We illustrate the power of our approach by showing that it
can capture decision makers who are subject to framing effects and those who are subject to failures of extensionality.

\noindent\textbf{Correspondent:}\newline
Professor David Easley\newline
Department of Economics\newline
Uris Hall\newline
Cornell University\newline
Ithaca NY 14850
\newpage
\setcounter{page}{1}

\noindent\emph{In memoriam} Karl Vind.
\section{Introduction}
Models of decision making under uncertainty typically begin with
states of the world, outcomes, acts, which are functions mapping
states to outcomes, and preferences over acts.  This Savage
(\citeyear{Savage}) presentation is convenient for the analysis of
choice behavior, in particular for understanding how choices vary with
those things the analyst interprets as variations in preferences such
as tastes and beliefs.  But this representation is often not close to the
way in which choice problems appear in the world, and how
decision makers (DMs) talk and reason about their decisions.  

Decision problems are expressed not in the language of states, acts, and
orders, but instead in some natural language in which the basic
objects are statements such as `the broker recommends a purchase of
IBM' or `the broker recommends a purchase of Alphabet' and the objects
of choice are `use my cash to buy IBM', `use my cash to buy Alphabet',
or `do nothing'. The correct mapping of these statements into states
and outcomes may seem obvious to the analyst, but what guarantees that
the decision maker acts as if he uses any such mapping?  The move from
the language of decisions in the world to the language of decisions in
theory is made by the analyst; it is subjective, it usually proceeds
without discussion, and yet it determines the counterfactual claims
that empirical economics uses to make causal assertions.  Furthermore,
much of the thrust of behavioral decision theory alludes to or even
requires some unpacking of the DM's reasoning.  This is difficult to
do when the language of the model is so different from the language of
the DM.    
In fact, \citet{Grabiszewski16} shows that the existence of an SEU
representation for preferences in a decision problem has little
empirical content.  If the preference relation in a finite Savage
presentation of a decision problem is complete, transitive, and
satisfies a monotonicity requirement, then there exists some Savage
representation into which the problem can be embedded in a
preference-preserving way, and which has an SEU representation.  This
is simply to say that the empirical content of SEU preferences derives from the
meaning of the states and outcomes.  Starting with a natural-language
description of a decision problem, we derive states, outcomes, and
Savage acts in terms of the choices a DM can make and the things she
can observe about the world, so that the Savage presentation is in
some sense close to the DM's understanding of the problem. 

Our approach is to model choice objects as syntactic statements of the
form `if $t_1$ then choose $a_1$, else if $t_2$ choose $a_2$, \ldots',
which we view as programs in a programming languge.
The $a_i$ are actions (not `acts') available to the DM.  We call the
$t_i$ `tests'; they are propositions about the
world.  The programming
language we focus on in this paper is very simple---we use it just to
illustrate our ideas.  Critically, it includes tests (in the context
of \ifthnels\ statements).  These tests involve syntactic descriptions
of the events in the world, and allow us to distinguish events from
(syntactic) descriptions of events.  In particular, there can be two
different descriptions that, intuitively, describe the same event from
the point of view of the modeler but may describe different events
from the point of view of the decision maker.  We assume that a DM has
a preference relation on these programs,\footnote{Or, more precisely, as 
a referee pointed out, on the programs to which these syntactic descriptions 
refer.} and from this derives a Savage
representation: states, outcomes, acts, a corresponding preference
relation on acts, and an expected utility representation for these
preferences.

While bringing the description of choice problems closer to the way
they appear in the world to DMs is a natural move in its own right, it
also has several concrete advantages.  First, empirical decision
theory provides joint tests of the expected utility hypothesis and the 
analyst's interpretation as a Savage model.  It will become clear that
the Savage representation of any nontrivial decision problem is not
unique, that many different Savage representations could be deployed.
Separating the DM's understanding of the problem from the analyst's
interpretation has obvious advantages, among them the ability to
determine which properties of choice are common to all Savage
representations and which are representation-specific.  This is
particularly important for research that proposes to `test
rationality'.  It would be unfortunate if rationality meant `agreement
with the experimenter's view of the world'.
We prefer to
create a more flexible framework for rational choice and then add
rejectable axioms to generate sharper predictions.  

%
Second, in our approach, framing anomalies can often be understood
as a conflict between the modeler's and the DM's representations.
Although
we do not pursue it here, our approach allows us to model the effects
of changing the DM's view of the world by considering more (or a
different set of) tests.   Like \citet{tverskyetal94}, but unlike
\citet{AE07}, our approach to framing begins with a natural-language
description of the world.  Unlike Tversky and Koehler, we construct
a state space from the natural-language tests rather than taking
it as externally given. 

Third, our approach provides a natural way to capture resource-bounded
reasoning---and even incorrect reasoning---about the world in an
expected-utility framework.  Fourth, our framework can model the interaction of
DMs who describe the world in different ways.  Finally,
the modelling of many behavioral effects has often required the
application of decision theories that are somewhat difficult to
manipulate, for instance, requiring non-additive beliefs, menu-choice
constructions, and the like.  These models are often difficult to
compare with each other, and they can be very difficult to implement.
The basic expected utility framework is durable because although it is
not great, it is often not bad; and because additive separability
makes for representations that are easy to implement empirically.
Our approach extends the domain of additive separability some way into
regimes that have normatively been labelled irrational.  The
natural-language modeling approach provides additional hooks for
theorizing about anomalous behavior within an expected-utility framework.  

We began this project because we believed that for many decision
problems, the assumption that there is a natural state space that
would describe the uncertainty, and that DMs could (implicitly)
articulate this in their reasoning, is ridiculous.  Equally absurd is
the assumption that in such circumstances individuals could articulate
a complete preference relation over all alternatives.
We therefore choose to allow for incomplete preferences.  Of course,
with incompleteness comes a loss of uniqueness of the representation.
This does not concern us because, as we have observed, the
the way the DM models the original natural-language problem in terms
of states and outcomes is not unique either.  On
the other 
hand, we show that there always is a \emph{canonical} representation,
and the description of a state will be different depending on whether
preferences are incomplete or not, and if so, what is missing.

Many papers in the literature raise issues with the state-space
approach of Savage, or derive a subjective state 
space. \citet{Machina03} surveys the standard approach and illustrates many difficulties with the theory and with its uses. These
difficulties include the ubiquitous ambiguity over whether the theory
is meant to be descriptive or normative, whether states are exogenous
or constructed by the DM, whether states are external to the DM, and
whether they are measurable or not. \citet{Kreps92} and
\citet{DLR01} use a menu choice model to deal with unforeseen
contingencies---an inability of the DM to list all possible states of
the world. They derive a subjective state space that represents
possible preference relations over elements of the menu chosen by the
DM. \citet{Ghir01} takes an alternative approach to unforeseen
contingencies and models acts as correspondences from a state space to
outcomes.  \citet{GS04} and \citet{Karni06} raise objections
to the state space that are similar to ours, and develop decision
theories without a state space. Both papers derive subjective
probabilities directly on outcomes.  \citet{Ahn07} also develops
a theory without a state space; in his theory, the DM chooses over
sets of lotteries over consequences.  \citet{AE07} allow for the
possibility that there may be different descriptions of a particular
event, and use this possibility to capture framing.  For them, a
`description' is a partition of the state space.  They provide an
axiomatic foundation for decision making in this framework, built on
Tversky and Koehler's (\citeyear{tverskyetal94}) notion of
\emph{support theory}.
(As we shall see, our approach is also quite compatible with support theory.)
\citet{Grabiszewski16} asks whether a decision maker's preferences over acts
mapping a given state space to a known outcome space can be SEU-rationalized 
with an alternative state space even if the original preferences were inconsistent
with SEU. His analysis is related to our analysis in Section 4.3 with an objective outcome space. 
The primary difference is that we do not begin with states and outcomes;
rather our DM has preferences over objects of choice (programs) described
in his own language.  
\citet{BV18} fix a state space and outcome space, but allow the DM to have a
subjective interpretation of what they call \emph{feasible acts},
which can be viewed as syntactic objects; each feasible act is
interpreted as a function from states to outcomes. 
Finally, \citet{Lip99} also considers language and uses a
subjective state space that, intuitively, may include `impossible
possible worlds', where the standard axioms of logic may not
hold. He shows how his approach can capture framing problems, among
other things.   
Although there is clearly some overlap in intuitions, the technical
details of our approach are significantly different from those
mentioned above.  Perhaps the closest is the work of Lipman; we
discuss its relation to our work in Section~\ref{sec:cancallationchoices}. 

The rest of this paper is organized as follows. In the next
section, we introduce the syntactic programs that we take as
our objects of choice, discuss several interpretations of the model,
and show how syntactic programs can be interpreted as Savage
acts. This section also includes several examples illustrating the
power of our approach.  In Section 3, we present our assumptions on
preferences.  Because the set of programs does not have a
mixture-space structure, we replace independence assumptions
with Krantz et al.'s (\citeyear{KLST71}) \emph{cancellation axiom}.
In Section 4 we present our representation theorems for decision
problems with subjective outcomes and those with objective
outcomes. Section 5 discusses how our framework can model boundedly
rational reasoning.  In Section 6 we discuss how updating works for
new information about the external world as well as for new
information about preferences. Our goal in this paper is to introduce
our approach and to relate it to the classical Savage approach.  We
conclude in Section 7 with a discussion of the benefits of the
approach and further suggestions for how it can be applied.

\section{Describing Decision Problems}
As usual, we assume that the agent chooses among acts, but as we said
in the introduction, for us, the acts are programs in a simple
programming language.  So we begin by describing the language of tests, and then use this
language to construct programs, our syntactic objects of choice.
We then discuss how the language of tests is used to describe the DMs theory of the world.
We conclude the section by giving a several examples where this approach
can capture behavior that is difficult to explain using more standard approaches.

\subsection{Languages for tests and choices}
A \emph{primitive test} is a statement about the world that is either true
or false, such as `the economy will be strong next
year' and `the moon is in the seventh house'. We assume a finite set
$T_0$ of primitive tests.  The set $T$ of tests is constructed by
closing the set of primitive tests under conjunction and negation.
That is, $T$ is the smallest set such that $T_0\subseteq T$, and if
$t_1$ and $t_2$ are in $T$, so is $t_1\land t_2$ and $\neg t_1$.
Thus, the language of tests is just a propositional language whose atomic
propositions are the elements of $T_0$. 

We consider two languages for choices.  In both cases, we begin with a
finite set $\A_0$ of \emph{primitive choices}.  These may be objects
such as `buy 100 shares of IBM' or `buy \$10,000 worth of bonds'. The
interpretation of these acts is tightly bound to the decision problem
being modeled.  The first language simply closes off $\A_0$ under
\ifthnels.  By this we mean that if $t$ is a test in $T$ and $a$ and $b$
are choices in $\A$, then $\ift{t}{a}{b}$ is also a choice in $\A$.
When we need to be clear about which $T_0$ and $A_0$ were used to
construct $\A$, we will write $\A_{\A_0,T_0}$.  Note that $\A$ allows
nesting, so that $\ift{t_1}{a}({\ift{t_2}{b}{c}})$ is also a
choice.

The second languages closes off $\A_0$ with \ifthnels\ and
randomization.  That is, we assume that objective probabilities are
available, and require that for any $0\leq r\leq 1$, if $a$ and $b$ are choices, so is $ra+(1-r)b$.  Randomization and \ifthnels\ can be
nested in arbitrary fashion.  We call this language $\A^+$
($\A^+_{\A_0,T_0}$ when necessary).

Tests in $T$ are elements of discourse about the world.  They could be events upon which choice is contingent:  If the noon price of  stock today is below \$600, then buy 100 shares, else buy none.  More generally, tests in $T$ are part of the DM's description of the decision problem, just as states are part of the description of the decision problem in Savage's framework.  However, elements of $T$ need not be complete descriptions of the relevant world, and therefore may not correspond to Savage's states. When we construct state spaces, elements of $T$ will clearly play a role in defining states, but, for some of our representation theorems, states cannot be constructed out of elements of $T$ alone.  Additional information in states is needed for both incompleteness of preferences and when the outcome space is taken to be objective or exogenously given.

The choices in $\A$ and $\A^+$ are \emph{syntactic} objects; strings of symbols.   They can be given \emph{semantics}---that is, they can be interpreted---in a number of ways.  For most of this paper we focus on one particular way of interpreting them that lets us connect them to Savage acts, but we believe that other semantic approaches will also prove useful (see Section~\ref{sec:conclusion}). The first step in viewing choices as Savage acts is to construct a state space $S$, and to interpret the tests as events (subsets of $S$).  With this semantics for tests, we can then construct, for the state space $S$ and a given outcome space $O$, a function $\rho_{SO}$ that associates with each choice $a$ a Savage act $\rho_{SO}(a)$, that is, a function from $S$ to $O$.  Given a state space $S$, these constructions work as follows:
\begin{definition} A \emph{test interpretation} $\pi_S$ for the state space $S$ is a function associating with each test a subset of $S$.  An interpretation is \emph{standard} if it interprets $\neg$ and $\land$ in the usual way; that is
\begin{itemize}
\item $\pi_S(t_1 \land t_2) = \pi_S(t_1) \intr \pi_S(t_2)$
\item $\pi_S(\neg t) = S - \pi_S(t)$.
\end{itemize}
\end{definition}

\noindent Intuitively, $\pi_S(t)$ is the set of states where $t$ is
true.  Up to Section~\ref{sec:nonstandard}, we assume that all
interpretations are \emph{standard}, that is, the obey the rules of
classical logic.  In particular, this means that in all states,
exactly one of $t$ or $\neg t$ is true.  
A standard interpretation is completely determined by its behavior on
primitive tests.

\begin{definition}A \emph{choice interpretation}
$\rhoso$ for the state space $S$ and outcome space $O$
assigns to each choice $a\in \A$ a (Savage) act, that is, a function
$\rhoso(a):S\to O$.
\end{definition}
\noindent Given a test interpretation $\pi_S$ and a choice
interpretation $\rhoso^0: \A_0 \to O^S$ for primitive choices,
which assigns to each $a_o\in \A_0$ a function from $S\to O$,
we can construct a choice interpretation by extending $\rhoso^0$
inductively as follows:
\begin{equation}\label{ifthenelse}
\rho_{SO}(\mbox{{\bf if} $t$ {\bf then} $a_1$ {\bf else} $a_2$})(s) =
\left\{\begin{array}{ll}
\rho_{SO}(a_1)(s) &\mbox{if $s \in \pi_S(t)$}\\
\rho_{SO}(a_2)(s) &\mbox{if $s \notin \pi_S(t)$.}
\end{array}\right.
\end{equation}
A choice interpretation $\rhoso$ constructed in this way is said to be
\emph{compatible with 
  $\pi_S$ (and $\rhoso^0$)}.
This semantics captures the idea of contingent choices; that, in the
choice \textbf{if $t$ then $a_1$ else $a_2$}, the realization of $a_1$
is contingent upon $t$, while $a_2$ is contingent upon `not $t$'.  Of
course, $a_1$ and $a_2$ could themselves be non-primitive programs,
with nested $\ifthnels$ statements.

Extending the semantics to the language $\A^+$, given $S$, $O$, and
$\pi_S$, requires us to associate with each choice $a$ an
Anscombe-Aumann (AA) act \cite{AA63}, that is, a function from
$S$
to probability measures on $O$.  Let $\Delta(O)$ denote the set of
probability measures on $O$ and let $\Delta^*(O)$ be the subset of
$\Delta(O)$ consisting of the probability measures that put probability one on an outcome.  Let $\rho_{SO}^0: \A_0 \rightarrow \Delta^*(O)^S$ be a choice interpretation for primitive choices that assigns to each $a_o\in \A_0$ a function from $S\to \Delta^*(O)$. Now we can extend $\rho_{SO}^0$ by induction on structure to all of $\A^+$ in the obvious way. For \ifthnels\ choices we use \eqref{ifthenelse};
to deal with randomization, define
\begin{equation*}
\rho_{SO}(r a_1 + (1-r) a_2)(s) = r \rho_{SO}(a_1)(s) +
(1-r)\rho_{SO}(a_2)(s).
\end{equation*}
That is, the distribution $\rho_{SO}(r a_1 + (1-r) a_2)(s)$ is the
obvious mixture of the distributions $\rho_{SO}(a_1)(s)$ and $\rho_{SO}(a_2)(s)$. Note that we require $\rho_{SO}$ to associate with each primitive choice in each state a single outcome (technically, a distribution that assigns probability 1 to a single outcome), rather than an arbitrary distribution over outcomes.  So primitive choices are interpreted as Savage acts, and more general choices, which are
formed by taking objective mixtures of choices, are interpreted as
AA acts.  This choice is largely a matter of taste.  We would get similar representation theorems even if we allowed $\rho_{SO}^0$ to be an arbitrary function from $\A$ to $\Delta(O)^S$.  However, this choice
does matter for our interpretation of the results; see Example~\ref{xam3} for further discussion of this issue.

\subsection{The DM's Theory of the World}\label{sec:theory}
The DM will typically have some knowledge about relationships between
various tests.  For example, a DM that can do propositional reasoning
will realize that $t_1 \land t_2$ is equivalent to $t_2 \land t_1$.  A 
DM may also have domain-dependent knowledge.   For example, if the DM
knows that interest rates will remain constant between periods 2 and 3,
and interest rates are either 4\% or 5\%, if $R_i(j)$ says that the
interest rate in period $i$ is $j$\%, then the DM knows that
$$(R_2(4) \lor R_2(5)) \land (R_2(4) \dimp R_3(4)) \land (R_2(5) \dimp
R_3(5)).$$

Formally, we add to the description of a decision problem a
\emph{theory}, that is, a set $\T \subseteq T$ of tests. 
\begin{definition} A test interpretation $\pi_S$
for the state space $S$
\emph{respects} a theory $\T$ iff  for all $t\in\T$,
$\pi_S(t)=S$.\end{definition} 
\noindent A theory represents the DM's view of the world.
Different people may,
however, disagree about what they take to be obviously true of the
world. Many people will assume that the sun will rise tomorrow.
Others, like Laplace, will consider the possibility that it will not. 

Choices $a$ and $b$ are \emph{equivalent with respect to a set
$\Pi$ of test interpretations} if, no matter what interpretation $\pi
\in \Pi$ is used, they are interpreted as the same function. For
example, in any standard interpretation, $\ift{t}{a}{b}$ is equivalent
to $\ift{\neg\neg t}{a}{b}$; no matter what the test $t$ and
choices $a$ and $b$ are, these two choices have the same input-output
semantics.

\begin{definition}For a set $\Pi$ of test interpretations, choices
$a$ and $b$ are $\Pi$-equivalent, denoted $a\equiv_\Pi b$, if for all
test interpretations $\pi \in \Pi$, if $\pi$ is a test interpretation on
a state space $S$, then for all outcome spaces $O$ 
  and choice interpretations $\rho_{SO}$ compatible with $\pi_S$, we have
$\rho_{SO}(a)=\rho_{SO}(b)$.\end{definition} 

\noindent
Denote by $\Pi_{\T}$ the set of all standard interpretations that
respect theory $\T$. Then\\ $\Pi_{\T}$-equivalent $a$ programs and $b$ are said to
be $\T$-equivalent, and we write $a\equiv_{\T} b$. 
Note that equivalence is defined relative to a given set $\Pi$ of
interpretations.  Two choices may be equivalent with respect to the set
of all standard interpretations that hold a particular test $t$ to be
true, but not equivalent to the larger set of all standard test
interpretations.

In this section, we demonstrate the power of our approach by developing 
some well-known examples of framing.  Framing problems appear when 
a DM solves inconsistently two decision problems that are designed by 
the modeler to be equivalent or that are obviously similar after recognizing 
an equivalence.  The fact that choices are syntactic objects allows us to 
capture framing effects.  This is an explicit virtue of providing a framework 
for decision theory that is closer to the natural language DMs might use in 
considering their choices. 

\xam\label{xam:framing} 
Consider the following well-known example of the effects of framing, 
due to McNeil et al.~(\citeyear{MPST82}). DMs are asked to choose 
between surgery or radiation therapy as a treatment for lung cancer.  
The problem is framed in two ways.  In the  \emph{survival frame}, 
DMs are told that, of 100 people having surgery, 90 live through the 
post-operative period, 68 are alive at the end of the first year, and 34 
are alive at the end of five years; and of 100 people have radiation 
therapy, all live through the treatment, 77 are alive at the end of the 
first year, and 22 are alive at the end of five years.  In the 
\emph{mortality frame}, DMs are told that of 100 people having 
surgery, 10 die during the post-operative period, 32 die by the end of 
the first year, and 66 die by the end of five years; and of 100 people 
having radiation therapy, none die during the treatment, 23 die by the 
end of the first year, and 78 die by the end of five years. Inspection 
shows that the outcomes are equivalent in the two frames---90 of 100 
people living is the same as 10 out of 100 dying, and so on.  Although 
one might have expected the two groups to respond to the data in 
similar fashion, this was not the case. While only 18\% of DMs prefer 
radiation therapy in the survival frame, the number goes up to 44\% 
in the mortality frame. 

We can represent this example in our framework as follows.  We assume
that we have the following tests:
\begin{itemize}
\item $RT$, which intuitively represents `100 people have radiation
therapy';
\item $S$, which intuitively represents `100 people have surgery';
\item $L_i(k)$, for $i = 0, 1, 5$ and $k = 0, \ldots, 100$, which
intuitively represents that $k$ out of 100 people live through the
post-operative period (if $i =0$), are alive after the first year (if
$i=1)$, and are alive after five years (if $i=5$);
\item $D_i(k)$, for $i = 0, 1, 5$ and $k = 0, \ldots, 100$, which is like $L_i(k)$, except `live/alive' are replaced by `die/dead'.
\end{itemize}
In addition, we assume that we have primitive programs $a_S$ and $a_R$ that represent `perform surgery' and `perform radiation theory'.  With these tests, we can characterize the description of the survival frame by the following
test $t_1$:
\begin{equation*}
(S \rimp L_0(90) \land L_1(68) \land L_5(34)) \land
(RT \rimp L_0(100) \land L_1(77) \land L_5(22)),
\end{equation*}
(where, as usual, $t \rimp t'$ is an abbreviation for $\neg(t \land
\neg t')$); similarly, the mortality frame is characterized by the following test $t_2$:
\begin{equation*}
(S \rimp D_0(10) \land D_1(32) \land D_5(66)) \land
(RT \rimp D_0(0) \land D_1(23) \land D_5(78)).
\end{equation*}

The choices offered in the McNeil et al.~experiment can be viewed as
conditional choices: what would a DM do conditional on $t_1$ (resp.,
$t_2$) being true.  Using ideas from Savage, we can capture the survival frame as a decision problem with the following two choices:
\begin{equation*}
\begin{array}{ll}
\mbox{{\bf if} $t_1$ {\bf then} $a_S$  {\bf else}  $a$,  and}\\
\mbox{{\bf if} $t_1$ {\bf then} $a_R$  {\bf else}  $a$,}
\end{array}
\end{equation*}
where $a$ is an arbitrary choice.  Intuitively, comparing
these choices forces the DM to consider his preferences between $a_S$
and $a_R$ conditional on the test, since the outcome in these two
choices is the same if the test does not hold.
Similarly, the mortality frame amounts to a decision problem with the
analogous choices with $t_1$ replaced by $t_2$.

There is nothing in our framework that forces a DM to identify the
tests $t_1$ and $t_2$; the tests $L_i(k)$ and $D_i(100-k)$ a priori
are completely independent, even if the problem statement suggests
that they should be equivalent. Hence there is no reason for a DM to
identify the choices $\ift{t_1}{a_S}{a}$ and $\ift{t_2}{a_S}{a}$.
As a consequence, as we shall see, it is perfectly consistent with our
axioms that a DM has the preferences
$\ift{t_1}{a_S}{a}\succ\ift{t_1}{a_R}{a}$ and $\ift{t_2}{a_R}{a}\succ\ift{t_2}{a_S}{a}$.
a sophisticated DM might
understand that $L_i(k)\dimp D_i(100-k)$, for $i=0,1,5$ and
$k=1,\ldots,100$.
\exam

We view it as a feature of our framework that it can capture this
framing example for what we view as the right reason: the fact that
DMs do not necessarily identify $L_i(k)$ and $D_i(100-k)$.
Of course, a sophisticated DM might
understand that $L_i(k)\dimp D_i(100-k)$, for $i=0,1,5$ and
$k=1,\ldots,100$.
Such a DM would include these tests in her theory $\T$.  If these tests are
in her theory, then she will make the same decision in both frames.
More precisely, we have 
$\ift{t_1}{a_S}{a} \equiv_\T \ift{t_2}{a_S}{a}$ and
$\ift{t_1}{a_R}{a} \equiv_\T \ift{t_2}{a_R}{a}$.
 The results of the McNeil et al.~experiment can be interpreted in our
 language as a failure by some DMs to have a theory that makes tests
 stated in terms of mortality data or survival data semantically equivalent.
A similar approach can be used to capture `nudges' or `naive
 diversification'. 

\xam\label{xam:nudge}  As discussed at length by Thaler and Sunstein
(\citeyear{TS09}), what the default choice is can have a significant
effect on the outcome of a decision problem.  For example, whether the
default in a company's 401K plan is that employees are enrolled in the plan, with the
possibility of opting out, or the default is that employees are not
enrolled in the plan, but have the option of opting in, can make a
huge difference to participation rates.

We model this in our framework using an approach similar to that
used in Example~\ref{xam:framing}.  There are two frames.  In the
first, the default is participation and the active option is to opt out; 
in the second, the default is not participating, and the active option is
opting in.  From the agent's perspective, there are two basic actions: 
\begin{itemize}
\item $a_1$: do nothing (which means that the agent will get the
  default option);
\item $a_2$: choose the active option.
\end{itemize}
Let $t_1$ be the formula that describes the first frame: if the agent
does nothing then he participates in the 401K; otherwise, he fills in
paperwork and opts out.  Let $t_2$ be the corresponding
formula that describes the second frame.  This means that the 
agent is comparing $\ift{t_1}{a_1}{a}$ and $\ift{t_1}{a_2}{a}$ in the
first frame, where $a$ is some default action, and comparing 
$\ift{t_2}{a_1}{a}$ and $\ift{t_2}{a_2}{a}$.  As it happens,
$\ift{t_1}{a_1}{a}$  and $\ift{t_2}{a_2}{a}$ (resp.,
$\ift{t_1}{a_2}{a}$ and $\ift{t_2}{a_1}{a}$) lead to the same outcome
in the actual domain.  A DM who is resource-bounded, however, may
not recognize this cross-frame equivalence and chooses $a_1$ in both
frames.
\exam

\xam Benartzi and Thaler (\citeyear{BT01}) observed that if
employees are offered $n$ choices in a defined contribution savings plan,
many naively diversify, using the heuristic of putting $1/n$ of
their allocation into each of the $n$ options offered.  This, to us,
shows the effect of language on choices. 

The survey that Benartzi and Thaler use to illustrate this point asks employees to allocate their retirement savings between two funds: A and B. In one instance, fund A is a stock fund and fund B is a bond fund. In the second instance, fund A is a stock fund and fund B is a balanced fund putting one-half of its money in stocks and one-half in bonds. Benartzi and Thaler find that a large portion of participants split their retirement savings equally between A and B in both instances. To test whether this results from a misunderstanding of stocks and bonds, Benartzi and Thaler run the same survey with the terms stock and bond replaced by the distribution on returns that would result from each choice. They get the same answer.

In our framework a simplified version of this survey can be captured by a decision problem with two primitive actions:
\begin{itemize}
\item $a_1$: put $1/2$ of the money in fund A and $1/2$ in fund B;
\item $a_2$: put all of the money in fund B;
\end{itemize}
and two tests that describe the contents of the funds:
\begin{itemize}
	\item $t_1$: the formula describing the first frame in which
          fund A is all stock and fund B is all bonds; 
	\item $t_2$: the formula describing the second frame in which fund A is all stock and fund B is one-half stock and one-half bonds.
\end{itemize}
The decision maker compares $\ift{t_1}{a_1}{a}$ and
$\ift{t_1}{a_2}{a}$ where $a$ is some default action, and compares
$\ift{t_2}{a_1}{a}$ and $\ift{t_2}{a_2}{a}$. In this setting
$\ift{t_1}{a_1}{a}$ and $\ift{t_2}{a_2}{a}$ lead to the same outcome.
Not having thought about it, a DM may not realize
$\ift{t_1}{a_1}{a}$ and $\ift{t_2}{a_2}{a}$ lead to the same outcome;
indeed, may even believe that
$\ift{t_1}{a_1}{a}$ and $\ift{t_2}{a_1}{a}$ lead to the same outcome.

Our approach restricts a theory to being a set of tests (although we
consider a generalization of this in Section~\ref{sec:nonstandard}.
The programs $\ift{t_1}{a_1}{a}$ and $\ift{t_2}{a_2}{a}$ may be
equivalent in the mind of the experimenter, and perhaps also 
in the mind of the subject.  Our language of tests is not rich enough to 
capture how the DM interprets programs, but this is captured by the
DM's preference relation.  If the DM views two programs as equivalent,
then he will be indifferent between them.
%
\exam

\xam
In a famous experiment,  Johnson \emph{et al.}
(\citeyear{jhmk}) have a dramatic finding, tied to a
decision problem: Subjects offered hypothetical health insurance were
willing to pay a higher premium for policies covering hospitalization
for any disease than they were for policies covering
hospitalization for any reason at all, which would include both disease 
and accidents. 
\citet{tverskyetal94} understand this as a failure of
\emph{extensionality}.  They observe (p.~548) that `\ldots
probability judgments are attached not to events but to descriptions
of events'.  Extensionality is the property that different
descriptions of the same event should be assigned the same probability. 
It fails in the Johnson \emph{et 
al.} experiment because the event `any disease or accident' is a
strict subset of `any reason at all'; extensionality would require the
probability of `any reason at all' to be the sum of `any disease or
accident' and `a reason other than disease of accident'.
Tversky and Koehler respond
to this failure by introducing \emph{support theory} as a way of
putting likelihoods not on events but on their descriptions.    

In our framework, extensionality fails when two 
descriptions of the same event are not perceived as such by the DM.
Because for us states are subjective, representing the \emph{DM's}
point of view, so even if tests $t_1$ and $t_2$ are equivalent from
the modeler's point of view, we may have 
$\pi_S(t_1)\neq\pi_S(t_2)$ for the DM's state space $S$ and
test interpretation $\pi_S$. Thus, the DM may assign $t_1$ and $t_2$
different probabilities.

The DM's theory captures equivalence among descriptions from the point
of view of the DM.  For instance, suppose a DM is considering the purchase
of an home insurance policy.  One alternative might offer coverage up to the
value of the home and its contents `in case of damage'.  Another might 
offer the same coverage for `storm damage', `fire damage', and
`other damage'.  The relevant tests are $t_1$, damage of any kind, 
and $t_2$ through $t_4$: storm, fire, and other damage, 
respectively.  If $t_1 \dimp (t_2\lor t_3\lor t_4)$ is not part of her theory $\T$, then the DM's decision problem might
be represented by a state space $S$ and a test interpretation $\pi_S$
such that $\pi_S(t_1)\neq\pi_S(t_2)\cup\pi_S(t_3)\cup\pi_S(t_4)$.  On the other
hand, should
$t_1\dimp (t_2\lor t_3\lor t_3)$ be part of her theory $\T$, then for any
state space $S$ and test interpretation $\pi_S\in \Pi_{\T}$,
$\pi_S(t_1)=\pi_S(t_2)\cup\pi_S(t_3)\cup\pi_S(t_4)$. 

Support theory develops this idea in a very specific way.  Two
events $A$ and $B$ are \emph{exclusive} (with respect to an
interpretation $\pi_S$) if, in our terminology,  
$\pi_S(A)\cap\pi_S(B)=\emptyset$.  Support theory distinguishes two
kinds of disjunctions.  If $t_1$ and $t_2$ are exclusive, their
\emph{explicit disjunction} is $t_1\lor t_2$.   So in the preceding
paragraph, `$t_2$ or $t_3$ or $t_4$' , that is, `$t_2\lor t_3\lor
t_4$', is an explicit disjunction.  On
the other hand, `$t_1$' is an \emph{implicit disjunction}; it is in
fact the disjunction of $t_2$ and $t_3$ but
it is not an `or' statement.  Support
theory assumes the 
existence of a ratio scale $s$ that measures the degree of support for
an event.  The key idea of support theory is expressed in Tversky
and Koehler's (\citeyear{tverskyetal94}) equation 2: 
\begin{equation}\label{eq:TK}
s(t_1)\leq s(t_2\lor t_3\lor t_4)=s(t_2)+s(t_3)+s(t_4),
\end{equation}
that is, the scale $s$ is additive over explicit disjunction, but
subadditive over implicit disjunction.  
Effectively, Tversky and Koehler are assuming that if $t_1$ is an
implicit disjunction of $t_2$ and $t_3$, then the agent realizes that
$t_2 \rimp t_1$ and that $t_3 \rimp t_1$, but not that $t_1 \dimp (t_2
\lor t_3)$; that is, the first two formulas are in his theory, while
the third is not.  Under these assumptions, (\ref{eq:TK}) seems
perfectly reasonable.

`Implicit subadditivity' can be captured in our framework simply by allowing
$\pi_S(t_1) \supseteq
\pi_S(t_2)\cup \pi_S(t_3) \cup \pi_S(t_4)$.  Subsequent iterations of support
theory propose that subadditivity may also apply to explicit disjunctions,
that is, $s(t_2\lor t_3\lor t_4)\leq s(t_2)+s(t_3)+s(t_4)$.  In our approach,
this stronger form of 
subadditivity (and, more generally, non-additivity) can be captured by
considering 
nonstandard test interpretations; see Section~\ref{sec:nonstandard}.

We are not taking a stand on the normative implications of extensionality 
failures.  Our point is that extensionality and its
failure is a product of the theory of the world $\T$ that a DM brings to
the decision problem at hand (and how it compares to the
experimenter's or modeler's view of the world).  One can extend
conventional expected utility to state spaces and act spaces that
admit failures of extensionality.  A failure of extensionality is not
inconsistent with the existence of an SEU representation on a suitably
constructed state space.  Moreover, if we modelers take $t_0\dimp t_1$
to be an axiom that describes the world, but our DM disagrees, then
the probability of the set
$(\pi_S(t_0)/\pi_S(t_1))\cup(\pi_S(t_1)/\pi_S(t_0))$ measures the
degree of framing bias from the modeler's point of view.  Furthermore,
our framework offers a natural way of bringing support theory to
decision problems; a DM's degree of support for a formula could be
inferred from individuals' willingness to accept certain bets. 

In many experiments (e.g., that of Johnson \emph{et al.} (\citeyear{jhmk})),
some subjects are exposed to a treatment where they see only 
implicit disjunctions, while others are
exposed only to the basic events.  Different subjects
are exposed to different frames, and different frames come with
different languages, so there is no cause for surprise that
extensionality fails; options available in one treatment
are not present in the other  treatment.  It
is also interesting to ask what happens when language changes during
the course of evaluation.  We might imagine a sequential problem in
which subjects are first asked to make a broad assessment of why a
car failed, and then are asked for more explicit assessments about the
source of the failure.  The second-stage questions add possibilities to
the DM's language.  We could certainly model this in a probabilistic
way, which would show no subadditivity, but we can also imagine
alternatives that would be consistent with failures of extensionality.
This, however, is a topic for another paper. 
\exam

\section{The Axioms}\label{sec:representation}
This section lays out our basic assumptions on preferences.  If we
want to get an SEU representation, then we must have an analogue of an
independence axiom or the sure-thing principle. However, in their
usual form, these axioms require a mixture space.  The set of feasible
programs is not a mixture space.  A standard way  of dealing with this
problem (see, e.g., \cite{KLST71}) is to use \emph{cancellation
axioms}.  Since cancellation is not so well known among
economists, we also illustrate the relationship between cancellation and more
familiar preference properties.

\subsection{Preferences}
We assume that the DM has a preference relation $\succeq$ on a
subset $C$ of the set $\A$ (resp., $\A^+$) of non-randomized (resp.,
randomized) acts.\footnote{We do not assume that this relation is necessarily complete or transitive; it is just a binary relation on $C\times C$.} This preference relation has the usual
interpretation of `at least as good as'.  We take $a \succ b$ to be an abbreviation for $a\succeq b$ and $b {\not{\succeq}} \, a$, even if $\succeq$ is not complete.  We prove various representation theorems that depend upon the language, and upon whether outcomes are taken to be given or not.  The engines of our analysis are various cancellation axioms, which are the subject of the next section.  At some points in our analysis we consider complete preferences:
\begin{axiom}\label{Aone}
The preference relation $\succeq$ is complete.
\end{axiom}
\noindent The completeness axiom has often been defended by the claim
that `people, in the end, make choices'.  Nonetheless, from the outset
of modern decision theory, completeness has been regarded as a
problem. \citet[Section 2.6]{Savage} discusses the difficulties
involved in distinguishing between indifference and incompleteness.  He
concludes by choosing to work with the relationship he describes with
the symbol $\leqq\kern-1pt\raisebox{2pt}{$\mathbf{\cdot}$}$, later
abbreviated as $\leq$, which he interprets as `is not preferred to'.
The justification of completeness for the `is not preferred to'
relationship is anti-symmetry of strict preference. Savage,
\citet{Au62}, \citet{Bewley} and \citet{mandler} argue
against completeness as a requirement of rationality.
\citet{eliazok} have argued that rational choice theory with
incomplete preferences is consistent with preference reversals.  In our
view, incompleteness is an important expression of ambiguity in its
plain meaning (rather than as a synonym for a non-additive
representation of likelihood).  There are many reasons why a comparison
between two objects of choice may fail to be resolved:  obscurity or
indistinctness of their properties, lack of time for or excessive cost
of computation, the incomplete enumeration of a choice set, and
so forth.  We recognize indecisiveness in ourselves and others, so it
would seem strange not to allow for it in any theory of preferences that
purports to describe tastes (as opposed to a theory which purports to
characterize consistent choice).

We will focus our analysis on preferences that may be incomplete. To
us, requiring completeness for preferences over the complex syntactic objects
in $\A$ seems unnaturally restrictive. Our axioms thus yield representation 
theorems for incomplete preferences. Once we obtain states, outcomes, and
acts, our theorems are similar to other representation theorems for
incomplete preferences (e.g., \cite{DMO04,GK2013,Nau2006}),
although our axioms differ as they are placed on
preferences over different objects.
While representation theorems with incomplete preferences are, by
now, quite standard, we are able to use them to give some interesting
insight into updating (see Section~\ref{sec:updating}).

\subsection{Cancellation} 
Although simple
versions of the cancellation axiom have appeared in the literature
(e.g., \citet{scott64} and \citet{KLST71}), it is nonetheless
not well known, and so before turning to our framework we briefly
explore some of its implications in more familiar settings.
Nonetheless, some of the results here are new; in particular, the results on
cancellation for partial orders.  These will be needed for proofs in the
appendix.

Let $C$ denote a set of choices and $\succeq$ a preference relation on~$C$.  We use the following notation:  Suppose
$\< a_1,\ldots,a_n\>$ and $\langle b_1,\ldots,b_n\rangle$ are
sequences of elements of $C$.  If for all $c\in C$, $\#\{j:a_j=c\}=
\#\{j:b_j=c\}$, we write $\{\{a_1,\ldots a_n\}\}=
\{\{b_1,\ldots,b_n\}\}$.  That is, the \emph{multisets} formed by the two sequences are identical.

\begin{definition}[Cancellation] The preference relation $\succeq$ on
$C$ satisfies \emph{cancellation} iff for all pairs of sequences
$\<a_1,\ldots,a_n\>$ and $\<b_1,\ldots,b_n\>$ of elements of $C$ such that $\{\{a_1,\ldots,a_n\}\} = \{\{b_1,\ldots,b_n\}\}$, if $a_i\succeq b_i$ for $i\leq n-1$, then $b_n\succeq a_n$.
\end{definition}
\noindent Roughly speaking, cancellation says that if two collections of choices are identical, then it is impossible to order the choices so as to prefer each choice in the first collection to the corresponding choice in the second collection.  The following proposition shows that cancellation is equivalent to reflexivity and transitivity.
Although \citet[p.~251]{KLST71} and
\citet[p.~743]{Fishburn92} have observed that cancellation implies
transitivity, this full characterization appears to be new.
\begin{prop}\label{prop:cancel} A preference relation $\succeq$ on a
choice set $C$ satisfies cancellation iff
\begin{thmlist}
\item[(a)] ${}\succeq{}$ is reflexive, and
\item[(b)] $\succeq$ is transitive.
\end{thmlist}
\end{prop}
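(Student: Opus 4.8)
The plan is to prove the two implications of the biconditional separately, treating cancellation $\Rightarrow$ (reflexive and transitive) as the routine direction and the converse as the step that carries the real content.

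For the forward direction I would instantiate the cancellation axiom at small values of $n$. Reflexivity comes from $n=1$: the only sequences with $\{\{a_1\}\} = \{\{b_1\}\}$ have $a_1 = b_1$, the hypothesis ``$a_i \succeq b_i$ for $i \leq n-1$'' is vacuous (there is no index $i \leq 0$), so cancellation forces $b_1 \succeq a_1$, i.e.\ $a_1 \succeq a_1$. For transitivity, suppose $a \succeq b$ and $b \succeq c$; I would apply cancellation with $n=3$ to the cyclically shifted sequences $\langle a,b,c\rangle$ and $\langle b,c,a\rangle$, which manifestly have the same multiset. The hypotheses $a_1 \succeq b_1$ (that is, $a \succeq b$) and $a_2 \succeq b_2$ (that is, $b \succeq c$) both hold, so the conclusion $b_3 \succeq a_3$ delivers $a \succeq c$.

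The converse is the heart of the argument. Assume $\succeq$ is a preorder, and take sequences $\langle a_1,\ldots,a_n\rangle$ and $\langle b_1,\ldots,b_n\rangle$ with equal multisets and $a_i \succeq b_i$ for $i \leq n-1$; the goal is $b_n \succeq a_n$. Since the two multisets agree, I would fix a bijection $\tau$ on $\{1,\ldots,n\}$ with $a_i = b_{\tau(i)}$ for every $i$ (such a $\tau$ exists even when elements repeat, and I need only one). Substituting into the given preferences rewrites them purely in terms of the $b$'s: $b_{\tau(i)} \succeq b_i$ for every $i \leq n-1$, i.e.\ for every index except possibly $n$. I would then trace the cycle of $\tau$ through the index $n$: letting $k \geq 1$ be least with $\tau^k(n)=n$, the indices $\tau(n),\tau^2(n),\ldots,\tau^{k-1}(n)$ are all distinct from $n$, so each is a legitimate value of $i$ in the relation above. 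Reading these relations off in order produces the chain $b_n = b_{\tau^k(n)} \succeq b_{\tau^{k-1}(n)} \succeq \cdots \succeq b_{\tau(n)}$, and transitivity collapses it to $b_n \succeq b_{\tau(n)} = a_n$, as desired.

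The main obstacle is getting this converse exactly right: choosing the index-matching bijection $\tau$ so that the given comparisons line up into a single directed chain, and recognizing that it is precisely the $\tau$-cycle containing $n$ that closes the loop. Two edge cases deserve a sentence each. If $k=1$ then $\tau(n)=n$, so $a_n = b_n$ and the required $b_n \succeq a_n$ is exactly reflexivity---this is the only place reflexivity, as opposed to transitivity, is invoked. If $k \geq 2$ the chain is nontrivial and transitivity does the work. This split also pinpoints why both properties are genuinely needed, matching the two-part statement of the proposition.
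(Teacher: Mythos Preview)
Your proof is correct. The forward direction matches the paper's argument exactly (reflexivity from $n=1$, transitivity from the cyclic shift at $n=3$).

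For the converse, the paper takes a somewhat different route. It argues by contradiction: assume a counterexample of \emph{minimal} length $n$, pick a permutation $\tau$ with $a_{\tau(i)} = b_i$ (the inverse of your convention), and then iterate $\tau$ starting from the index~$1$. If the orbit returns to $1$ before hitting $n$, one can peel off that cycle and obtain a shorter counterexample, contradicting minimality; otherwise the orbit hits $n$, transitivity gives $a_1 \succeq a_n$, and a second pass starting from $b_n$ eventually closes up at $1$ to give $b_n \succeq a_1$, whence $b_n \succeq a_n$. Your argument is more direct: you simply follow the single $\tau$-cycle containing $n$, observe that every other index on that cycle is $\leq n-1$ and hence contributes a usable comparison $b_{\tau(i)} \succeq b_i$, and chain them. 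This avoids both the contradiction setup and the minimality hypothesis, and it makes transparent that only the cycle through $n$ matters---the other cycles of $\tau$ are irrelevant. The paper's approach, by contrast, has to explicitly dispose of those irrelevant cycles via the minimality reduction.
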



\noindent {\bf All proofs are provided in the Appendix.}

\commentout{
Our representation theorems will all use some version of a cancellation axiom. As we provide representation theorems for decision makers who may be subject to framing effects, who may have theories of the world that seem obviously wrong to an outside observer or even who are not logically omniscient (Section 5), it may seem incongruous to require them to satisfy cancellation. However, it's worth noting that in its simplest form cancellation is equivalent to reflexivity and transitivity which are obviously necessary for a representation theorem. For our representation theorems on mixture spaces we need an extended cancellation axiom but it is equivalent to reflexivity, transitivity and rational mixture independence. A decision maker who does not satisfy cancellation simply has preferences that do not have a SEU representation. One advantage of our framework is that this observation (the necessity part of our representation theorems in Section 4) is not conditional on a particular state space, outcome space and act interpretation of the problem. Instead, it is inherent in the decision maker's preferences over the objects of choice. 
}

We use two strengthenings of cancellation in our representation
theorems for $\mathcal{A}$ and $\mathcal{A^+}$, respectively.  The
first, statewise cancellation, simply applies cancellation to a
setting where the objects of choice are functions; in that case, we
apply cancellation to each argument of the function.  
We first state the statewise cancellation to Savage acts, which we
view as functions from a finite set $S$ of states to a finite set $O$
of outcomes.
Let $C$ denote a set of Savage acts and suppose that $\succeq$ is a
preference relation on~$C$. 

\begin{definition}[Statewise Cancellation] The preference relation
$\succeq$ on a set $C$ of Savage acts satisfies \emph{statewise
cancellation} iff for all pairs of sequences $\<a_1,\ldots,a_n\>$ and $\<b_1,\ldots,b_n\>$ of elements of $C$, if $\{\{a_1(s), \ldots, a_n(s)\}\}=\{\{b_1(s), \ldots, b_n(s)\}\}$ for all $s \in S$, and $a_i \succeq b_i$ for $i\leq n-1$, then $b_n \succeq a_n$.
\end{definition}

As written, statewise cancellation is really an infinite family of
postulates, one for each $n$.  As we now show, we actually only need
cancellation to hold for finitely many values of $n$.  Let $SC_n$ be
the instance of cancellation that involves pairs of sequences of
length $n$.

\begin{prop}\label{prop:cancellationbound} If 
  $\succeq$ is a preference relation on a finite set $C$, then there
  exists $N$ such that $\succeq$ satisfies $SC_n$, $n = 
  1, 2, 3, \ldots$ iff $\succeq$ satisfies
  $SC_n$ for all $n \le N$.
\end{prop}

\noindent The argument used to prove
Proposition~\ref{prop:cancellationbound} applies with essentially no
change to all later variants of cancellation that we consider.  We believe
it should be possible to obtain a bound on the number of instances of
the cancellation postulate needed in terms of the cardinality of the
set $C$ of Savage acts, although we have not proved this.

\Csa\ is a powerful assumption because
equality of the multisets is required only `pointwise'.  Any pair of
sequences that satisfy the conditions of  \ca\ also satisfies
the conditions of \csa, but the converse is not true.  For instance, suppose that $S = \{s_1,s_2\}$, and we use $(o_1,o_2)$ to refer to an act with outcome $o_i$ in state $i$, $i=1,2$.  Consider the two sequences of acts $\langle (o_1,o_1), (o_2,o_2)\rangle$
and $\langle (o_1,o_2), (o_2,o_1)\rangle$.  These two sequences satisfy the conditions of \csa, but not that of \ca.

In addition to the conditions in Proposition~\ref{prop:cancel},
statewise cancellation directly implies \emph{event independence}, a
condition at the heart of SEU representation theorems (and which can be used to derive the Sure Thing Principle). If $T \subseteq S$, let $a_Tb$ be the Savage act that agrees with $a$ on $T$ and with $b$ on $S-T$; that is $a_T b(s) = a(s)$ if $s \in T$ and $a_T b(s) = b(s) $ if $s \notin T$. We say that $\succeq$ satisfies \emph{event independence} iff for all acts $a$, $b$, $c$, and $c'$ and subsets $T$ of the state space $S$, if $a_T c \succeq b_T c$, then $a_T c' \succeq b_T c'$.

\begin{prop}\label{prop:savagecancel}
If $\succeq$ satisfies statewise cancellation, then $\succeq$ satisfies event independence.
\end{prop}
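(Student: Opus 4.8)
The plan is to derive event independence from a single application of statewise cancellation with $n=2$, choosing the two sequences so that the hypothesis $a_Tc \succeq b_Tc$ supplies the one required premise and the desired conclusion $a_Tc' \succeq b_Tc'$ is exactly the output of the axiom. Since statewise cancellation concludes $b_n \succeq a_n$, I would arrange for the comparison I want to establish to sit in the last position, with the roles of $a$ and $b$ interchanged, so that the direction comes out right.

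Concretely, assuming $a_Tc \succeq b_Tc$, I would apply statewise cancellation to the sequences $\langle a_Tc,\, b_Tc'\rangle$ and $\langle b_Tc,\, a_Tc'\rangle$. For $n=2$ the only premise needed is $a_1 \succeq b_1$, which here reads $a_Tc \succeq b_Tc$ and is precisely the hypothesis; the conclusion $b_2 \succeq a_2$ reads $a_Tc' \succeq b_Tc'$, which is exactly what event independence asserts. So everything reduces to checking the pointwise multiset condition $\{\{a_1(s), a_2(s)\}\} = \{\{b_1(s), b_2(s)\}\}$ for every $s \in S$.

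I would verify this by splitting on whether $s \in T$. For $s \in T$, both $a_Tc$ and $a_Tc'$ return $a(s)$ while both $b_Tc$ and $b_Tc'$ return $b(s)$, so the left multiset is $\{\{a(s), b(s)\}\}$ and the right is $\{\{b(s), a(s)\}\}$, and these coincide. For $s \notin T$, all four acts ignore $a$ and $b$: $a_1(s) = b_1(s) = c(s)$ and $a_2(s) = b_2(s) = c'(s)$, so both multisets equal $\{\{c(s), c'(s)\}\}$. Hence the hypotheses of statewise cancellation are met and its conclusion yields event independence. (I assume throughout that $C$ contains the composite acts $a_Tc$, $b_Tc$, $a_Tc'$, $b_Tc'$, which is already implicit in the very statement of event independence.)

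The content here is the choice of sequences, not any calculation: swapping $c$ for $c'$ across the two sequences is what keeps the multisets equal on $S-T$, while on $T$ the $c$- and $c'$-values never appear and the $a,b$-values agree as multisets no matter which off-$T$ tail is attached. There is essentially no hard step; the only thing to get right is placing the target comparison in position $n$ with $a$ and $b$ interchanged, so that the axiom's conclusion $b_n \succeq a_n$ delivers the inequality in the intended direction.
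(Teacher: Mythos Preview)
Your proof is correct and is essentially identical to the paper's own argument: the paper also applies statewise cancellation with $n=2$ to the sequences $\langle a_Tc,\, b_Tc'\rangle$ and $\langle b_Tc,\, a_Tc'\rangle$, verifies the multiset condition by the same case split on $s\in T$ versus $s\notin T$, and reads off $a_Tc'\succeq b_Tc'$ from the conclusion $b_2\succeq a_2$.
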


\noindent Proposition~\ref{prop:cancel} provides a
characterization of cancellation for choices in terms of familiar
properties of preferences. 
%
We do not have a
similarly
simple characterization of statewise cancellation. In particular,
the following example shows that
it is not equivalent to the combination of reflexivity and transitivity
of $\succeq$ and event independence.

\xam\label{xam:charcounterexample} Suppose that $S =
\{s_1,s_2\}$,
$O = \{o_1,o_2,o_3\}$.
There are nine possible acts.  Suppose that $\succeq$ is the smallest
reflexive, transitive relation such that
\begin{gather*}
(o_1,o_1) \succ (o_1,o_2) \succ (o_2,o_1) \succ (o_2,o_2) \succ (o_3,o_1)
\succ\\ (o_1,o_3) \succ (o_2,o_3) \succ (o_3,o_2) \succ (o_3,o_3),
\end{gather*}
using the representation of acts described above.  To see that $\succeq$ satisfies event independence, note that
\begin{itemize}
\item $(x,o_1) \succeq (x,o_2) \succeq (x,o_3)$ for $x\in\{o_1,o_2,o_3\}$;
\item $(o_1,y) \succeq (o_2,y) \succeq (o_3,y)$ for $y\in\{o_1,o_2,o_3\}$.
\end{itemize}
However, statewise cancellation does not hold.  Consider the sequences
\begin{equation*}
\<(o_1,o_2),(o_2,o_3),(o_3,o_1)\>\mbox{ and }
\<(o_2,o_1),(o_3,o_2),(o_1,o_3)\>.
\end{equation*}
This pair of sequences clearly satisfies the hypothesis of statewise cancellation, that $(o_1,o_2) \succeq (o_2,o_1)$ and $(o_2,o_3)\succeq (o_3,o_2)$, but $(o_1,o_3){\not{\succeq}} (o_3,o_1)$.
\exam

For our representation theorems for complete orders, statewise
cancellation suffices.  However, for partial orders, we need a version of cancellation that is equivalent to statewise cancellation in
the presence of \Aone, but is in general stronger.

\begin{definition}[Extended Statewise Cancellation] The preference relation $\succeq$ on a set $C$ of Savage acts satisfies \ecsa\
if and only if for all pairs of sequences $\<a_1,\ldots,a_n\>$ and
$\<b_1,\ldots,b_n\>$ of elements of $C$ such that\\ $\{\{a_1(s), \ldots, a_n(s)\}\}=\{\{b_1(s),\ldots,b_n(s)\}\}$ for all $s \in S$,
if there exists some $k<n$ such that $a_i \succeq b_i$
for $i\leq k$, $a_{k+1}=\cdots=a_n$, and $b_{k+1}=\cdots=
b_n$, then $b_n \succeq a_n$.
\end{definition}

With extended statewise cancellation, we require that the last $n-k$
elements in each of the sequences are equal.  The fact that $b_n
\succeq a_n$ thus means that $b_j \succeq a_j$ for $j = k+1, \ldots, n$.
Note that statewise cancellation is the special case of extended
statewise calculation where $k = n-1$.  

\begin{prop}\label{prop:cancelequivalent}
In the presence of \Aone, extended statewise cancellation and statewise cancellation are equivalent.
\end{prop}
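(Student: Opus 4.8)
The plan is to prove the two implications separately; completeness (\Aone) is needed only for one of them.

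First I would show that \ecsa\ implies \csa, with no appeal to \Aone. Given sequences $\<a_1,\ldots,a_n\>$ and $\<b_1,\ldots,b_n\>$ satisfying the pointwise multiset condition with $a_i\succeq b_i$ for all $i\leq n-1$, I would invoke \ecsa\ with the choice $k=n-1$. For this $k$, the blocks $a_{k+1}=\cdots=a_n$ and $b_{k+1}=\cdots=b_n$ each consist of the single term $a_n$ (resp.\ $b_n$), so the constant-tail requirement is vacuously satisfied, and the hypothesis ``$a_i\succeq b_i$ for $i\leq k$'' is exactly the assumption ``$a_i\succeq b_i$ for $i\leq n-1$'' that we are given. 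Hence \ecsa\ yields $b_n\succeq a_n$, which is the conclusion of \csa. This confirms that \ecsa\ is at least as strong as \csa\ unconditionally.

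For the converse I would assume \Aone\ and \csa\ and verify \ecsa. Suppose the hypotheses of \ecsa\ hold, and abbreviate $a^*:=a_{k+1}=\cdots=a_n$ and $b^*:=b_{k+1}=\cdots=b_n$. By completeness, either $b^*\succeq a^*$ or $a^*\succeq b^*$. In the first case $b_n=b^*\succeq a^*=a_n$ immediately, and we are done. In the second case, $a^*\succeq b^*$ combined with the assumed $a_i\succeq b_i$ for $i\leq k$ gives $a_i\succeq b_i$ for \emph{every} $i\leq n-1$, since each tail index $k<i\leq n-1$ has $a_i=a^*\succeq b^*=b_i$. The pointwise multiset condition is assumed, so \csa\ now applies and delivers $b_n\succeq a_n$, as required.

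The argument is short, and the one real observation is that the constant-tail structure built into \ecsa\ is precisely what lets completeness supply the preference comparisons on the indices between $k$ and $n$ that \csa\ would otherwise demand as hypotheses; this pinpoints why \Aone\ is exactly the extra ingredient needed. I do not anticipate a genuine obstacle. The only point to handle carefully is the degenerate case $k=n-1$ in the first implication, where the tail blocks degenerate to singletons and the constant-tail condition must be read as vacuously satisfied.
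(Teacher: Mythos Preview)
Your proof is correct and follows essentially the same approach as the paper. The paper's proof is terser: it treats the implication from \ecsa\ to \csa\ as implicit (exactly your $k=n-1$ observation) and writes only the converse, phrasing the case split as ``if $b_n\succeq a_n$ we are done; if not, by \Aone, $a_n\succeq b_n$,'' after which statewise cancellation applies---identical in substance to your argument.
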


The extension of cancellation needed for $\A^+$ is based on the same
idea as extended statewise cancellation, but probabilities of objects
rather than the incidences of objects are added up.  Let $C$ denote a collection of elements from a finite-dimensional mixture space.
Thus, $C$ can be viewed as a subspace of $\Nspace{n}$ for some $n$, and
each component of any $c\in C$ is a probability.  We can then formally
`add' elements of $C$, adding elements of $\Nspace{n}$ pointwise.  (Note
that the result of adding two elements in $C$ is no longer an element of
$C$, and in fact is not even a mixture.)

\begin{definition}[Extended Mixture Cancellation]
The preference relation $\succeq$ on $C$ satisfies \emph{extended
mixture cancellation} iff for all pairs of sequences $\<a_1,
\ldots,a_n\>$ and $\<b_1,\ldots, b_n\>$ of elements of $C$,
such that $\sum_{i=1}^n a_i = \sum_{i=1}^n b_i$,
if there exists some $k<n$ such that $a_i\succeq b_i$
for $i\leq k$, $a_{k+1}=\cdots=a_n$, and $b_{k+1}=\cdots=
b_n$, then $b_n \succeq a_n$.
\end{definition}
\noindent We can extend Proposition~\ref{prop:cancel} to get a characterization theorem for preferences on mixture spaces by using an independence postulate.  The preference relation $\succeq$ satisfies
\emph{mixture independence} if for all $a$, $b$, and $c$ in $C$, and all $r\in(0,1]$, $a \succeq b$ iff $ra+(1-r)c\succeq r b+(1-r)c$.
The preference relation $\succeq$ satisfies \emph{rational mixture
  independence} if it satisfies mixture independence for all rational
$r \in (0,1]$.

\thm\label{thm:cancellationchar1}
A preference relation $\succeq$ on a finite-dimensional mixture space
$C$ satisfies \ecm\ iff $\succeq$ is reflexive, transitive, and satisfies rational mixture independence.
\ethm

\subsection{The cancellation postulate for
  choices}\label{sec:cancallationchoices} 
We use cancellation to get a representation theorem for preference
relations on choices. However, the definition of the cancellation
postulates for Savage acts and mixtures
rely on (Savage) states. We now develop an analogue of this postulate
for
our syntactic notion of choice.
\begin{definition} Given a set $T_0 = \{t_1, \ldots, t_n\}$ of primitive tests, an {\em atom over $T_0$\/} is a test of the form $t_1'\land\ldots \land t_n'$, where $t_i'$ is either $t_i$ or $\neg t_i$.\end{definition}

An atom is a possible complete description of the truth value of tests according to the DM.  If there are $n$ primitive tests in $T_0$, there are $2^n$ atoms.  Let $\Atoms(T_0)$ denote the set of atoms over $T_0$. It is easy to see that, for all tests $t \in T$ and atoms $\delta \in \Atoms(T_0)$, and for all state spaces $S$ and standard test interpretations $\pi_S$, either $\pi_S(\delta)\subseteq \pi_S(t)$ or
$\pi_S(\delta)\cap\pi_S(t)=\emptyset$.  (The formal proof is by
induction on the structure of $t$.) We write $\delta \rimp t$ if the
former is the case. We remark for future reference that a standard test
interpretation is determined by its behavior on atoms. (It is, of
course, also determined completely by its behavior on primitive
tests).

\begin{definition}An atom $\delta$ (resp., test $t$) is
\emph{consistent with
    a theory $\T$} if there exists a state space $S$ and a test
interpretation $\pi_S\in\Pi_{\T}$ such that $\pi_S(\delta) \ne
\emptyset$ (resp., $\pi_S(t) \ne \emptyset$). Let $\AtomsAX(T_0)$
denote the set of atoms over $T_0$ consistent with $\T$. 
\end{definition}

\noindent Intuitively, an atom $\delta$ is consistent with $\T$ if there
is some state in some state space where $\delta$ might hold, and
similarly for a test $t$.
The problem of checking whether an atom or a test is consistent with a
theory $\T$ is NP-complete (it is basically the problem of testing
whether a formula is satisfiable), that is, it is hard for the DM to do.\footnote{Satisfiability may or may not be relevant for the DM, depending on the task at hand.}

A choice in $\A$ can be identified with a  function from atoms to
primitive choices in an obvious way.  For example, if $a_1$, $a_2$,
and $a_3$ are primitive choices and $T_0 = \{t_1, t_2\}$, then the
choice $a=\ift{t_1}{a_1}{(\ift{t_2}{a_2}{a_3})}$ can be identified
with the function $f_a$ such that
\begin{itemize}
\item $f_a(t_1 \land t_2) = f_a(t_1 \land \neg t_2) = a_1$;
\item $f_a(\neg t_1 \land t_2) = a_2$; and
\item $f_a(\neg t_1 \land \neg t_2) = a_3$.
\end{itemize}
Formally, we define $f_a$ by induction on the structure of choices.  If $a \in \A_0$, then $f_a$ is the constant function $a$, and
\begin{equation*}
f_{\mbox{\textbf{\scriptsize if $t$ then $a$ else $b$}}}(\delta)=
\begin{cases}
f_a(\delta) &\mbox{if $\delta \rimp t$}\\
f_b(\delta) &\mbox{otherwise.}
\end{cases}
\end{equation*}

We consider a family of cancellation postulates, relativized to the
axiom system $\T$. The cancellation postulate for $\T$
(given the language $\A_0$) is simply statewise cancellation for Savage acts, with atoms consistent with $\T$ playing the role of states.

\begin{atwo}If $\<a_1,\ldots, a_n\>$ and
$\<b_1, \ldots, b_n\>$ are two sequences of choices in $\A_{\A_0,T_0}$
such that for each atom $\delta \in \AtomsAX(T_0)$,
 $\{\{f_{a_1}(\delta), \ldots,
f_{a_n}(\delta)\}\}=\{\{f_{b_1}(\delta),\ldots, f_{b_n}(\delta)\}\}$,
and there exists some $k<n$ such that
$a_i \succeq b_i$  for all $i\leq k$, $a_{k+1}=\cdots=a_n$,
and $b_{k+1}=\cdots=b_n$, then $b_n \succeq a_n$.\end{atwo}
\noindent We drop the prime and refer to $\AthreeAX$ when $k=n-1$.

Axiom $\AthreeAX$ implies the simple cancellation of the last
section, and so the conclusions of Proposition~1 hold: $\succeq$ on
$\mathcal{A}$ will be transitive and reflexive.  $\AthreeAX$\ has another consequence: a DM must be indifferent between $\T$-equivalent choices.

\begin{prop}\label{prop:equivalence} Suppose that $\succeq$ satisfies
$\AthreeAX$.  Then $a \equiv_{\T} b$ implies $a \sim b$.
\end{prop}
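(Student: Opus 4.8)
The plan is to prove this by showing that the two choices $a$ and $b$, being $\AX$-equivalent, generate the same multiset of primitive choices at every consistent atom, and then to apply $\AthreeAX$ twice---once in each direction---to squeeze out $a \sim b$. The key bridge is to connect the abstract notion of $\AX$-equivalence (agreement of $\rho_{SO}(a)$ and $\rho_{SO}(b)$ under all standard interpretations respecting $\AX$) with the combinatorial condition on the functions $f_a$ and $f_b$ that appears in the hypothesis of $\AthreeAX$.

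First I would establish the crucial link: if $a \equiv_{\AX} b$, then $f_a(\delta) = f_b(\delta)$ for every atom $\delta \in \AtomsAX(T_0)$. To see this, fix a consistent atom $\delta$; by definition of $\AtomsAX(T_0)$ there is a state space $S$ and a standard interpretation $\pi_S \in \Pi_{\AX}$ with $\pi_S(\delta) \neq \emptyset$, so pick a state $s \in \pi_S(\delta)$. Since a standard interpretation is determined by its behavior on atoms and $\delta \rimp t$ or $\pi_S(\delta) \cap \pi_S(t) = \emptyset$ for every test $t$, an easy induction on the structure of the choice shows that $\rho_{SO}(a)(s)$ depends only on $f_a(\delta)$ (and the fixed interpretation $\rho_{SO}^0$ of primitive choices at $s$). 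If $f_a(\delta) \neq f_b(\delta)$, I can choose $O$ and $\rho_{SO}^0$ so that these two distinct primitive choices receive different outcomes at $s$, yielding $\rho_{SO}(a)(s) \neq \rho_{SO}(b)(s)$ and contradicting $a \equiv_{\AX} b$. Hence $f_a(\delta) = f_b(\delta)$ on all consistent atoms.

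Given this pointwise equality, the multiset condition in the hypothesis of $\AthreeAX$ is immediate for the length-two sequences $\langle a, a \rangle$ and $\langle b, a \rangle$: at each consistent atom $\delta$ we have $\{\{f_a(\delta), f_a(\delta)\}\} = \{\{f_b(\delta), f_a(\delta)\}\}$ precisely because $f_a(\delta) = f_b(\delta)$. Taking $n = 2$, $k = 1$, with $a_1 = a$, $b_1 = b$, $a_2 = a_n = a$, $b_2 = b_n = a$, the premise $a_1 \succeq b_1$ reads $a \succeq b$; but reflexivity (which holds since $\AthreeAX$ implies cancellation and hence, by Proposition~\ref{prop:cancel}, reflexivity) gives $a \succeq b$ automatically via $a \succeq a$ applied to the analogous sequence---more directly, I apply $\AthreeAX$ to $\langle a, a\rangle$ and $\langle b, a\rangle$ with the trivially-satisfied premise to conclude $a \succeq b$, and symmetrically to $\langle b, b \rangle$ and $\langle a, b\rangle$ to conclude $b \succeq a$. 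Combining the two gives $a \sim b$.

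The main obstacle is the first step: rigorously justifying that $\AX$-equivalence forces $f_a = f_b$ on consistent atoms. The subtlety is that one must construct a witnessing interpretation that both respects $\AX$ and separates the two candidate primitive choices, which requires care in choosing the outcome space and primitive-choice interpretation freely while keeping $\pi_S$ standard and $\AX$-respecting. Once that combinatorial translation is in hand, the application of $\AthreeAX$ is routine.
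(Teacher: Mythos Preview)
Your overall strategy matches the paper's: first show that $a \equiv_{\AX} b$ forces $f_a(\delta) = f_b(\delta)$ on every consistent atom, then invoke cancellation twice. Two points deserve comment.

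For the first step, the paper avoids your atom-by-atom contradiction by exhibiting a single canonical interpretation: take $S = \AtomsAX(T_0)$, $O = \A_0$, $\pi_S(t) = \{\delta : \delta \Rightarrow t\}$, and $\rho_{SO}^0(c)$ the constant function $c$ for each primitive $c$. One then checks directly that $\rho_{SO}(c) = f_c$ for every choice $c$, so $\AX$-equivalence immediately yields $f_a = f_b$. Your argument is also correct, but this canonical choice is shorter and removes any concern about simultaneously choosing $O$ and $\rho_{SO}^0$ freely while keeping $\pi_S$ standard and $\AX$-respecting.

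For the second step, your application of $\AthreeAX$ is misordered. With the sequences $\langle a, a \rangle$ and $\langle b, a \rangle$ (so $a_1 = a$, $a_2 = a$, $b_1 = b$, $b_2 = a$), the premise $a_1 \succeq b_1$ reads $a \succeq b$, which is \emph{not} trivially satisfied---it is what you are trying to establish---and the conclusion would only be $b_2 \succeq a_2$, i.e.\ $a \succeq a$. The paper's fix is to take $n = 1$ with $a_1 = a$, $b_1 = b$: the multiset condition $\{\{f_a(\delta)\}\} = \{\{f_b(\delta)\}\}$ holds since $f_a = f_b$, the preference hypothesis ``$a_i \succeq b_i$ for $i \le n-1 = 0$'' is vacuous, and the conclusion is $b \succeq a$; reversing roles gives $a \succeq b$. (If you prefer $n = 2$, pair $\langle a, a \rangle$ with $\langle a, b \rangle$ so that the premise is $a \succeq a$ by reflexivity and the conclusion is $b \succeq a$.)
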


Proposition \ref{prop:equivalence} implies that the behavior
of $a$ and $b$ on atoms not in $\AtomsAX(T_0)$ is irrelevant; that is,
they are null in Savage's sense.  We define this formally:

\begin{definition}A test $t$ is \emph{null} if, for all acts $a$, $b$
and $c$, $\ift{t}{a}{c}\sim\mbox{\textbf{\textit{ if $t$}}}$
\textbf{then $b$ else $c$}.
\end{definition}

\noindent An atom (or test) inconsistent with the theory $\T$ must be null 
(so we can test if the DM holds the theory $\T$ by checking whether, for all
$\neg t\in \T$, $t$ is null), but 
consistent tests may be null as well.
If we add as an axiom that no test consistent with the theory can be
null, then (at least in principle) a DM's theory is testable.  

The notion of a null test is suggestive of, more generally, test-contingent preferences.
\begin{definition} If $t$ is a test in $T$, then for any acts $a$ and
$b$, $a\succeq_tb$ iff for some $c$, $\ift{t}{a}{c}\succeq
\ift{t}{b}{c}$.\end{definition}

\noindent Proposition~\ref{prop:savagecancel} shows that statewise
cancellation implies that the choice of $c$ is irrelevant, and so
test-contingent preferences are well-defined.
Before giving our representation theorems, we compare our framework to
that of \citet{Lip99}.  Lipman starts with a collection of
preference relations $\preceq_I$ indexed by what he calls
\emph{information sets} $I$.  For Lipman, an information set is a
possible piece of information that a DM might receive.  An
information set can be identified with a test in our framework (except
that information sets do not have the syntactic structure of tests).
Lipman takes a possible world to be a maximal piece of information;
Lipman's possible worlds are essentially our atoms.  Lipman takes the
DM's subjective state space to be the set of possible worlds.  As we show
(Theorem~\ref{thm:eurepa}), if we assume $\Aone$ (as Lipman does),
then we can also take the DM's subjective state space in our
representation to be the set of atoms consistent with the DM's theory $\T$.
Otherwise, without $\Aone$, if we want a representation with a single
utility function, we cannot in general take the state space to be just
the set of atoms consistent with the DM's theory $\T$.

To get a representation theorem for $\A^+$, we use a mixture
cancellation postulate, again replacing states by atoms.  The idea now
is that we can identify each choice $a$ with a function $f_a$
mapping atoms consistent with $\T$ into distributions over primitive
choices.  For example, if $t$ is the only test in $T_0$ and $\T=
\emptyset$, then the choice
$a = \frac{1}{2} a_1 +  \frac{1}{2}(\ift{t}{a_2}{a_3})$
can be identified with the function $f_a$ such that
\begin{itemize}
\item $f_a(t)(a_1) = 1/2$; $f_a(t)(a_2) = 1/2$
\item $f_a(\neg t)(a_1) = 1/2$; $f_a(\neg t)(a_3) = 1/2$.
\end{itemize}
Formally, we just extend the definition of $f_a$ given in the previous
section by defining
\begin{equation*}
f_{ra_1 + (1-r)a_2}(\delta) = rf_{a_1}(\delta) + (1-r)f_{a_2}(\delta).
\end{equation*}

\noindent Consider the following cancellation postulate:

\noindent $\AthreeprpAX$. If $\<a_1,
\ldots, a_n\>$ and $\<b_1, \ldots, b_n\>$ are two sequences of acts in
$\A_{\A_0,T_0}^+$ such that
\begin{equation*}
f_{a_1}(\delta) + \cdots +  f_{a_n}(\delta) = f_{b_1}(\delta) + \cdots +
f_{b_n}(\delta)
\end{equation*}
for all atoms $\delta$ consistent with $\T$, and there exists $k < n$ such that $a_i\succeq b_i$ for $i\le k$, $a_{k+1}=\ldots=a_n$, and
$b_{k+1}=\ldots=b_n$, then $b_n \succeq a_n$.

\noindent Again, $\AthreeprpAX$ can be viewed as a generalization of
$\AthreeprAX$.\footnote{$\AthreeprpAX$ is analogous to extended
  \emph{statewise} mixture cancellation.  It may seem strange that we
  need the cancellation to be statewise, since extended statewise
  mixture cancellation is equivalent to extended mixture cancellation.
  This suggests that we might be able to use a simpler non-statewise
  axiom.  However, the obvious non-statewise version of the axiom and
    its statewise version are  equivalent only if we assume that all
  that matters about a choice is how it acts as a function from atoms
  to primitive choices:  If $f_a = f_b$, then $a \sim b$. Rather than
    add this axiom, we use $\AthreeprpAX$.}

We use $\AthreeprpAX$ or $\AthreeprAX$ in our representations
theorems.  Since we consider DMs who may have some
obvious logical blindspots (and in Section~\ref{sec:nonstandard} even
consider DMs who may not satisfy the basic axioms of propositional
logic), it may seem unreasonable to expect an agent to satisfy these
postulates.  We do not think it is always so unreasonable.  For example,
as we show in the proof of Theorem~\ref{thm:rep4}, if we assume
completeness, in the  Anscombe-Aumann setting, cancellation is
equivalent to a number of other standard assumptions in the
literature.  Moreover, the preferences of a DM who is maximizing
expected utility with respect to her subjective representation of the
world will satisfy cancellation, even
if the DM is not aware of the property and is not actively trying to
ensure that her preferences are consistent with it.  In any case, as we show,
cancellation (or properties equivalent to
it) is needed to get an SEU-like representation.  So agents that do
not satisfy it will simply not act as utility maximizers
(even with respect to their subjective representation of the world).

\section{Representation Theorems}
Having discussed our framework, we now turn to the representation
theorems.  Our goal is to be as constructive as possible.  In this
spirit we want to require that preferences exist not for all possible
acts that can be described in a given language, but only for those in a
given subset, henceforth designated~$C$.  We are agnostic about the
source of~$C$.  It could be the set of
choices in one particular decision problem, or it could be the set of
choices that form a universe for a
collection of
decision problems.
One cost of our finite framework is that we will have no uniqueness
results.  In our framework the preference representation can fail to be
unique because of our freedom to choose different state and outcome
spaces, but even given these choices, the lack of richness of $C$ may
allow multiple representations of the same (partial) order.

\subsection{A Representation Theorem for $\A$}
By a representation for a preference relation on $\A$ we mean the following:
\begin{definition}A preference relation on a set $C\subseteq
\A_{\A_0,T_0}$ has a \emph{constructive $\T$-consistent SEU representation} iff
there is a finite set of states $S$, a finite set $O$ of outcomes, a set $\U$ of utility functions $u:O\to\R$, a set $\mathcal{P}$ of probability distributions on $S$, a subset $\V\subseteq \U\times\P$,
a test interpretation $\pi_S$ consistent with $\T$, and a choice
interpretation $\rho_{SO}$ such that $a\succeq b$ iff
\begin{equation*}
\sum_{s\in S}u\bigl(\rho_{SO}(a)(s)\bigr)p(s)\geq
\sum_{s\in S}u\bigl(\rho_{SO}(b)(s)\bigr)p(s) \
\mbox{for all}\ (u,p) \in \V.
\end{equation*}
\end{definition}
We are about to claim that $\succeq$ satisfies $\AthreeprAX$ if and
only if it has a constructive $\T$-consistent representation.  In the
representation, we have a great deal of flexibility as to the choice
of the state space $S$ and the outcome space $O$.  One might have
thought that the space of atoms, $\AtomsAX(T_0)$, would be a rich
enough state space on which to build representations.  This is not
true when preferences are incomplete  
and we ask for a representation with a single utility function.  A
rich enough state space is needed to account for the incompleteness. 
\begin{definition} Given a partial order $\succeq$ on a set $C$ of
choices, let $\EXAX(\succeq)$ denote all the extensions of $\succeq$
to a total order on $C$ satisfying $\AthreeAX$.
\end{definition}
Our proof shows that we can take $S$ to be  $\AtomsAX(T_0) \times
\EXAX(\succeq)$.  Thus, in particular, if $\succeq$ is complete, then we
can take the state space to be $\AtomsAX(T_0)$.  We later give examples
that show that if $\succeq$ is not complete then, in general, 
if we ask for a representation with a single utility function, then the state
space must have cardinality larger than that of $\AtomsAX(T_0)$.
While for some applications there may be a more natural state space,
our choice of state space shows that we can always view the DM's
uncertainty as stemming from two sources: the truth values of various
tests (which are determined by the atoms) and the relative order of two
choices not determined by $\succeq$ (which is given by the extension
${}\succeq'\,\in\EXAX(\succeq)$ of $\succeq$).
The idea of a DM being uncertain about her preferences is prevalent
elsewhere in decision theory; for instance, in the menu choice literature
\cite{kreps79}.  This uncertainty can be motivated in any number of
ways, including both incomplete information and resource-bounded
reasoning.

\begin{theorem}\label{thm:eurepa} 
A preference relation $\succeq$ on a set $C\subseteq \A_{\A_0,T_0}$ has a constructive $\T$-consistent SEU representation iff $\succeq$ satisfies $\AthreeprAX$.  Moreover, in the representation, either $\P$ or $\U$ can be taken to be a singleton and, if $\U$ is a singleton $\{u\}$, the state space can be taken to be $\AtomsAX(T_0) \times \EXAX(\succeq)$.  If, in addition, $\succeq$ satisfies \Aone, then $\V$ can be taken to be a singleton (i.e., both $\P$ and $\U$ can be taken to be singletons).
\end{theorem}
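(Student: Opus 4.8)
The plan is to prove necessity by a direct averaging computation and sufficiency by a separating-hyperplane argument, using extended statewise cancellation ($\AthreeprAX$) to identify the partial order with the intersection of finitely many additive — hence SEU — extensions. Throughout I identify each choice $a$ with its atom-indexed function $f_a\colon \AtomsAX(T_0)\to\A_0$, and recall the earlier fact that $\rhoso(a)(s)=\rhoso(f_a(\delta))(s)$ whenever $s\in\pi_S(\delta)$.

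For necessity, suppose $\succeq$ has a constructive AX-consistent SEU representation and fix sequences satisfying the hypotheses of $\AthreeprAX$. For $(u,p)\in\V$ write $U_{u,p}(a)=\sum_{s}u(\rhoso(a)(s))p(s)$. Grouping states by the atom containing them gives $U_{u,p}(a)=\sum_\delta g(\delta,f_a(\delta))$, where $g(\delta,o)=\sum_{s\in\pi_S(\delta)}u(\rhoso(o)(s))p(s)$; hence the value depends on $a$ only through the multiset $\{\{f_{a_i}(\delta)\}\}$ at each $\delta$, so the atom-wise multiset equality forces $\sum_i U_{u,p}(a_i)=\sum_i U_{u,p}(b_i)$. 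Since $a_i\succeq b_i$ gives $U_{u,p}(a_i)\ge U_{u,p}(b_i)$ for $i\le k$ and the constant tails give $U_{u,p}(a_i)=U_{u,p}(a_n)$, $U_{u,p}(b_i)=U_{u,p}(b_n)$ for $i>k$, we obtain $(n-k)(U_{u,p}(a_n)-U_{u,p}(b_n))=\sum_{i\le k}(U_{u,p}(b_i)-U_{u,p}(a_i))\le 0$, so $U_{u,p}(b_n)\ge U_{u,p}(a_n)$. As this holds for every $(u,p)\in\V$, $b_n\succeq a_n$.

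For sufficiency I pass to the count vectors $\chi_a\in\R^{\AtomsAX(T_0)\times\A_0}$ defined by $\chi_a(\delta,o)=1$ iff $f_a(\delta)=o$, so that atom-wise multiset equality becomes $\sum_i\chi_{a_i}=\sum_i\chi_{b_i}$. Let $D=\{\chi_c-\chi_d:c\succeq d\}$; since each $\chi_c$ is determined by $f_c$, there are only finitely many distinct difference vectors, so $D$ is finite and $\mathrm{cone}(D)$ is a closed rational polyhedral cone. The key lemma — where extended statewise cancellation does the work — is: if $\chi_a-\chi_b\in\mathrm{cone}(D)$, then $a\succeq b$. Writing $N(\chi_a-\chi_b)=\sum_j m_j(\chi_{c_j}-\chi_{d_j})$ with $N,m_j$ positive integers and $c_j\succeq d_j$, I rearrange to $\sum_j m_j\chi_{c_j}+N\chi_b=\sum_j m_j\chi_{d_j}+N\chi_a$ and read this as the atom-wise multiset identity between the sequence of $m_j$ copies of each $c_j$ followed by $N$ copies of $b$ (the $a$-side) and the sequence of $m_j$ copies of each $d_j$ followed by $N$ copies of $a$ (the $b$-side). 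The head pairs satisfy $c_j\succeq d_j$ and the tails are the constants $b$ and $a$, so $\AthreeprAX$ yields $a\succeq b$. By contraposition, if $a\not\succeq b$ then $\chi_a-\chi_b\notin\mathrm{cone}(D)$, and the separating-hyperplane theorem produces $v$ with $\langle v,d\rangle\ge 0$ for all $d\in D$ but $\langle v,\chi_a-\chi_b\rangle<0$. The relation $c\succeq_v d\iff\langle v,\chi_c-\chi_d\rangle\ge 0$ is then a complete transitive extension of $\succeq$ with an additive representation (so it satisfies $\AthreeAX$ by the necessity argument, hence lies in $\EXAX(\succeq)$ after breaking ties if antisymmetry is required) and $a\not\succeq_v b$. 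Thus $\succeq=\bigcap_{\succeq'\in\EXAX(\succeq)}\succeq'$, the reverse inclusion being immediate.

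Finally I package these extensions into a product-form (SEU) representation, exploiting the freedom in $O$ and $\rhoso$ to upgrade a merely additive representation. For each $\succeq'\in\EXAX(\succeq)$ fix its additive representation $v_{\succeq'}(\delta,o)$. Set $S=\AtomsAX(T_0)\times\EXAX(\succeq)$, outcome space $O=\AtomsAX(T_0)\times\A_0\times\EXAX(\succeq)$, standard test interpretation $\pi_S(t)=\{(\delta,\succeq'):\delta\rimp t\}$ (which respects AX, since every $\delta\in\AtomsAX(T_0)$ satisfies $\delta\rimp t$ for $t\in\AX$, and is AX-consistent), and $\rhoso(a)(\delta,\succeq')=(\delta,f_a(\delta),\succeq')$. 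With the single utility $u(\delta,o,\succeq')=v_{\succeq'}(\delta,o)$ and, for each $\succeq'$, the probability $p_{\succeq'}$ uniform on the slice $\{(\delta,\succeq'):\delta\}$, the quantity $U_{u,p_{\succeq'}}(a)$ is proportional to $\sum_\delta v_{\succeq'}(\delta,f_a(\delta))$ and so represents $\succeq'$; taking $\U=\{u\}$, $\P=\{p_{\succeq'}:\succeq'\in\EXAX(\succeq)\}$, and $\V=\U\times\P$ gives $a\succeq b$ iff $a\succeq' b$ for all $\succeq'$ iff $a\succeq b$. The symmetric singleton-$\P$ version uses $p$ uniform on all of $S$ with utilities $u_{\succeq'}$ supported on the $\succeq'$-slice; and when $\succeq$ satisfies \Aone\ we have $\EXAX(\succeq)=\{\succeq\}$, so $\V$ is a singleton and $S=\AtomsAX(T_0)$. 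I expect the main obstacle to be the key lemma together with the closedness/rationality bookkeeping for $\mathrm{cone}(D)$ and the verification that encoding states and extensions into outcomes genuinely converts the additive representation into product (SEU) form.
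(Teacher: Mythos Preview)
Your approach is essentially the paper's: identify choices with integer vectors indexed by $\AtomsAX(T_0)\times\A_0$, prove via a cone argument that $\chi_a-\chi_b\in\mathrm{cone}(D)$ implies $a\succeq b$ (using rationality of the defining data to pass from real to integer coefficients and then invoke $\AthreeprAX$), represent the closed cone as an intersection of half-spaces to obtain the additive extensions, and bootstrap the resulting state-dependent representation to SEU form by enlarging the outcome space to $\AtomsAX(T_0)\times\A_0\times\EXAX(\succeq)$. The paper packages the half-space step slightly differently---it states the cone lemma as $D^+=\CC\cap D$ and then invokes the fact that a closed cone is an intersection of half-spaces, rather than separating each non-preference pair individually---but the content is the same.

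One small slip in the \Aone\ case: your claim that $\EXAX(\succeq)=\{\succeq\}$ is not literally true. The trivial relation making everything indifferent is always a complete extension satisfying $\AthreeAX$, so $\EXAX(\succeq)$ need not be a singleton even when $\succeq$ is complete. Your general construction, applied verbatim, would then produce one pair $(u,p_{\succeq'})$ per element of $\EXAX(\succeq)$, not a singleton $\V$. The fix is immediate: since $\succeq$ is itself complete and lies in $\EXAX(\succeq)$, just use its single additive representation $v_{\succeq}$ on the state space $\AtomsAX(T_0)$ directly (this is what the paper does). Also, the parenthetical about ``breaking ties if antisymmetry is required'' is unnecessary: in this paper ``total order'' means complete preorder (see the paper's identification of $\EXAX$ with half-spaces, which allows ties), so your $\succeq_v$ already lies in $\EXAX(\succeq)$ as is.
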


\noindent Theorem \ref{thm:eurepa} is proved in the appendix.
The proof proceeds by first establishing a state-dependent
representation using $\mathcal{A}_0$ as the outcome space, and then, by changing the outcome space, `bootstrapping' the representation to an EU representation.  This technique shows that, when the state and outcome spaces are part of the representation, there is no difference between the formal requirement for a state-dependent representation and that for a SEU representation.  This does not mean that expected utility comes `for free'; rather, we interpret it to mean that the beliefs/desires formalism that motivates expected utility theory is sensible for the decision problems discussed in this subsection only if the particular outcome space chosen for the representation has some justification external to our theory. We note that if preferences satisfy \Aone, the theorem requires only the cancellation axiom $\AthreeAX$ rather than the stronger $\AthreeprAX$.

There are no uniqueness requirements on $\P$ or $\U$ in
Theorem~\ref{thm:eurepa}.  In part, this is because the state space and outcome space are not
uniquely determined.  But even if \Aone\ holds, so that the state space can be taken to be the set of atoms, the probability and the utility are far from unique, as the following example shows.

\xam\label{xam2}
Take $\A_0 = \{a,b\}$, $T_0 = \{t\}$, and $\T = \emptyset$.  Suppose that $\succeq$ is the reflexive transitive closure of the following string of preferences:
\begin{equation*}
a\succ\ift{t}{a}{b}\succ\ift{t}{b}{a}\succ b.
\end{equation*}
Every choice in $\A$ is equivalent to one of these four, so
\Aone\ holds, and we can take the state space to be $S^*=\{t,\neg
t\}$.  Let $O^* = \{o_1, o_2\}$, and define $\rho_{S^*O^*}^0$ so that 
$\rho_{S^*O^*}^0(a)$ is the constant function $o_1$ and
$\rho_{S^*O^*}^0(b)$ is the constant function $o_2$. Now define $\pi_{S^*}$ in the obvious way, so that $\pi_{S^*}(t)=\{t\}$ and  $\pi_{S^*}(\neg t)=\{\neg t\}$.  We can represent the preference relation
by using any probability measure $p^*$ such that $p^*(t) > 1/2$
and any utility function $u^*$ such that $u^*(o_1) > u^*(o_2)$.
\exam

As Example~\ref{xam2} shows, the problem really is that the set of
actions is not rich enough to determine the probability and utility.  By way of contrast, Savage's postulates ensure that the state space is
infinite and that there are at least two outcomes.  Since the acts are
all functions from states to outcomes, there must be uncountably many
acts in Savage's framework.

The next example shows that without the completeness axiom
\Aone, there may be no representation in which there is only one utility function and the state space is $\AtomsAX(T_0)$.
\xam\label{xam1}
Suppose that $T_0 = \emptyset$, $\T=\emptyset$, and $\A_0$ (and hence
$\A$) consists of the two primitive choices $a$ and $b$, which are
incomparable. In this case, the smallest state space that we can use has
cardinality at least 2. For if $|S|=1$, then there is only one
possible probability measure on $S$, 
so $a$ and $b$ cannot be incomparable.  Since there is only
one atom when there are no primitive propositions,  we cannot take the
state space to be the set of atoms. There is nothing special about
taking $T_0 = \emptyset$ here; similar examples can be constructed
showing that we cannot take the state space to be $\AtomsAX(T_0)$ for
arbitrary choices of $T_0$ if the preference relation is partial.  An easy
argument also shows that there is no representation where $|O|=1$.

This preference relation can be represented with two outcomes and two states.
Let $S=\{s_1,s_2\}$ and $O=\{o_1,o_2\}$.  Define
$\rho^0_{SO}(a)(s_i)=o_i$, and $\rho^0_{SO}(b)(s_i)=o_{2-i}$ for $i=
1,2$.  Let $\U$ contain a single function such that $u(o_1)\ne
u(o_2)$.  Let $\P$ be any set of probability measures including the
measures $p_1$ and $p_2$ such that $p_1(s_1)=1$ and $p_2(s_1)=0$.
Then the expected utility ranking of randomized acts under each $p_i$
contains no nontrivial indifference, and the ranking under $p_2$ is
the reverse of that under $p_1$.  Thus, these choices represent the
preference relation.

The assumption that there is only one utility function is critical
here.  For example, we could also represent this preference relation
using a single state and two
utility functions, $u_1$ and $u_2$, where $u_1(a) > u_1(b)$ and
$u_2(b) > u_2(a)$.  We conjecture that if we allow multiple utility
functions, then there is always a representation where the state space
is $\AtomsAX(T_0)$.
\exam

\subsection{A Representation Theorem for $\A^+$}
The purpose of this subsection is to show that for the language $\A^+$, we can get something much in the spirit of the standard representation theorem for AA acts.  The standard representation theorem has a mixture independence axiom and an Archimedean axiom.  As we have seen, $\AthreeprpAX$ gives us rational mixture independence; it does not suffice for full mixture independence.  To understand what we need, recall that the standard Archimedean axiom for AA acts has the following form:
\begin{arch}If $a\succ b\succ c$ then there exist $r,r'\in (0,1)$
such that $a\succ r a+(1-r)c\succ b \succ r' a+(1-r')c \succ c$.
\end{arch}
While this axiom is both necessary for and implied by the existence of
an SEU representation when $\succeq$ is complete, the following example
describes an incomplete preference relation $\succeq$ with a multi-probability  SEU
representation that fails to satisfy \Arch. 

\xam Suppose that $S = \{s_1,s_2,s_3\}$.
Let $a_1$, $a_2$, and $a_3$ be acts such that $a_i(s_j)$ gives an
outcome of 1 if $i=j$ and 0 otherwise.  Let $\P$ consist of all
probability distributions $p$ on $S$ such that $p(s_1) \ge p(s_2)
\ge p(s_3)$.  Define $\succeq$ by taking $a \succeq b$ iff the
expected outcome of $a$ is at least as large as that of $b$ with respect to all the probability distributions in $\P$.  It is easy to see that $a_1 \succ a_2 \succ a_3$, but for no $r \in (0,1)$ is it the case that $r a_1 +  (1-r_1) a_3 \succ a_2$ (consider the probability
distribution $p$ such that $p(s_1) = p(s_2) = 1/2$).
\exam

We can think of the Archimedean axiom as trying to capture some
continuity properties of $\succeq$.  We use instead the following
axiom, which was also used by \citet{Au62,au64}.  If the set
of tests has cardinality $n$ and the set of primitive choices has
cardinality $m$, we can identify an act $\A^+$ with an element of
$\Nspace{2^nm}$, so the graph of $\succeq$ (i.e., the set of pairs
$(x,y)$ such that $x \succeq y$) can be viewed as a subset 
of $\Nspace{2^{n+1}m}$.
\addtocounter{axiom}{1}
\begin{axiom}\label{Afour}
The graph of the preference relation $\succeq$ is closed.
\end{axiom}

As the following result shows, in the presence of $\AthreeprpAX$
(extended statewise mixture cancellation),
$\Afour$ implies full mixture independence.  Moreover, if we also
assume $\Aone$, then $\Afour$ implies $\Arch$.  Indeed, it will follow from 
Theorem \ref{thm:aaeurep}
that in the presence of \Aone\ and $\AthreeprpAX$, \Afour\ and \Arch\ are equivalent.  On the other hand, it seems that \Arch\ does not suffice to capture independence if $\succeq$ is a partial order.  Summarizing, \Afour\ captures the essential features of the Archimedean property, while being more appropriate if $\succeq$ is only a partial order.

\begin{prop}\label{prop:arch}
(a) $\AthreeprpAX$ and $\Afour$ imply full mixture independence.\\
(b) \Aone, $\Afour$, and \ecm\ together
imply \Arch.
\end{prop}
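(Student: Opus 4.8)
The plan is to handle (a) and (b) in turn, in both cases first extracting rational mixture independence and then using the closed graph \Afour\ to pass to real coefficients. For (a), I would begin by recording that $\AthreeprpAX$ yields reflexivity, transitivity, and rational mixture independence (as already noted in the text, via Proposition~\ref{prop:ecma} and Theorem~\ref{thm:cancellationchar1}). The entire task is then to upgrade rational mixture independence to full mixture independence, and the only new ingredient available for this is the closedness of the graph $G$ of $\succeq$ guaranteed by \Afour.

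I would split full mixture independence into two implications. The forward implication (F) --- if $a\succeq b$ then $ra+(1-r)c\succeq rb+(1-r)c$ for every real $r\in(0,1]$ --- is the easy one: for rational $r$ it is rational mixture independence, and for real $r$ I would take rationals $r_k\to r$, note that $r\mapsto(ra+(1-r)c,\,rb+(1-r)c)$ is continuous, observe that each pair $(r_ka+(1-r_k)c,\,r_kb+(1-r_k)c)$ lies in $G$, and conclude the limit lies in $G$ by \Afour. The reverse implication is the genuine obstacle: from $ra+(1-r)c\succeq rb+(1-r)c$ with $r$ irrational, I must recover $a\succeq b$, and a naive limiting argument fails because the complement of a closed set is open. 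The trick is to apply (F) to the already-mixed pair itself, mixing it with background $c$ at weight $\mu_k=r_k/r$ for rationals $r_k\uparrow r$. The identity $\mu_k(ra+(1-r)c)+(1-\mu_k)c=r_ka+(1-r_k)c$ (and the same for $b$) converts this into $r_ka+(1-r_k)c\succeq r_kb+(1-r_k)c$ with $r_k$ \emph{rational}, whence rational mixture independence gives $a\succeq b$. This re-mixing step, which turns an irrational coefficient back into a rational one, is the one non-routine move in (a).

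For (b), I would first re-obtain full mixture independence: \ecm\ gives rational mixture independence by Theorem~\ref{thm:cancellationchar1}, and \Afour\ upgrades it exactly as in (a). I would also note that the ``iff'' form immediately yields its strict analogue, namely $a\succ b$ iff the two mixtures are strictly ordered, since $\succ$ is defined from $\succeq$ and its failure. Assuming \Aone\ and $a\succ b\succ c$, write $\phi(r)=ra+(1-r)c$. Using strict mixture independence and $a\succ c$ (which follows by transitivity), I would show $\phi$ is strictly increasing: for $0\le r_1<r_2\le 1$, with $\lambda=r_2-r_1$ and $e=(r_1a+(1-r_2)c)/(1-r_2+r_1)$ one checks $\phi(r_2)=\lambda a+(1-\lambda)e$ and $\phi(r_1)=\lambda c+(1-\lambda)e$, so $\phi(r_2)\succ\phi(r_1)$ (the extreme case $r_1=0,\,r_2=1$ being just $a\succ c$).

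Finally I would invoke \Aone: the sets $\{r\in[0,1]:\phi(r)\succeq b\}$ and $\{r\in[0,1]:b\succeq\phi(r)\}$ cover $[0,1]$ and are closed, being preimages of $G$ under the continuous maps $r\mapsto(\phi(r),b)$ and $r\mapsto(b,\phi(r))$. Connectedness of $[0,1]$ then forces a common point $r^*$ with $\phi(r^*)\sim b$, and $r^*\in(0,1)$ because $\phi(0)=c$ and $\phi(1)=a$ are both not $\sim b$. Monotonicity now gives $a=\phi(1)\succ\phi(r)\succ\phi(r^*)\sim b$ for any $r\in(r^*,1)$ and $b\sim\phi(r^*)\succ\phi(r')\succ\phi(0)=c$ for any $r'\in(0,r^*)$, which is exactly \Arch. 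I expect the limiting argument and the connectedness/intermediate-value step to be routine; the main obstacle is the irrational-coefficient reversal in part (a), with the explicit monotonicity decomposition in part (b) a close second.
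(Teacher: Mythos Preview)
Your proof is correct. Part (a) follows essentially the paper's route: rational mixture independence from cancellation, then closedness of the graph to pass to real coefficients, with the same ``re-mix against $c$ to reduce an irrational weight to a rational one'' trick for the reverse implication. (Your use of the already-proved real-coefficient forward direction with weight $\mu_k=r_k/r$ is a slight streamlining of the paper's version, which instead uses rational weights $r_n$ with $r_n r\to r'$ and then takes a limit; the underlying idea is identical.)

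Part (b) is where you genuinely diverge from the paper. The paper's argument is much shorter and purely by contradiction: from mixture independence one has $a\succ ra+(1-r)c$ for all $r\in(0,1)$; if no $r$ gave $ra+(1-r)c\succ b$, then by \Aone\ one would have $b\succeq ra+(1-r)c$ for all $r$, and letting $r\to 1$ via \Afour\ yields $b\succeq a$, a contradiction. The remaining two inequalities are handled symmetrically. Your approach instead establishes strict monotonicity of $r\mapsto\phi(r)$ and then runs an intermediate-value/connectedness argument to locate $r^*$ with $\phi(r^*)\sim b$, reading off the Archimedean witnesses on either side. Both are valid; the paper's contradiction-plus-limit argument is more economical, while yours is more constructive (it actually pins down the threshold $r^*$) and makes the order structure of the one-parameter family explicit. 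Your monotonicity decomposition with $e=(r_1a+(1-r_2)c)/(1-\lambda)$ is correct, and your handling of the degenerate case $r_1=0$, $r_2=1$ is appropriate.
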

\medskip

\begin{definition}
A preference relation $\succeq$ on a set $C \subseteq\A^+_{\A_0,T_0}$ has a \emph{constructive $\T$-consistent
SEU representation} iff there is a finite set of states $S$, a finite set $O$ of outcomes, a set $\U$ of utility functions $u:O\to\R$, a
closed set $\P$ of probability distributions on $S$, a
closed set $\V\subseteq \U\times\P$, a test interpretation $\pi_S$
consistent with $\T$, and a choice interpretation $\rho_{SO}$ such that
$a\succeq b$ iff
\begin{equation*}
\sum_{s,o} u(o)\rho_{SO}(a)(s)(o) p(s)\geq
\sum_{s,o} u(o)\rho_{SO}(b)(s)(o) p(s)\
\mbox{for all $(u,p) \in \V$}.
\end{equation*}
\end{definition}
In the statement of the theorem, if $\succeq$ is a preference relation
on a mixture-closed subset of $\A^+_{\A_0,T_0}$, we use
$\succeq \otimes \, (a,b)$ to denote the smallest preference relation
including $\succeq$ and $(a,b)$ satisfying $\AthreeprpAX$ and $\Afour$, and take $\EXAXp(\succeq)$ to consist of all complete preference relations extending $\succeq$ and satisfying $\AthreeprpAX$ and $\Afour$.

\begin{theorem}\label{thm:aaeurep}
A preference relation $\succeq$ on a closed and mixture-closed set $C \subseteq \A^+_{\A_0,T_0}$ has a constructive $\T$-consistent SEU representation iff $\succeq$ satisfies $\AthreeprpAX$ and $\Afour$.  Moreover, in the representation, either $\P$ or $\U$ can be taken to be a singleton and, if $\U$ is a singleton $\{u\}$, the state space can be taken to be $\AtomsAX(T_0) \times \EXAXp(\succeq)$.  If, in addition, $\succeq$ satisfies \Aone, then $\V$ can be taken to be a singleton.
\end{theorem}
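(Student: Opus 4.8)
The plan is to follow the same two-stage strategy used for Theorem~\ref{thm:eurepa}, adapting each stage to the mixture setting. The easy direction is necessity: if a constructive AX-consistent SEU representation exists, then $\succeq$ satisfies $\AthreeprpAX$ and $\Afour$. For $\AthreeprpAX$, I would observe that whenever $f_{a_1}(\delta)+\cdots+f_{a_n}(\delta)=f_{b_1}(\delta)+\cdots+f_{b_n}(\delta)$ for every consistent atom $\delta$, the corresponding expected-utility values satisfy the additive identity $\sum_i \E_{(u,p)}[a_i]=\sum_i \E_{(u,p)}[b_i]$ for each $(u,p)\in\V$; the hypothesis $a_i\succeq b_i$ for $i\le k$ together with $a_{k+1}=\cdots=a_n$ and $b_{k+1}=\cdots=b_n$ then forces $\E_{(u,p)}[b_n]\ge\E_{(u,p)}[a_n]$ for all $(u,p)\in\V$, which is exactly $b_n\succeq a_n$. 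For $\Afour$, closedness of the graph follows because the representing inequality is a closed condition, $\V$ is closed, and expectation is continuous in the mixture coordinates.

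The substantive direction is sufficiency. First I would reduce to the complete case. Using $\EXAXp(\succeq)$, the set of all complete preference orders extending $\succeq$ and satisfying $\AthreeprpAX$ and $\Afour$, I would show (as in the $\A$ case) that $a\succeq b$ if and only if $a\succeq' b$ for every $\succeq'\in\EXAXp(\succeq)$; the nontrivial inclusion, that any pair left unresolved by $\succeq$ can be extended consistently, uses $\succeq\otimes(a,b)$ and an appeal to $\Afour$ to guarantee that such extensions exist and remain closed. Having done this, it suffices to produce an SEU representation for each complete $\succeq'$ and then take $\V$ to be the union over $\succeq'\in\EXAXp(\succeq)$ of the singleton $(u,p_{\succeq'})$, with the state space $\AtomsAX(T_0)\times\EXAXp(\succeq)$ so that each extension occupies its own fiber; intersecting the resulting orders recovers $\succeq$, and closedness of $\P$ and of $\V$ follows from $\Afour$.

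For a single complete order $\succeq'$ satisfying $\AthreeprpAX$ and $\Afour$, the plan is to invoke Proposition~\ref{prop:arch}: in the presence of $\Afour$, $\AthreeprpAX$ yields full mixture independence, and under \Aone\ it also yields \Arch. Thus $\succeq'$ is a complete, reflexive, transitive order on a mixture space satisfying mixture independence and the Archimedean property, which is precisely the hypothesis of the classical mixture-space (Herstein--Milnor / Anscombe--Aumann) representation theorem. Applying that theorem to the identification of each choice $a$ with its state-indexed distribution $f_a$ over primitive choices yields a state-dependent linear functional; then, exactly as in the bootstrapping step flagged for Theorem~\ref{thm:eurepa}, I would change the outcome space to separate utilities from probabilities, producing a genuine probability $p$ on $S=\AtomsAX(T_0)$ and a utility $u$ on $O$ with the SEU form. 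The main obstacle I anticipate is the bootstrapping/separation step: passing from an abstract linear representation on the mixture space to one that factors as $\sum_{s,o}u(o)\rho_{SO}(a)(s)(o)p(s)$ requires choosing $O$ and $\rho_{SO}$ so that the state-dependent utilities collapse to a single state-independent $u$ against a probability weighting, and ensuring this construction is compatible across all fibers $\succeq'$ simultaneously so that $\P$ and $\V$ come out closed. The \Aone\ strengthening, that $\V$ can be taken to be a singleton, is then immediate, since completeness makes $\EXAXp(\succeq)=\{\succeq\}$ a single fiber.
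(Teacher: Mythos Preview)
Your approach is correct in outline but organized differently from the paper. You first reduce to complete extensions in $\EXAXp(\succeq)$ and then invoke the classical Anscombe--Aumann/mixture-space theorem on each fiber; the paper instead proves a standalone additively-separable representation for partial orders directly (its Theorem~\ref{thm:rep4}) via a convex-cone argument: one shows that $D^+=\{a-b: a\succeq b\}$ equals $\CC\cap D$ for a closed convex cone $\CC$ (Lemma~\ref{lem:cone}), and then writes $\CC$ as an intersection of closed half-spaces to obtain the representing set $\U$ of state-dependent utilities. A separate lemma (Lemma~\ref{lem:halfspace}) identifies those half-spaces bijectively with the complete extensions in $\EXAXp(\succeq)$, which is how the extensions enter the paper's argument. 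The bootstrapping to a state-independent SEU form is, as you anticipate, done by enlarging the outcome space to $\AtomsAX(T_0)\times\A_0$, picking any full-support $p$ on atoms, and setting $v(\delta,a)=u(\delta,a)/p(\delta)$; the single-utility version then enlarges the state space to $\AtomsAX(T_0)\times\EXAXp(\succeq)$ exactly as you describe.

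The one place your plan is genuinely underspecified is the step you treat most lightly: showing that $\succeq=\bigcap_{\succeq'\in\EXAXp(\succeq)}\succeq'$ and that $\EXAXp(\succeq)$ is nonempty. An ``appeal to $\succeq\otimes(a,b)$ and $\Afour$'' does not by itself establish that adding $(b,a)$ when $a\not\succeq b$ and closing under $\AthreeprpAX$ and $\Afour$ avoids forcing $(a,b)$, nor that the result can be pushed all the way to a complete order satisfying both axioms. This is precisely the content of the paper's cone lemma together with the separating-hyperplane step, so your route and the paper's ultimately rest on the same geometric fact; the paper just isolates it up front. By contrast, your identification of the bootstrapping as the main obstacle is misplaced: once the state-dependent representation is in hand, separating into $p$ and $u$ is a one-line division, and compatibility across fibers is obtained simply by letting the utility carry the fiber index $\succeq'$.
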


As in the case of the language $\A$, we cannot in general take the state space to be the set of atoms.  Specifically, if $\A_0$ consists of two primitive choices and we take all choices in $\A_0^+$ to be incomparable, then the same argument as in Example~\ref{xam1} shows that we cannot take $S$ to be $\Atoms(T_0)$, and there are no interesting uniqueness requirements that we can place on the set of probability measures or the utility function.  On the other hand, if \Aone\ holds, the proof of Theorem~\ref{thm:aaeurep} shows that, in the representation, the expected utility is unique up to affine transformations.  That is, if $(S,O,p,\pi_{S},\rho_{SO}^0,u)$ and $(S',O',p',\pi_{S'},\rho_{S'O'}^0,u')$ are both representations of $\succeq$, then there exist constants $\alpha$ and $\beta$ such that for all acts $a \in \A_{\A_0,T_0}^+$, $\mathrm{E}_{p}(u({\rho_{SO}(a)})) = \alpha\mathrm{E}_{p'}(u'({\rho_{S'O'}(a)})) +  \beta$.

\subsection{Objective Outcomes}\label{sec:objectiveoutcomes}
In choosing, for instance, certain kinds of insurance or financial
assets, there is a
natural, or objective, outcome space---in these cases, monetary
payouts.
To model this, we take the set $O$ of objective outcomes as given, and
identify it with a subset of the primitive acts $A_0$.  Call the
languages with this distinguished set of outcomes $\A_{\A_0,T_0,O}$ and
$\A^+_{\A_0,T_0,O}$, depending on whether we allow randomization.

To get a representation theorem in this setting, we need to make some
standard assumptions.  The first is that there is a best and worst
outcome; the second is a state-independence assumption.  However, this
state-independence assumption only applies to acts in $O$,
but not to all acts.

\begin{axiom}\label{Afive}There are outcomes $o_1$ and $o_0$ such that
for all non-null tests $t$, $o_1\succeq_t a\succeq_t o_0$ for all $a\in\A_0$.
\end{axiom}
\begin{axiom}
\label{Asix} If $t$ is not null and $o, o' \in O$, then
$o \succeq o'$ iff $o \succeq_t o'$.
\end{axiom}

In all our earlier representation theorems, it was possible to use a
single utility function.  $\Afive$ and $\Asix$ do not suffice to get such a representation theorem.  A necessary condition to have a single utility function, if we also want utility to be state independent, is that $\succeq$ restricted to $O$ be complete.

\begin{axiom}\label{Aseven}
$\succeq$ restricted to $O$ is complete.
\end{axiom}

While $\Asix$ and $\Aseven$ are necessary to get a representation with a single utility function, they are not sufficient, as the following
example shows.

\xam\label{xam:singleutility}
Suppose that we have a language with one primitive test $t$, and three
outcomes, $o_0$, $o_1$, and $o$.  Let $a_1$ be $\ift{t}{o_0}{o_1}$
and let $a_2$ be $\mathbf{if}\ t \ \mathbf{then}$ $o_1\ \mathbf{else}\ o_0$.  Let $\succeq$ be the smallest preference relation satisfying $\AthreeprAX$, $\Afive$, $\Asix$, and $\Aseven$  (or $\AthreeprpAX$, $\Afour$, $\Afive$, $\Asix$, and $\Aseven$, if we are considering the language $\A^+$) such that $o \sim a_1$
and $o_1\succ o_0$.
Note that $a_1$ and $a_2$ are incomparable according to $\succeq$.  Suppose that there were a representation of $\succeq$ involving a set $\P$ of probability measures and a single utility function $u$.  Thus, there would have to be probability measures $p_1$ and $p_2$ in $\P$ such that $a_1 \succ a_2$ according to $(p_1,u)$ and $a_2 \succ a_1$ according to $(p_2,u)$.  It easily follows that $p_1(\pi_S(t)) < 1/2$ and  $p_2(\pi_S(t)) > 1/2$.  We can assume without loss of generality (by using an appropriate affine transformation) that $u(o_0) = 0$ and $u(o_1)=1$.  Since $o \sim a_1$, $u(o)$ must be the same as the expected utility of~$a_1$.  But this expected utility is less than $1/2$ with $p_1$ and more than $1/2$ with $p_2$.  This gives the desired contradiction.
\exam

Part of the problem is that it is not just the acts in $O$ that must be state independent.   Let $O^+$ be the smallest set of acts containing $O$ that is closed under convex combinations, so that if $o$ and $o'$ are in $O^+$, then so is $ro+(1-r)o'$.  Let $\Asix^+$ and $\Aseven^+$ be the axioms that result by replacing $O$ by $O^+$ in $\Asix$ and $\Aseven$, respectively. Example~\ref{xam:singleutility} actually shows that $\Asix^+$ and $\Aseven$ do not suffice to get a single utility function; Theorem~\ref{thm:outcomes} shows that $\Asix^+$ and $\Aseven^+$ do, at least for $\A^+$.  We do not have a representation theorem for $\A$, and believe it will be hard to obtain such a theorem (for much the same reasons that it is hard to get a representation theorem in the Savage setting if we restrict to a finite set of acts).

\begin{theorem}\label{thm:outcomes}
A preference relation $\succeq$ on a set $C \subseteq \A^+_{\A_0,T_0,O}$ has a constructive SEU representation with outcome space $O$ iff $\succeq$ satisfies $\AthreeprpAX$, $\Afour$, $\Afive$, $\Asix^+$, and $\Aseven^+$.  Moreover, in the representation, $\U$ can be taken to be a singleton $\{u\}$ and the state space can be taken to
have the form $\AtomsAX(T_0) \times \A_0 \times \EXAXp(\succeq)$.
If in addition $\succeq$ satisfies \Aone, then we can take $\V$ to be a singleton too.
\end{theorem}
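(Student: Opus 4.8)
The plan is to build on Theorem~\ref{thm:aaeurep}, which already gives us a constructive AX-consistent SEU representation for any $\succeq$ on $C \subseteq \A^+_{\A_0,T_0}$ satisfying $\AthreeprpAX$ and $\Afour$. The new content here is the use of the objective outcome space $O$ together with the state-independence axioms $\Afive$, $\Asix^+$, and $\Aseven^+$ to collapse the set $\U$ of utility functions to a \emph{single} utility function $u$, while simultaneously forcing $u$ to agree with the objective interpretation of outcomes in $O$. The necessity direction (that any such representation forces the axioms) I would handle first and quickly: reflexivity, transitivity, and rational mixture independence follow from $\AthreeprpAX$ as in Theorem~\ref{thm:cancellationchar1}; closedness of the graph gives $\Afour$; and the existence of a best/worst outcome together with state-independence of $u$ on $O^+$ directly yields $\Afive$, $\Asix^+$, and $\Aseven^+$, since a single state-independent $u$ makes $\succeq_t$ agree with $\succeq$ on $O^+$ for every non-null $t$.

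For the sufficiency direction, the first step is to invoke Theorem~\ref{thm:aaeurep} to obtain a representation $(S,O',\U,\P,\V,\pi_S,\rho_{SO'})$ in which $\U$ is already a singleton $\{u'\}$ and $S = \AtomsAX(T_0)\times\EXAXp(\succeq)$; the price is that the outcome space $O'$ is the \emph{constructed} one ($\A_0$-based), not the given objective $O$. The heart of the proof is therefore to reconcile the constructed representation with the objective outcomes. Here $\Aseven^+$ is what makes this possible: since $\succeq$ restricted to $O^+$ is complete, the mixture-space preference on $O^+$ is a complete weak order satisfying mixture independence (from $\AthreeprpAX$) and the Archimedean-type continuity from $\Afour$, so by a standard mixture-space (von Neumann--Morgenstern) argument there is a utility $u$ on $O$, unique up to positive affine transformation, representing $\succeq$ on $O^+$. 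I would normalize using $\Afive$ so that $u(o_0)=0$ and $u(o_1)=1$.

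The next step is to propagate this objective $u$ to \emph{all} acts, not merely those in $O^+$. For each non-null atom $\delta$ and each primitive choice $a\in\A_0$, axiom $\Afive$ gives $o_1 \succeq_\delta a \succeq_\delta o_0$, and by the completeness on $O^+$ together with the Archimedean property I can find (for the complete case under \Aone) a unique mixture $r o_1 + (1-r) o_0 \in O^+$ with $a \sim_\delta r o_1 + (1-r)o_0$; this assigns each primitive choice a state-contingent utility value, and $\Asix^+$ ensures this value is state-independent \emph{on $O^+$}, which is exactly what is needed to make the certainty-equivalent calibration consistent across states. The key technical point is that $\Asix^+$ (over $O^+$, not just $O$) is precisely what rules out the obstruction in Example~\ref{xam:singleutility}: it forces the test-contingent preference on all mixtures of outcomes to coincide with the unconditional one, so the single $u$ calibrated on $O^+$ can represent the restriction of $\succeq$ to $O^+$ in every non-null state simultaneously. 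I then define the choice interpretation $\rho_{SO}^0$ on primitive choices via these certainty equivalents (which is why the state space must be enlarged to $\AtomsAX(T_0)\times\A_0\times\EXAXp(\succeq)$: the extra $\A_0$ factor records which ``copy'' of the outcome-value a primitive choice realizes, disambiguating primitive choices that are incomparable as acts), and extend by induction on structure as in \eqref{ifthenelse} and the mixture rule.

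The main obstacle I expect is precisely this reconciliation step: showing that the single utility $u$ obtained on $O^+$ genuinely represents $\succeq$ on \emph{all} of $\A^+_{\A_0,T_0,O}$, rather than only on outcome-mixtures. The difficulty is that a primitive choice $a\in\A_0$ need not lie in $O^+$ and may be incomparable to outcomes in ways not captured by a single state-contingent value; this is the phenomenon behind the enlargement of the state space by the $\A_0$ factor and by $\EXAXp(\succeq)$. I would overcome this by combining the $\U=\{u\}$ singleton representation from Theorem~\ref{thm:aaeurep} with the calibration above, arguing that the probability set $\P$ (indexed by the $\EXAXp(\succeq)$ factor) absorbs all remaining incompleteness while $u$ is pinned down uniquely by $\Aseven^+$ and $\Afive$. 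Finally, under \Aone{} the order is complete, the family of extensions collapses, and $\V$ becomes a singleton, giving the last clause; verifying that the resulting single $(u,p)$ indeed reproduces $\succeq$ is then a direct expected-utility computation using the certainty-equivalent definition of $\rho_{SO}^0$.
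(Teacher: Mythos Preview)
Your overall architecture is right --- start from the state-dependent representation furnished by Theorem~\ref{thm:rep4}/\ref{thm:aaeurep}, calibrate a single utility on $O$ via the best/worst outcomes and the certainty equivalents $o \sim c_o o_1 + (1-c_o)o_0$ (which exist and are unique by $\Aseven^+$ and $\Afour$), and then rebuild the representation over the enlarged state space. But the step where you ``define the choice interpretation $\rho_{SO}^0$ on primitive choices via these certainty equivalents'' has a real gap, and your explanation of the $\A_0$ factor is not the right one.

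The framework requires $\rho_{SO}^0(a)$ to land in $\Delta^*(O)$: a primitive choice must be interpreted as a \emph{pure} outcome in each state, not as a nondegenerate lottery. A certainty equivalent $r\,o_1 + (1-r)\,o_0$ with $r\in(0,1)$ is not a pure outcome, so you cannot set $\rho_{SO}^0(a)(s)$ equal to it. Your account of the $\A_0$ factor (``disambiguating primitive choices that are incomparable as acts'') does not address this; the issue is not incomparability but the determinism constraint on $\rho_{SO}^0$, exactly as in Example~\ref{xam3}.

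The paper's construction uses the $\A_0$ coordinate in the state to \emph{simulate} the certainty-equivalent lottery with a deterministic act. Concretely: for $a\in\A_0\setminus O$ set
\[
\rho_{SO}(a)(\delta,a',\succeq') \;=\;
\begin{cases}
o_1 & \text{if } a \succeq'_\delta a',\\
o_0 & \text{otherwise,}
\end{cases}
\]
and for $a\in O$ set $\rho_{SO}(a)\equiv a$. After normalizing each state-dependent $u_{\succeq'}$ so that $u_{\succeq'}(\delta,o_0)=0$ and $\sum_\delta u_{\succeq'}(\delta,o_1)=1$, choose $p_{\succeq'}$ concentrated on $\{\succeq'\}$ with
\[
p_{\succeq'}\bigl(\{(\delta,a',\succeq'):\ a \succeq'_\delta a'\}\bigr) \;=\; u_{\succeq'}(\delta,a)
\quad\text{for every }a\in\A_0,
\]
which is feasible because $\succeq'_\delta$ totally orders $\A_0$ and $0\le u_{\succeq'}(\delta,a)\le u_{\succeq'}(\delta,o_1)$. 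With $u(o)=c_o$, the expected utility of $a$ over the block $\{\delta\}\times\A_0\times\{\succeq'\}$ is then exactly $u_{\succeq'}(\delta,a)$, recovering the state-dependent representation and hence $\succeq$. So the $\A_0$ factor is a device for turning the intermediate value $u_{\succeq'}(\delta,a)\in[0,u_{\succeq'}(\delta,o_1)]$ into the probability that a $\{o_0,o_1\}$-valued deterministic act hits $o_1$; once you supply this construction, the rest of your outline goes through.
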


Note that, even if $\succeq$ satisfies $\Aone$, the state space
has the form $\AtomsAX(T_0) \times \A_0$.  The fact that we cannot take the state space to be $\AtomsAX(T_0)$ is a consequence of our assumption that primitive acts are deterministic.  Roughly speaking, we need the extra information in states to describe our uncertainty regarding how the primitive acts not in $O$ can be viewed as functions from states to outcomes in the pre-specified set $O$.  The following example shows that we need the state space to be larger than $\AtomsAX(T_0)$ in general.

\xam\label{xam3}
Suppose that there are no primitive propositions, so $\AtomsAX(T_0)$ is a singleton.  There are three primitive acts in $\A_0$: $o_1=\$50000$, $o_0=\$0$, and $a$, which is interpreted as buying 100 shares of Alphabet.  Suppose that $o_1\succ a\succ o_0$.  If there were a representation with only one state, then $\rho_{SO}(a)$ would have to be either $o_1$ or $o_0$, which would imply that $a\sim o_1$ or $a\sim o_0$, contradicting our description of $\succeq$.  The issue here is our requirement that a primitive choice be represented as a function from states to outcomes.  If we could represent $a$ as a lottery, there would be no problem representing $\succeq$ with one state.  We could simply take $u(o_1) = 1$, $u(o_0) = 0$, and take $a$ to be a lottery that gives each of $o_0$ and $o_1$ with probability $1/2$.

We prefer not to allow the DM to consider such `subjective' lotteries.
Rather, we have restricted the randomization to acts.
The representation theorem would not change if we allowed primitive
choice to map a state to a distribution over outcomes, rather than
requiring them to be mappings from states to single outcomes (except
that we could take the state space to be $\AtomsAX(T_0)$).

We can easily represent $\succeq$ using two states, $s_0$ and $s_1$, by taking $a$ to be the act with outcome $o_i$ in state $s_i$, for $i = 0,1$.  Taking each of $s_0$ and $s_1$ to hold with probability $1/2$ then gives a representation of $\succeq$.  In this representation,
we can view $s_0$ as the state where buying Alphabet is a good investment, and $s_1$ as the state where buying Alphabet is a bad investment.  However, the DM cannot talk about Alphabet being a good investment; this is not part of his language.  Another DM might explicitly consider the test that Alphabet is a good investment.  Suppose this DM has the same preference relation over primitive acts as in the example, is indifferent between 100 shares of Alphabet and $o_1$ if the test is true, and is indifferent between 100 shares of Alphabet and $o_0$ if the test is false. This DM's preference relation has
exactly the same representation as the first DM's preference relation, but
now $s_0$ can be viewed as the atom where the test is false, and $s_1$
can be viewed as the atom where the test is true.  The second DM can
reason about Alphabet being a good investment explicitly, and can talk to others about it.
\exam

\section{Nonstandard Interpretations}\label{sec:nonstandard}

Failures of extensionality are concerned with a semantic issue:
Identifying when two descriptions of an event are equivalent.
However, the same issue arises with respect to logical
equivalence. For instance, suppose act $a$ gives a DM $x$ if
$\lnot((\lnot 
t_1\land\lnot t_3)\lor(\lnot t_2\land\lnot t_3))$ is true, and $y$
otherwise, and act $b$ gives the DM $x$ if $(t_1\land t_2)\lor t_3$ is
true, and $y$ otherwise.  Only someone adept at formula manipulation
(or with a good validity-checking program) will recognize that acts $a$
and $b$ are equivalent as a matter of logic because the equivalence of
the two compound propositions is a tautology.  So far we have required 
that DMs be \emph{logically omniscient}, and recognize all such
tautologies, because   we have considered only standard
interpretations.

An interpretation $\pi_S$ on a state space $S$ does not have to be
standard; all that is required is that it associate with each
test a subset of $S$.  By allowing nonstandard test interpretations, we can back off from our requirement that DM's know all tautologies.
This gives us a way of modeling failures of logical omniscience and, in
particular, \emph{resource-bounded reasoning} by DMs.
We remark that such nonstandard
interpretations are essentially what philosophers call `impossible
possible worlds' \citep{Rant}; they have also been used in game theory
for modeling resource-bounded reasoning \citep{Lip99}.

A standard test interpretation is completely determined by its behavior on the primitive tests.  However, in general, there is no
similar finite characterization of a nonstandard test interpretation.
To keep things finite, when dealing with nonstandard interpretations,
we assume that there is a finite subset $T^*$ of the set $T$ of all
tests such that the only tests that appear in choices are those in
$T^*$.  (This is one way to model resource-bounded reasoning.)
With this constraint, it suffices to consider the behavior of a nonstandard interpretations only on the tests in $T^*$.  Let $\A_{\A_0,T^*}$ consist of all choices whose primitive choices are in $\A_0$ and whose tests are all in $T^*$.

The restriction to choices in $\A_{\A_0,T^*}$ allows us to  define the
cancellation postulate in a straightforward way even in the presence of
nonstandard interpretations.  A \emph{truth assignment to $T^*$} is just
a function $v: T^* \rightarrow \{\true,\false\}$.  We can identify an
interpretation on $S$ with a function that associates with every state
$s \in S$ the truth assignment $v_s$ such that $v_s(t) = \true$ iff $s
\in \pi(t)$.  For a standard interpretation, we can use an atom instead
of a truth assignment, since for a standard interpretation, the behavior
of each truth assignment is determined by its behavior on primitive
propositions, and we can associate with the truth assignment $v_s$ that
atom $\delta_s$ such that $t$ is a conjunct in $\delta_s$ iff $v_s(t) =
\true$.  These observations suggest that we can consider truth
assignments to be the generalization of atoms once we move to nonstandard
interpretations.   Indeed, if we do this, we can easily generalize all
our earlier theorems.

In more detail, we now view a choice as a function, not from atoms to
primitive choices, but, more generally, as a function from truth
assignments to primitive choices.  As before, we take primitive 
choices to be constant functions.  The choice
$a=\ift{t_1}{a_1}{(\ift{t_2}{a_2}{a_3})}$
can be identified with the function $f_a$ such that
\begin{equation*}
f_a(v) = \begin{cases}
a_1 &\mbox{if $v(t_1) = \true$}\\
f_{\scriptscriptstyle\ift{t_2}{a_2}{a_3}} &\mbox{if $v(t_1) = \false$}
\end{cases}
\end{equation*}
and
\begin{equation*}
f_{\scriptscriptstyle\ift{t_2}{a_2}{a_3}} =
\begin{cases}
a_2 &\mbox{if $v(t_2) = \true$}\\
a_3 &\mbox{if $v(t_2) = \false$.}
\end{cases}
\end{equation*}
A truth assignment $v$ is consistent with $\T$ if
$v(t) = \true$ for all tests $t \in \T$.

With these definitions in hand, all our earlier results hold, with the following changes:
\begin{itemize}
\item we replace $\A_{\A_0,T_0}$ by $\A_{\A_0,T^*}$;
\item we replace `atoms $\delta$ over $T_0$' by `truth assignment to
$T^*$'.
\end{itemize}
The cancellation axioms are all now well defined.  With these changes,
Proposition~\ref{prop:equivalence} and
Theorems~\ref{thm:eurepa},~\ref{thm:aaeurep}, and~\ref{thm:outcomes}
hold with essentially no changes in the proof.  Thus, we have
representation theorems that apply even to resource-bounded
reasoners.

There is one further subtlety, however.  A theory puts
constraints on the set of test interpretations we  consider.
Up to now, we have taken a theory $\T$ to be a collection of tests.
If we restrict to standard interpretations, this suffices, in the
sense that, given a state space $S$, for all sets $\I$ of standard
interpretations for the state space $S$ of a set $T$ of 
tests, there is a theory $\T_{\I}$ such that a test
interpretation $\pi_S$ for $S$  respects $\T_{\I}$ iff $\pi_S
\in \I$.  That is, a theory can specify a set of interpretations.
This is no longer the case once we move to nonstandard
interpretations.  For example, the nonstandard interpretation $\pi_S^*$
for $S$ that
makes every test true at every state (i.e., 
$\pi_S^*(t) = S$ for all tests $t$) respects every theory.  There
is no way that a theory can disallow $\pi_S^*$.

To understand why this is an issue, note that with standard
interpretations, we get many properties for free, so to speak.  For
example, for all tests $t$, $t$ and $\neg t$ cannot be simultaneously true.
More precisely, for all standard
interpretations $\pi_S$ for $S$, we have $\pi_S(t) \cap \pi_S(\neg t) =
\emptyset$.  We might want to restrict to nonstandard interpretations
that have this property.  Unfortunately, there is no way to do this if a
theory is just a set of tests; we cannot exclude the
interpretation $\pi_S^*$, and it does not have this property.
Similarly, we may want to restrict to interpretations where
conjunction is commutative, so $\pi_S(t_1 \land t_2) = \pi_S(t_2 \land
t_1)$.  Note that just adding the test $(t_1 \land t_2) \dimp (t_2
\land t_1)$ does not have this effect; it just ensures that $\pi_S((t_1
\land t_2) \dimp (t_2 \land t_1)) = S$.

We thus consider \emph{generalized theories}, which consist of
\emph{generalized axioms} of the form $T_1 = T_2$, where $T_1$ and $T_2$ 
are sets of tests.
We also assume that there are special primitive
tests $\btrue$ and $\bfalse$.  
A test interpretation $\pi_S$ for a state space $S$ respects a generalized
theory $\T$ if, for every generalized axiom $T_1 = T_2$ in $\T$,
$\{s \in S: s \in \pi_S(t) \mbox{ for all } t \in T_1\} =
\{s \in S: s \in \pi_S(t) \mbox{ for all } t \in T_2\}$; that is, we want all 
the tests in $T_1$ to be true at a state $s$ iff all the tests in
$T_2$ are true.  Moreover, we require that $\pi_S(\btrue) = S$ and
$\pi_S(\bfalse) = \emptyset$.  


Working with generalized theories gives us a great deal more power to
put constraints on nonstandard interpretations.  For example: 
\begin{itemize}
\item If $\T$ contains the generalized axiom $T = \{\btrue\}$, then
  all the tests in $T$ 
  are true at every state of a nonstandard interpretation that
  respects $\T$.  (Thus, this particular generalized axiom has the
  same effect as 
  the theories we considered earlier.)
\item If $\T$ contains
$\{t_1 \land t_2\} = \{t_2 \land t_1\}$ for all tests $t_1$ and $t_2$, then
    conjunction is commutative for all nonstandard interpretations
    that respect $\T$.
\item If $\T$ contains $\{t, \neg t\} =
\{F\}$, then at most one of $t$ and $\neg t$ will be true according to
$\pi_S$.
\end{itemize}

Despite their added expressive power (once we allow nonstandard
interpretations), Proposition~\ref{prop:equivalence} and
Theorems~\ref{thm:eurepa},~\ref{thm:aaeurep}, and~\ref{thm:outcomes}
continue to hold with essentially no changes in the proof if we allow
generalized theories.

\section{Updating}\label{sec:updating}
There is nothing unique about the state space chosen for an SEU
representation of a given choice problem.  Our representation theorems
state that if an SEU representation exists on any given state space and outcome space with test and choice interpretation functions, then
preferences satisfy the appropriate cancellation and other appropriate
axioms.  Our proofs, however, show that
(for standard interpretations)
we can always represent a choice situation on the state space
$\AtomsAX(T_0)\times\EXAX(\succeq)$, or
$\AtomsAX(T_0)\times\A_0\times\EXAX(\succeq)$ for the
objective-outcomes case, so this construction is in some sense
canonical.
Just as importantly, this state space respects the DM's choice of
language and her preferences, and seems like a natural state space for
the DM to use when reasoning about the decision problem.

In our models, there are two kinds of information.  A DM can learn more
about the external world, that is, learn the results of some tests.  A
DM can also learn more about her internal world, that is, she can learn
more about her preferences.  This learning takes the form of adding more
comparisons to her (incomplete) preference relation.  To make this precise,
given a preference relation $\succeq$ on a set $C \subseteq \A_{\A_0,T_0}$
satisfying $\AthreeprAX$, let $\succeq \oplus\,(a,b)$ be the smallest
preference relation including $\succeq$ and $(a,b)$ satisfying
$\AthreeprAX$.  (There is such a smallest preference relation, since it is
easy to see that if $\succeq'$ and $\succeq''$ both extend $\succeq$,
include $(a,b)$, and satisfy $\AthreeprAX$, then so does
$\succeq'\cap\succeq''$.)

If we take the state space to be $\AtomsAX(T_0) \times
\EXAX(\succeq)$, then a DM's preference relation after obtaining either
new test information or new comparison information can be represented
by conditioning the original probability measures. 
If $\P$ is a set of probability distributions on some set $S$ and $E$ is
a measurable subset of $S$, let $\mbox{$\P\mid E$} = \{q:q=p(\,\cdot\,|E)\
\mbox{for some $p\in\P$ with $p(E)>0$}\}$.
That is, in computing $\P \mid E$, we throw out all
distributions $p$ such that $p(E) = 0$, and then apply standard
conditioning to the rest.  Let $\P\mid t=\P\mid\pi_S{(t)}$.  In the
theorems below, we condition on a test $t$ and on a partial order
$\succeq'$ extending $\succeq$.  We are implicitly identifying $t$ with the event $\{\delta \in \AtomsAX(T_0):\delta\rimp t\}$, and $\succeq'$ with the set of total orders in $\EXAX(\succeq)$ extending $\succeq$.

\newtheorem*{eurepac}{Theorem \ref{thm:eurepa}c}
\begin{eurepac}Under the assumptions of Theorem \ref{thm:eurepa}, and
with a representation of $\succeq$ in which $S=\AtomsAX(T_0)\times\EXAX(\succeq)$ and
$\U$ is a singleton $\{u\}$, $\succeq_t$ is represented by $\P \mid t$
and $u$, and $\succeq \oplus \, (a,b)$ is represented by
$\P \mid (\succeq \, \oplus \, (a,b))$ and~$u$.
\end{eurepac}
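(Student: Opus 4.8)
The plan is to treat the two assertions separately, in each case reading the relevant conditional preference off the given representation $(\P,u)$ on $S=\AtomsAX(T_0)\times\EXAX(\succeq)$. I will use the structure of this representation inherited from the proof of Theorem~\ref{thm:eurepa}: $\P$ consists of one measure $\hat p_{\succeq'}$ for each total extension $\succeq'\in\EXAX(\succeq)$, where $\hat p_{\succeq'}$ is supported on the slice $\AtomsAX(T_0)\times\{\succeq'\}$ and represents $\succeq'$ there via the single utility $u$; since $\succeq$ is the intersection of its extensions, $a\succeq b$ iff $\mathrm{E}_{p}(u(\rho_{SO}(a)))\ge \mathrm{E}_{p}(u(\rho_{SO}(b)))$ for every $p\in\P$.

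First I would handle $\succeq_t$. Fix the event $\pi_S(t)=\{(\delta,\succeq'):\delta\rimp t\}$. For any choice $c$ and any $p\in\P$, the outcomes of $\ift{t}{a}{c}$ and $\ift{t}{b}{c}$ agree off $\pi_S(t)$, so the $c$-contributions cancel and
\[
\mathrm{E}_{p}(u(\rho_{SO}(\ift{t}{a}{c})))-\mathrm{E}_{p}(u(\rho_{SO}(\ift{t}{b}{c})))=\sum_{s\in\pi_S(t)}\bigl(u(\rho_{SO}(a)(s))-u(\rho_{SO}(b)(s))\bigr)p(s),
\]
a quantity independent of $c$ (this is the representation-level form of the event independence of Proposition~\ref{prop:savagecancel}). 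Hence $a\succeq_t b$ --- i.e.\ $\ift{t}{a}{c}\succeq\ift{t}{b}{c}$ for some, equivalently every, $c$ --- holds iff $\sum_{s\in\pi_S(t)}u(\rho_{SO}(a)(s))p(s)\ge\sum_{s\in\pi_S(t)}u(\rho_{SO}(b)(s))p(s)$ for all $p\in\P$. It then remains to check that this family of restricted inequalities is equivalent to the conditional inequalities $\mathrm{E}_{q}(u(\rho_{SO}(a)))\ge\mathrm{E}_{q}(u(\rho_{SO}(b)))$ for all $q\in\P\mid t$: for $p$ with $p(\pi_S(t))>0$ one divides by the positive constant $p(\pi_S(t))$ to pass to $q=p(\cdot\mid\pi_S(t))$ and back, while $p$ with $p(\pi_S(t))=0$ makes both restricted sums vanish and so contributes only the trivial inequality $0\ge 0$, exactly matching the fact that such $p$ are discarded in forming $\P\mid t$.

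Next I would handle $\succeq\oplus(a,b)$. Under the identification in the text, conditioning on $\succeq\oplus(a,b)$ means conditioning on the event $E=\AtomsAX(T_0)\times\{\succeq'\in\EXAX(\succeq):\succeq'\supseteq\succeq\oplus(a,b)\}$. Because each $\hat p_{\succeq'}$ lives on a single slice, $\hat p_{\succeq'}(E)$ equals $1$ when $\succeq'\supseteq\succeq\oplus(a,b)$ and $0$ otherwise; thus forming $\P\mid(\succeq\oplus(a,b))$ simply discards the slices whose extension does not contain $\succeq\oplus(a,b)$ and leaves the surviving measures unchanged. The key combinatorial point is that a total order $\succeq'\in\EXAX(\succeq)$ contains $\succeq\oplus(a,b)$ iff $a\succeq' b$: one inclusion is immediate, and the other uses that $\succeq\oplus(a,b)$ is by definition the smallest $\AthreeprAX$-order extending $\succeq$ with $(a,b)$ added, so any total $\AthreeAX$-order extending $\succeq$ and ranking $a$ above $b$ must contain it. Consequently the surviving slices are exactly those indexed by $\EXAX(\succeq\oplus(a,b))$, and $(\P\mid(\succeq\oplus(a,b)),u)$ is precisely the canonical representation that Theorem~\ref{thm:eurepa} assigns to the $\AthreeprAX$-order $\succeq\oplus(a,b)$; since that order is the intersection of the total orders in $\EXAX(\succeq\oplus(a,b))$, the claimed representation follows.

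I expect the main obstacle to be not any single calculation but pinning down and justifying the slice structure of $\P$ inherited from the appendix proof of Theorem~\ref{thm:eurepa} --- in particular that the \emph{same} utility $u$ represents every extension and that each measure is carried by one slice --- since the second assertion hinges on $\hat p_{\succeq'}(E)\in\{0,1\}$ and on matching the surviving slices with $\EXAX(\succeq\oplus(a,b))$. The test-conditioning part is comparatively routine once one observes that conditioning on $t$ is a \emph{horizontal} cut acting within each slice (requiring the renormalization argument and the $0\ge0$ bookkeeping), whereas conditioning on a comparison is a \emph{vertical} cut that deletes whole slices.
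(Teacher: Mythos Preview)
Your proposal is correct and matches the paper's approach: the paper disposes of Theorem~\ref{thm:eurepa}c in a single clause (``it easily follows that, with this representation, $\succeq_t$ and $\succeq \oplus (a,b)$ can be represented by updating''), and what you have written is precisely the natural fleshing out of that remark using the slice structure $\P=\{p_{\succeq'}:\succeq'\in\EXAX(\succeq)\}$ constructed in the appendix. Your identification of test-conditioning as a horizontal cut and comparison-conditioning as a vertical cut is exactly the structure the paper's construction was designed to make transparent, and your verification that $\{\succeq'\in\EXAX(\succeq):\succeq'\supseteq\succeq\oplus(a,b)\}=\EXAX(\succeq\oplus(a,b))$ (via Proposition~3 to upgrade $\AthreeAX$ to $\AthreeprAX$ under completeness) is the one nontrivial step the paper suppresses.
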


\newtheorem*{aaeurepc}{Theorem \ref{thm:aaeurep}c}
\begin{aaeurepc} Under the assumptions of Theorem \ref{thm:aaeurep}, and
with a representation of $\succeq$ in which
$S=\AtomsAX(T_0)\times\EXAXp(\succeq)$ and $\U$ is a singleton $\{u\}$,
$\succeq_t$ is represented by $\P \mid t$ and $\{u\}$, and $\succeq
\otimes\, (a,b)$ is represented by $\P\mid(\succeq\otimes\,(a,b))$
and~$u$.
\end{aaeurepc}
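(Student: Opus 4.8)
The plan is to read everything off the explicit representation built in the proof of Theorem~\ref{thm:aaeurep}. There $\succeq$ is represented on $S=\AtomsAX(T_0)\times\EXAXp(\succeq)$ by a single utility $u$ and a set $\P=\{P_{\succeq'}:\succeq'\in\EXAXp(\succeq)\}$ of probability measures indexed by the complete extensions, where each $P_{\succeq'}$ is carried by the single slice $\AtomsAX(T_0)\times\{\succeq'\}$ and its conditional $p'$ on that slice is a single-probability SEU representation $(p',u)$ of the complete order $\succeq'$. Writing $U_a(s)=\sum_o u(o)\rho_{SO}(a)(s)(o)$ for the state-wise expected utility of $a$ (which depends only on the atom coordinate), the construction yields the \emph{intersection property}: for any subrelation satisfying $\AthreeprpAX$ and $\Afour$, $a\succeq b$ iff $a\succeq' b$ for every complete extension $\succeq'$, equivalently iff $\sum_s U_a(s)P_{\succeq'}(s)\ge\sum_s U_b(s)P_{\succeq'}(s)$ for all such $\succeq'$. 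I would import this as the one fact doing the real work; the rest is bookkeeping about conditioning.

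For the test-conditioning claim, I would first note that $\pi_S(t)=\{(\delta,\succeq'):\delta\rimp t\}$ depends only on the atom coordinate. When comparing $\ift{t}{a}{c}$ with $\ift{t}{b}{c}$ under a measure $P_{\succeq'}$, the two acts agree on every state outside $\pi_S(t)$ (both equal $c$ there), so those contributions to the expected utilities cancel; the comparison is therefore independent of $c$, $\succeq_t$ is well defined, and $a\succeq_t b$ iff $\sum_{s\in\pi_S(t)}U_a(s)P_{\succeq'}(s)\ge\sum_{s\in\pi_S(t)}U_b(s)P_{\succeq'}(s)$ for every $\succeq'$. Dividing by the positive constant $P_{\succeq'}(\pi_S(t))$ turns this into $\sum_s U_a(s)Q(s)\ge\sum_s U_b(s)Q(s)$ for $Q=P_{\succeq'}(\,\cdot\mid\pi_S(t))\in\P\mid t$, while the measures with $P_{\succeq'}(\pi_S(t))=0$ give only the trivial $0\ge 0$ and are exactly those discarded in forming $\P\mid t$. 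Hence $\P\mid t$ and $u$ represent $\succeq_t$.

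For the comparison-conditioning claim, the relevant event is $E=\{(\delta,\succeq'):\succeq'\text{ extends }\succeq\otimes(a,b)\}$. The step I would isolate as a lemma is that a complete order $\succeq'\in\EXAXp(\succeq)$ extends $\succeq\otimes(a,b)$ iff $a\succeq' b$: the forward direction is immediate since $(a,b)\in\succeq\otimes(a,b)$, and conversely, if $\succeq'$ contains $\succeq\cup\{(a,b)\}$ and satisfies $\AthreeprpAX$ and $\Afour$, then by minimality it contains the smallest such relation, namely $\succeq\otimes(a,b)$. Thus membership in $E$ is independent of $\delta$, so $E$ is a union of whole slices. Because each $P_{\succeq'}$ lives on its own slice, $P_{\succeq'}(E)=1$ when $a\succeq' b$ and $P_{\succeq'}(E)=0$ otherwise; conditioning therefore leaves every surviving measure unchanged and deletes exactly the $\succeq'$ with $a\not\succeq' b$. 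Hence $\P\mid E=\{P_{\succeq'}:\succeq'\text{ extends }\succeq\otimes(a,b)\}$, and applying the intersection property to $\succeq\otimes(a,b)$ (itself a relation satisfying $\AthreeprpAX$ and $\Afour$) shows that this set, with $u$, represents $\succeq\otimes(a,b)$.

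The two conditioning identities are routine once the slice structure is in hand, so the substance lies in two structural inputs. The first is the intersection property, which I would cite from the proof of Theorem~\ref{thm:aaeurep} rather than reprove. The second, and the point I expect to need the most care, is the extension lemma for $\succeq\otimes(a,b)$: beyond the minimality argument above, I must confirm that the complete $\EXAXp(\succeq)$-extensions containing $(a,b)$ are precisely $\EXAXp(\succeq\otimes(a,b))$, so that $\P\mid E$ really is the representing family. The edge case in which no extension contains $(a,b)$ is harmless: then $\P\mid E$ is empty and $\succeq\otimes(a,b)$ is the full relation, and the empty family vacuously represents it, matching the intersection over the empty family of extensions. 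Finally, I would remark that this is word-for-word the argument behind Theorem~\ref{thm:eurepa}c, with mixtures and $\AthreeprpAX$ in place of $\A$-acts and $\AthreeprAX$, so the randomized language introduces no new idea.
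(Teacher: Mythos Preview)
Your proposal is correct and follows exactly the route the paper takes: the paper's combined proof of Theorems~\ref{thm:aaeurep} and~\ref{thm:aaeurep}c builds the representation on $\AtomsAX(T_0)\times\EXAXp(\succeq)$ with $\P=\{p_{\succeq'}:\succeq'\in\EXAXp(\succeq)\}$, each $p_{\succeq'}$ concentrated on its own slice, and then simply asserts that ``it easily follows that, with this representation, $\succeq_t$ and $\succeq\oplus(a,b)$ can be represented by updating''; you are supplying precisely the details the paper omits, via the same slice structure and the same intersection-of-extensions fact (Lemma~\ref{lem:halfspace}). One small correction that does not affect your argument: in the paper's construction the single utility is $v(\delta,a,\succeq')=u_{\succeq'}(\delta,a)/p(\delta)$, so your $U_a(s)$ actually depends on the $\succeq'$ coordinate as well as the atom; the parenthetical ``depends only on the atom coordinate'' is false but nowhere used, since your $\succeq_t$ step only needs that the two acts coincide off $\pi_S(t)$, and your $\succeq\otimes(a,b)$ step only needs that $(p_{\succeq'},v)$ individually represents $\succeq'$.
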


\newtheorem*{outcomesc}{Theorem \ref{thm:outcomes}c}
\begin{outcomesc} Under the assumptions of Theorem \ref{thm:outcomes},
and with a representation of $\succeq$ in which $S=\AtomsAX(T_0)\times\A_0
\times\EXAXp(\succeq)$ and $\U$ is a singleton $\{u\}$,
$\succeq_t$ is represented by $\P \mid t$ and $u$, and
$\succeq \otimes \, (a,b)$ is represented by
$\P \mid (\succeq \, \otimes \, (a,b))$ and~$u$.
\end{outcomesc}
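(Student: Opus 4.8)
The plan is to follow the pattern of Theorems~\ref{thm:eurepa} and~\ref{thm:aaeurep} (parts~c), reading everything off the explicit representation furnished by Theorem~\ref{thm:outcomes}. Throughout, write $U(a)(s)=\sum_{o}u(o)\rho_{SO}(a)(s)(o)$ for the expected utility of $a$ in state $s$, so that, since $\U=\{u\}$, the representation becomes: $a\succeq b$ iff $\sum_{s}U(a)(s)\,p(s)\ge \sum_{s}U(b)(s)\,p(s)$ for every $p\in\P$. I identify the test $t$ with the event $\pi_S(t)=\{(\delta,a_0,\succeq^*): \delta\rimp t\}$ and the partial order $\succeq\otimes(a,b)$ with the event $E=\{(\delta,a_0,\succeq^*):\succeq^*\in\EXAXp(\succeq),\ a\succeq^* b\}$, using that a complete order in $\EXAXp(\succeq)$ extends $\succeq\otimes(a,b)$ exactly when it ranks $a$ above $b$ (since $\succeq\otimes(a,b)$ is by definition the smallest such order containing $\succeq$ and $(a,b)$).

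For the test-contingent claim, recall that $a\succeq_t b$ holds iff $\ift{t}{a}{c}\succeq\ift{t}{b}{c}$ for some $c$. Plugging the two conditional choices into the representation and using $\rho_{SO}(\ift{t}{a}{c})(s)=\rho_{SO}(a)(s)$ on $\pi_S(t)$ and $=\rho_{SO}(c)(s)$ off $\pi_S(t)$, the difference of the two expected utilities equals $\sum_{s\in\pi_S(t)}(U(a)(s)-U(b)(s))\,p(s)$; the else-branch $c$ cancels entirely. This simultaneously shows that the choice of $c$ is irrelevant and that $\ift{t}{a}{c}\succeq\ift{t}{b}{c}$ holds iff this truncated sum is nonnegative for every $p\in\P$. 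Measures $p$ with $p(\pi_S(t))=0$ impose no constraint and are exactly those discarded in forming $\P\mid t$; for the remainder, dividing by $p(\pi_S(t))$ turns the truncated sum into the expected-utility comparison under $q=p(\,\cdot\,|\,\pi_S(t))$. Hence $a\succeq_t b$ iff $a$ beats $b$ in the representation $(\P\mid t,u)$, which is the first assertion.

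For the preference-update claim, I invoke the structure of the canonical representation built in the proof of Theorem~\ref{thm:outcomes}: since $\succeq$ is the intersection of its complete extensions $\succeq^*\in\EXAXp(\succeq)$, each such $\succeq^*$ carries a single-measure representation $(p_{\succeq^*},u)$ on $\AtomsAX(T_0)\times\A_0$ (the utility being common because $\Aseven^+$ forces $\succeq$ to be complete, hence pinned up to affine transformation, on $O^+$), and $\P$ consists precisely of the measures $\bar p_{\succeq^*}$ obtained by placing $p_{\succeq^*}$ on the slice $\AtomsAX(T_0)\times\A_0\times\{\succeq^*\}$. Because each $\bar p_{\succeq^*}$ is supported on a single such slice and $E$ is a union of whole slices, $\bar p_{\succeq^*}(E)$ equals $1$ when $a\succeq^* b$ and $0$ otherwise. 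Conditioning therefore discards exactly the measures for extensions with $\neg(a\succeq^* b)$ and leaves the others unchanged, so $\P\mid E=\{\bar p_{\succeq^*}:\succeq^*\in\EXAXp(\succeq),\ a\succeq^* b\}=\{\bar p_{\succeq^*}:\succeq^*\in\EXAXp(\succeq\otimes(a,b))\}$. Since $\succeq\otimes(a,b)$ again satisfies the hypotheses of Theorem~\ref{thm:outcomes} and is the intersection of its complete extensions, these are precisely the measures representing it; thus $(\P\mid(\succeq\otimes(a,b)),u)$ represents $\succeq\otimes(a,b)$, giving the second assertion.

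The main obstacle is Part~2, and specifically the reliance on the fine structure of the representation constructed in the appendix---that $\P$ is indexed by the complete extensions in $\EXAXp(\succeq)$ with each measure concentrated on its own slice, and that a partial order satisfying the axioms is recovered as the intersection of those extensions. Part~1 reduces to a one-line cancellation once the representation is written state-wise. I would also check that $\succeq\otimes(a,b)$ inherits $\Afive$, $\Asix^+$, and $\Aseven^+$ from $\succeq$ (these concern only the outcomes in $O^+$ and non-null tests and are preserved under extension, so that Theorem~\ref{thm:outcomes} applies to it), and that the update is consistent, i.e. some extension ranks $a\succeq b$, so that $\P\mid E$ is nonempty; otherwise $\succeq\otimes(a,b)$ is degenerate and the statement is vacuous.
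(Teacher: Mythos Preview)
Your proposal is correct and is precisely the natural way to fill in what the paper leaves implicit: the paper's own proof of Theorem~\ref{thm:outcomes}c is the single sentence ``As usual, it is straightforward to verify that updating works appropriately,'' appended to the construction of the canonical representation. You have supplied that verification, exploiting exactly the two structural facts the construction provides---that each $p_{\succeq'}\in\P$ is concentrated on the slice $\AtomsAX(T_0)\times\A_0\times\{\succeq'\}$, and that the single utility $u(o)=c_o$ is determined by $\succeq$ on $O^+$ alone (via $\Aseven^+$) and hence is common to all extensions---together with the identification $\EXAXp(\succeq\otimes(a,b))=\{\succeq^*\in\EXAXp(\succeq):a\succeq^* b\}$, which follows from the minimality clause in the definition of~$\otimes$.
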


Information in the external world is modeled as a restriction on the set of feasible acts; information in the internal world is adding
comparisons to a the preference relation.  These theorems show that both kinds of information can be modeled within a Bayesian paradigm.

\section{Conclusion}\label{sec:conclusion}
Our formulation of decision problems has several advantages over more
traditional formulations. First, we theorize about only the actual observable choices available to the DM, without having states and outcomes, and without needing to view
choices as functions from states to outcomes. In contrast, in many decision theory
experiments, when the DM is given a word problem, the experimenter has
an interpretation of this problem as a choice among Savage acts. The
experimenter is then really testing whether the DM's choices are
consistent with 
decision theory \emph{given this interpretation}.  

Thus, a joint hypothesis is being tested.  Standard decision theory can be
rejected only if the other part of the joint test---that the
experimenter and DM interpret represent the word problem with identical
Savage acts---is maintained as true.  
As we have shown, given the decision maker's theory, our approach
places restrictions on choices. Thus, although our approach also leads
to a joint hypothesis, we have a framework for reasoning about what the
decision maker knows about the world (a theory) and about what
restrictions on choices arise from SEU given the decision maker's
theory, rather than the experimenter's theory.  
In fact, if an experimenter is willing to take the existence of an SEU representation
as a maintained hypothesis, she can test hypotheses about the DMs theory of the world.

Second, by viewing choices as \emph{syntactic} objects, our approach
allows us to consider DMs who associate different meanings to the same object of choice.  Moreover, that meaning can depend on the DM's theory of the world.  A DM might have a theory that does not recognize
equivalences between certain tests, and thus choices, that may be
obvious to others. This potential difference between a DM's theory of
the world and an experimenter's view of the world provides an
explanation for framing effects, while still allowing us to view a DM
as an expected utility maximizer.  Moreover, since a DM's theory may
not contain all of standard propositional logic, we can model
resource-bounded DMs who cannot discern all the logical consequences
of their choices.  The existence of an SEU representation and the
presence of framing effects are independent once one is free to
choose a state space.

Third, our approach allows us to consider different DMs who use different languages to describe the same phenomena.  To see why this might be important, consider two decision makers who are interested in $100$ shares of Alphabet stock and money (as in Example~\ref{xam3}).
Suppose that one DM considers quantitative issues like the 
price/earnings ratio to be relevant to the future value of Alphabet, while the other considers astrological tables relevant to Alphabet's future value.  The DM who uses astrology might not understand price/earnings ratios (the notion is simply not in his vocabulary) and, similarly, the DM who uses quantitative methods might not understand what it means for the moon to be in the seventh house.  Nevertheless, they can trade Alphabet stock and money, as long as they both have available primitive actions like `buy 100 shares of Alphabet' and `sell 100 shares of Alphabet'. If we model these decision problems in the Savage framework, we would have to think of assets as Savage acts on a common state and outcome space.  Our approach does not require us to tie the DM's decision problems together with a common state space.
Every DM acts as if she has a state space, but these state spaces may
be different. Even if agents agree on the formulas of interest, they
may interpret them completely differently.  Thus, agreeing to
disagree results \cite{Au}, which say  
that DMs with a common prior must have a common posterior (they
cannot agree to disagree) will in general not hold.

A fourth advantage of our approach is more subtle, but potentially profound.  Representation theorems are just that; they merely provide a description of a preference relation in terms of numerical scales.  Decision theorists make no pretense that these representations have anything to do with the cognitive processes by which individuals make decisions.  But to the extent that the language of choices models the language of the DM, we have the ability to interpret the effects of cognitive limitations having to do with the language in terms of the representation.  Our approach allows us to consider the possibility that there may be limitations on the space of choices because some sequence of tests is too computationally costly to verify. Our model of nonstandard test interpretations also takes into account a DM's potential inability to recognize that two choices logically represent the same function.

There is clearly more than could be done to develop our approach
and apply it in various settings. One obvious development would be to
consider decision making over time, which will require us to consider
learning in more detail.  Learning in our 
framework is not just a matter of conditioning, but also learning about
new notions (i.e., becoming aware of new tests).
Note that considering dynamic decision-making will require us to take a
richer collection of objects of choice, a programming language that
allows (among other things) sequential actions (do this, then do that,
then do that). A second direction to consider is multi-agent decision
making.  As we have suggested in examples, once we move to a
multi-agent case, we can consider agents who may use
different languages.   There is clearly a connection here between our
framework and the burgeoning literature on awareness and its
applications to game theory (see, for example, 
\cite{Feinberg04,FH,Hal34,HR06,HMS03,karni2013reverse,MR99} that needs
to be explored.

We have shown how our approach can model framing problems as a consequence of the agent having a different theory from the modeler.  We believe that our approach can also model other `deviations' from rationality
of the type reported by  \citet{Luce90}, such as failures of
particular accounting identities that lead to the requirement that a
DM is indifferent between formally equivalent gambles.  These
deviations can be 
viewed as a consequence of an agent's bounded processing power.  This
will require us to be able to distinguish sentences such as, for
example, $2/3 (1/4 a + 3/4 b) + 1/3 c$ and $1/6 a + 5/6 (3/5 b + 2/5 c)$.
To do that, we need to give semantics to choices that does not view
them as functions from states to distributions over outcomes. 
We leave further exploration of all these issues to future work.


\section*{Appendix}

\begin{proof}[\textbf{\upshape Proof of Proposition
\ref{prop:cancel}}]
First suppose that cancellation holds.  To see that $\succeq$ is
reflexive, take $n=1$ and $a_1 = b_1 = a$ in the cancellation axiom.
The hypothesis of the cancellation axiom clearly holds, so we must have $a \succeq a$. To see that cancellation implies transitivity, consider the pair of sequences $\langle a,b,c\rangle$ and $\langle b,c,a\rangle$.  Cancellation clearly applies.  If $a\succeq b$ and $b\succeq c$, then cancellation implies $a\succeq c$.  

We also need to prove the converse part of the
proposition.  For the converse, suppose that $\succeq$ is reflexive and transitive. By way of contradiction, suppose that $\langle a_1, \ldots, a_n\rangle$ and $\langle b_1, \ldots, b_n\rangle$ are two sequences of minimal cardinality $n$ that violate cancellation; that is, $\{\{a_1, \ldots, a_n\}\}=\{\{b_1, \ldots, b_n\}\}$,
$a_i\succeq b_i$ for $i \in \{1, \ldots, n-1\}$, and it is not the case that $b_n\succeq a_n$

If $n=1$, and $\{\{a\}\} = \{\{b\}\}$, then we must have $a=b$,
and the cancellation postulate holds iff $a \succeq a$, which follows
from our assumption that $\succeq$ is reflexive.

If $n > 1$, since the two multisets are equal, there must be some
permutation $\tau$ of $\{1,\ldots, n\}$ such that $a_{\tau(i)} = b_i$.
Let $\tau^j(1)$ be the result of applying $\tau$ $j$ times, beginning with $1$.  Let $k$ be  the first integer such that $\tau^{k+1}(1)$ is either 1 or~$n$. Then we have the  situation described by the following table, where the diagonal arrow denotes equality.
\begin{center}\begin{tabular}{ccccc}
$a_1$&$\succeq$&$b_1$\\
    &$\swarrow$\\
$a_{\tau(1)}$&$\succeq$&$b_{\tau(1)}$\\
    &$\swarrow$\\
$\vdots$&&$\vdots$\\
    &$\swarrow$\\
$a_{\tau^k(1)}$&$\succeq$&$b_{\tau^k(1)}$
\end{tabular}\end{center}
Note that we must have $k \le n-1$.  If $\tau^{k+1}(1) = 1$, then
$b_{\tau^{k}(i)} = a_1$.  Thus, the multisets $\{\{a_1,\ldots,a_{\tau^k(1)}\}\}$ and $\{\{b_1,\ldots,b_{\tau^k(1)}\}\}$ must be equal.
The sequences that remain after removing
$\{\{a_1,\ldots,a_{\tau^k(1)}\}\}$ from the first sequence and
$\{\{b_1,\ldots,b_{\tau^k(1)}\}\}$ from the second also provide
a counterexample to the cancellation axiom, contradicting the minimality of $n$. Thus, $\tau^{k+1}(1)=n$, and we can conclude by transitivity that $a_1\succeq a_n$.

Continuing on with the iteration procedure starting with $a_{\tau^{k+2}(1)} = a_{\tau(n)} = b_n$, we ultimately must return to $a_1$ and $b_1$, as illustrated in the following table: $a_1$ and $b_1$.
\begin{center}\begin{tabular}{ccccc}
$b_n=a_{\tau^{k+1}(1)}$&$\succeq$&$b_{\tau^{k+1}(1)}$\\
    &$\swarrow$\\
$\phantom{b_n\rightarrow}\vdots$&&$\vdots$\\
    &$\swarrow$\\
$\phantom{b_n\rightarrow} a_{\tau^l(1)}$&$\succeq$&$b_{\tau^1(1)}\rightarrow a_1\mkern-45mu$
\end{tabular}\end{center}
It follows from transitivity that $b_n\succeq a_1$.  By another application of transitivity, we conclude that $b_n\succeq a_n$.  This contradicts the hypothesis that the original sequence violated the cancellation axiom.
\end{proof}

\begin{proof}[\textbf{\upshape Proof of Proposition
      \ref{prop:cancellationbound}}]
Let $PO(C)$ denote all the preference relations $\succeq$ on $C$.  Since
$C$ is finite, so is $PO(C)$. 
Let $PO(C,n)$ denote the subset of $PO(C)$ consisting of all
preference relations $\succeq$ satisfying $SC_1, \ldots, SC_n$.
$PO(C,1), PO(C,2), \ldots$ is clearly a nondecreasing sequence of sets
of preference relations.  Since $PO(C)$ is finite, the sequence must
stabilize at some point; that is, there must exist some $N$ such that
$PO(C,n) = PO(C,N)$ for all $n \ge N$.  It follows that if a
preference relation $\succeq$ on $C$ satisfies $SC_1, \ldots, SC_N$, then
it satisfies $SC_n$ for all $n$.
\end{proof}

\begin{proof}[\textbf{\upshape Proof of Proposition
\ref{prop:savagecancel}}]
Take $\<a_1,a_2\>=\<a_Tc,b_Tc'\>$ and take $\<b_1,b_2\> = \<b_T c,
a_T c'\>$.  Note that for each state $s \in T$, $\{\{a_Tc(s),b_T c'(s)
\}\}=\{\{a(s),b(s)\}\}=\\
\{\{b_T c(s),a_T c'(s)\}\}$, and for each state $s \notin T$, $\{\{a_T c(s),b_T c'(s)\}\}=\\
\{\{c(s),c'(s)\}\}=\{\{b_T c(s),a_T c'(s)\}\}$.  Thus, we can apply statewise cancellation to get that if $a_Tc \succeq b_Tc$, then $a_Tc' \succeq b_Tc'$.
\end{proof}

\begin{proof}[\textbf{\upshape Proof of Proposition
\ref{prop:cancelequivalent}}]
Suppose the hypotheses of extended statewise cancellation hold.  If $b_n \succeq a_n$, we are done.  If not, by \Aone, $a_n\succeq b_n$.  But then the hypotheses of statewise cancellation hold, so again, $b_n \succeq a_n$.
\end{proof}

\begin{proof}[\textbf{\upshape Proof of Theorem
\ref{thm:cancellationchar1}}]
Suppose that $\succeq$ satisfies \ecm.  Then it satisfies \ca, and so from Proposition~1, $\succeq$ is reflexive and transitive.  To show that $\succeq$ satisfies rational mixture independence, suppose that $a \succeq b$ and $r = m/n$.  Let $a_1=\cdots = a_m = a$ and $a_{m+1} = \cdots = a_{m+n} = rb + (1-r)c$; let $b_1 = \cdots = b_m = b$ and $b_{m+1} = \cdots = b_{m+n} = ra + (1-r)c$. Then $\sum_{i=1}^{m+n}a_i = \sum_{i=1}^{m+n}b_i$, and so $ra+(1-r)c\succeq rb+(1-r)c$.

Similarly, if $ra+(1-r)c\succeq rb+(1-r)c$, then applying \ecm\ to the same sequence of acts shows that $a \succeq b$.

For the converse, suppose that $\succeq$ is reflexive, transitive, and
satisfies rational mixture independence. Suppose that
$\<a_1 ,\ldots,  a_n\>$ and $\< b_1,  \ldots,  b_n\>$ are sequences of
of elements of $C$ such that
$a_1 + \cdots + a_n = b_1 + \cdots + b_n$, $a_i \succeq b_i$ for $i = 1,
\ldots, n-k$, $a_{k+1} = \ldots = a_n$, and $b_{k+1} = \ldots = b_n$.
Then from transitivity and rational mixture independence we get that
\begin{multline*}
\frac{1}{n}(a_1 + \cdots + a_n)\succeq
\frac{1}{n}(b_1 + \cdots +
b_{k} + a_{k+1} + \cdots + a_{n})\\
=\frac{1}{n}(b_1 + \cdots + b_k) +
\frac{n-k}{n}a_n.
\end{multline*}
Since $b_{k+1} = \ldots = b_n$ and $a_1 + \cdots + a_n = b_1 + \cdots +b_n$, we have that
\begin{equation*}
\frac{1}{n}(b_1 + \cdots + b_k) +\frac{n-k}{n}(b_n)=
\frac{1}{n}(b_1 + \cdots + b_n)=
\frac{1}{n}(a_1 + \cdots + a_n).
\end{equation*}
Thus, by transitivity,
$$
\shortv{\begin{array}{ll}
&}
\frac{1}{n}(b_1 + \cdots + b_k) +\frac{n-k}{n}(b_n)
\fullv{\succeq}
\shortv{\\ \succeq &}
\frac{1}{n}(b_1 + \cdots + b_k) +\frac{n-k}{n}(a_n).
\shortv{\end{array}}
$$
By rational mixture  independence, it follows that \mbox{$b_n \succeq a_n$}.
\end{proof}

\begin{proof}[\textbf{\upshape Proof of Proposition
\ref{prop:equivalence}}]
Let $S=\AtomsAX(T_0)$, the set of atoms consistent with
$\T$, let $O$ be $\A_0$, the set of primitive choices, and define
$\rho_{SO}^0(c)$ to be the constant function $c$ for a primitive
choice $c$.  It is easy to see that $\rho_{SO}(c) = f_c$ for all
choices $c$.  If $a\equiv_{\T}b$, then $\rho_{SO}(a)=\rho_{SO}(b)$, so we must have $f_a = f_b$.  Now apply $\AthreeAX$ with $a_1=a$ and $b_1=b$ to get $b\succeq a$, and then reverse the roles of $a$ and $b$.
\end{proof}

\begin{proof}[\textbf{\upshape Proof of Proposition
\ref{prop:arch}}]
 For part (a), suppose that $a \succeq b$, and
$c$ is an arbitrary act. By Theorem~\ref{thm:cancellationchar1}, rational mixture independence holds, so we have $ra + (1-r)c \succeq rb + (1-r)c$ for all rational $r$.  By $\Afour$, we have $ra + (1-r)c \succeq rb+(1-r)c$ for all real~$r$.   Conversely, suppose that $ra + (1-r)c \succeq rb + (1-r)c$ for some real~$r$.  If $r$ is rational, it is immediate from rational mixture independence that $a \succeq b$.  If $r$ is not rational, choose a rational $r'$ such that $0 < r' < r$.  Then we can find a sequence of rational numbers $r_n$ such that $r_n r$ converges to $r'$.  By rational mixture independence, $r_n (ra + (1-r)c)+(1-r_n) c \succeq r_n (rb + (1-r)c) + (1-r_n) c$.  By $\Afour$, it follows that $r'a + (1-r') c \succeq r' b + (1-r') c$.  Now by rational mixture independence, we have $a \succeq b$, as desired.

For part (b), suppose that $a \succ b \succ c$.  Mixture independence
(which follows from $\AthreeprpAX$ and $\Afour$, as we have observed)
implies that, for all $r\in (0,1)$, $a \succ ra + (1-r)c$.  To see that
$r a+(1-r)c \succ b$ for some $r \in (0,1)$, suppose not.  Then, by
$\Aone$, $b \succeq r a+(1-r)c$ for all $r \in (0,1)$, and by
$\Afour$, we have that $b \succeq a$, contradicting our initial assumption.  The remaining inequalities follow in a similar fashion.
\end{proof}

We now prove the representation theorems: Theorems~\ref{thm:eurepa}, \ref{thm:aaeurep}, and \ref{thm:outcomes}.   They all use essentially the same technique.  It is convenient to start with Theorem~\ref{thm:aaeurep}.  The first step is to get an additively separable utility representation
for AA acts on a state space $S$ with outcome space $O$.  This result is presented 
in Theorem \ref{thm:rep4} which is
somewhat novel because we use extended mixture cancellation and \Afour\
rather than independence and \Arch, and because $\succeq$ can be
incomplete.

\begin{theorem}\label{thm:rep4}
A preference relation $\succeq$ on a set $C$ of mixture-closed
AA acts
mapping a finite set $S$ of states
to distributions over a finite set $O$ of outcomes satisfies Extended
Mixture Cancellation and \Afour\
iff there
exists a
set $\mathcal{U}$ of utility functions on $S\times O$ such that
$a\succeq b$ iff
\begin{equation}\label{eq:char1}
\sum_{s\in S}\sum_{o\in O} u(s,o) a(s)(o)\geq
\sum_{s\in S}\sum_{o\in O} u(s,o) b(s)(o)
\end{equation} for all $u\in\U$.
Moreover, $\succeq$ also satisfies \Aone\ iff
we can take $\U$ to be a singleton $\{u\}$.  In this case, $u$ is unique up
affine transformations: if $u'$ also satisfies (\ref{eq:char1}),
then there exist $\alpha$ and $\beta$ such that $u' =
\alpha u + \beta$.
\end{theorem}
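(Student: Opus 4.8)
The plan is to identify each AA act $a$ with its coordinate vector $(a(s)(o))_{s,o}$, so that $C$ becomes a convex (mixture-closed) subset of the finite-dimensional space $\R^{S\times O}$, mixing becomes convex combination, and the target representation becomes a multi-utility one: $a\succeq b$ iff $\langle u,a-b\rangle\ge 0$ for all $u\in\U$, where $\langle u,x\rangle=\sum_{s,o}u(s,o)x(s,o)$. The easy direction (representation $\Rightarrow$ axioms) I would dispatch by linear algebra. If such $\U$ exists, the graph equals $\bigcap_{u\in\U}\{(a,b):\langle u,a-b\rangle\ge 0\}$, an intersection of closed half-spaces, hence closed, giving \Afour. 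For \ecm, if $\sum_i a_i=\sum_i b_i$ with $a_i\succeq b_i$ for $i\le k$, and $a_{k+1}=\cdots=a_n$, $b_{k+1}=\cdots=b_n$, then for each $u$ the identity $\sum_i\langle u,a_i-b_i\rangle=0$ splits into the first $k$ nonnegative terms plus $(n-k)\langle u,a_n-b_n\rangle$, forcing $\langle u,a_n-b_n\rangle\le 0$ for every $u$, i.e.\ $b_n\succeq a_n$.

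For the substantive direction (axioms $\Rightarrow$ representation) I would first extract the ambient structure: by Theorem~\ref{thm:cancellationchar1}, \ecm\ gives reflexivity, transitivity, and rational mixture independence, and with \Afour, Proposition~\ref{prop:arch}(a) upgrades this to full mixture independence. The first key observation is that $\succeq$ depends only on differences: if $a-b=a'-b'$ then $\tfrac12 a+\tfrac12 b'=\tfrac12 a'+\tfrac12 b$, so mixing the two sides of each comparison with $b'$ (respectively $b$) and invoking mixture independence shows $a\succeq b\iff a'\succeq b'$. This lets me define $\mathcal P=\{a-b:a,b\in C,\ a\succeq b\}$; using mixture independence and transitivity one checks that $\mathcal P$ is convex and contains $0$. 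I would then set $K=\mathrm{cone}(\mathcal P)$ and $\U=K^{*}=\{u:\langle u,\delta\rangle\ge 0\text{ for all }\delta\in\mathcal P\}$, and reduce everything to the claim that, for $a,b\in C$, $a\succeq b$ iff $a-b\in\overline{K}$. Granting the claim, the bipolar theorem ($\overline{K}=K^{**}$) yields precisely $a\succeq b\iff\langle u,a-b\rangle\ge 0$ for all $u\in\U$.

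The forward inclusion of the claim is immediate, so the work is the reverse inclusion, and I would split it in two. For $a-b\in K$ (before closure), write $a-b=\lambda(a'-b')$ with $a'\succeq b'$; setting $c=\tfrac1\lambda a+(1-\tfrac1\lambda)b$ when $\lambda\ge 1$ (and $c=a$ when $\lambda<1$) gives $c-b=a'-b'$, so $c\succeq b$ by difference-invariance, and mixture independence applied to the mixture of $a$ and $b$ with reference point $b$ scales this back up to $a\succeq b$. The hard part will be passing to the closure, and this is exactly where \Afour\ is needed. Given $a-b\in\overline{K}$, I would fix $p\in\mathrm{relint}(\mathcal P)$, say $p=\bar a-\bar b$ with $\bar a\succeq\bar b$, and set $a_t=(1-t)a+t\bar a$ and $b_t=(1-t)b+t\bar b$; these lie in $C$ by convexity and converge to $a,b$ as $t\to 0^{+}$. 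Since $a_t-b_t=(1-t)(a-b)+tp$ lies in $\mathrm{relint}(\overline{K})=\mathrm{relint}(K)\subseteq K$ for $t\in(0,1]$, the previous step gives $a_t\succeq b_t$, and \Afour\ delivers $a\succeq b$ in the limit. I expect the careful verification that the approximants stay in $C$ while converging to $(a,b)$ to be the main obstacle; the convex-combination choice above is what makes \Afour\ directly applicable.

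Finally I would settle the completeness/singleton equivalence and uniqueness. If $\U=\{u\}$, then $\succeq$ ranks acts by the single real number $\langle u,\cdot\rangle$ and is therefore complete, so \Aone\ holds. Conversely, under \Aone, Proposition~\ref{prop:arch}(b) supplies \Arch, so $\succeq$ is a complete preorder on the mixture space $C$ satisfying mixture independence and the Archimedean axiom; the classical mixture-space theorem then yields a single affine functional $U$, i.e.\ a single $u$ with $U(a)=\langle u,a\rangle$ representing $\succeq$, and uniqueness of $U$ up to positive affine transformation gives the stated conclusion $u'=\alpha u+\beta$.
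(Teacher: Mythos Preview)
Your approach is essentially the paper's: both form the convex cone generated by $\mathcal P=\{a-b:a\succeq b\}$ and represent $\succeq$ via the dual cone (equivalently, the half-spaces containing it), invoking the classical mixture-space theorem for the complete case.

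Two small remarks. Your parenthetical ``$c=a$ when $\lambda<1$'' is a slip: with $c=a$ you get $c-b=\lambda(a'-b')\ne a'-b'$. The clean fix is that $0\in\mathcal P$ and $\mathcal P$ is convex, so $\lambda p\in\mathcal P$ whenever $0\le\lambda\le1$, and difference-invariance applies directly without any $c$. Conversely, your relint-plus-\Afour\ argument for the closure step is actually \emph{more} careful than the paper's Lemma~\ref{lem:cone}, which asserts the cone is closed ``because $D^+$ is finite''; for a mixture-closed $C$ this is false, and what is really needed is exactly your argument that $a-b\in\overline K$ with $a,b\in C$ forces $a\succeq b$.
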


\begin{proof}
In the totally ordered case, this result is well known.  Indeed, for a
preference relation $\succeq$ that satisfies $\Aone$,
Proposition~7.4 of \cite{Kreps} shows that such a representation holds
iff $\succeq$ satisfies $\Arch$, mixture independence, transitivity, and reflexivity.   Theorem~\ref{thm:cancellationchar1} and
Proposition~\ref{prop:arch} show that if $\succeq$ satisfies extended
mixture cancellation, $\Aone$, and $\Afour$, then these properties hold, so
there is a  
representation.  Conversely, if there is such a representation, then
all these properties are easily seen to hold.  It follows that in the
presence of $\Aone$, extended mixture cancellation and $\Afour$ are equivalent
to these properties.   However, since we do not want to assume $\Aone$, we must work a little harder.  Fortunately, the techniques we use will be useful for our later results.

To see that the existence of a representation  implies Extended Mixture Cancellation and \Afour, first consider Extended Mixture Cancellation, and suppose that $\langle
a_1,\ldots,a_n\rangle$ and $\langle b_1\ldots,b_n\rangle$ are such that
$a_1 \succeq b_1$,
\ldots, $a_k \succeq b_k$, $a_{k+1}=\ldots=a_n$,
$b_{k+1}=\cdots=b_n$, and $a_1 + \cdots + a_n = b_1 + \cdots + b_n$.
For all $u \in \U$, for $i=1,\ldots,k$, we have
\begin{equation*}
\sum_{s\in S} u(s,a_i(s))\geq\sum_{s\in S} u(s,b_i(s)).
\end{equation*}
Since $a_1 + \cdots + a_n = b_1 + \cdots + b_n$, for
all $s\in S$, it must be that, for all $u\in\U$,
\begin{equation*}
\sum_{i=1}^n\sum_{s\in S}u(s,a_i(s))=\sum_{i=1}^n\sum_{s \in S}
u(s,b_i(s)).
\end{equation*}
Thus, for all $u\in \U$,
\begin{equation*}
\sum_{i=k+1}^n\sum_{s\in S}u(s,a_i(s))\leq\sum_{i=k+1}^n\sum_{s\in S}u(s,b_i(s)).
\end{equation*}
Since $a_{k+1}=\ldots=a_n$ and $b_{k+1}=\ldots=b_n$, it easily follows
that, for all $u\in U$,
\begin{equation*}
\sum_{s \in S} u(s,a_n(s))\leq\sum_{s \in S} u(s,b_n(s)).
\end{equation*}
Thus $b_n\succeq a_n$, as desired.
The fact that \Afour\ holds is straightforward, and left to the reader.

For the `if' direction, recall that we can view the elements of $C$ as
vectors in $\Nspace{|S| \times |O|}$.  For the rest of this proof, we
identify elements of $C$ with such vectors.  Let $D = \{a-b: a,b \in
C\}$, and let $D^+ = \{a -b: a \succeq b\}$.  Recall that a
\emph{(pointed) cone} in $\Nspace{|S| \times |O|}$ is a set $\CC$ that is
closed under nonnegative linear combinations, so that if $c_1, c_2 \in
\CC$  and $\alpha, \beta \ge 0$, then $\alpha c_1 + \beta c_2 \in CC$.
We need the following lemma.

\begin{lemma}\label{lem:cone} There exists a closed convex cone $\CC$
such that $D^+ =
\CC \cap D$.
\end{lemma}

\begin{proof}
Let $\CC$ consist of all vectors of the form $\alpha_1 d_1 + \cdots +
\alpha_n d_n$ for some $n > 0$, where $d_1, \ldots, d_n \in D^+$ and $\alpha_1,
\ldots, \alpha_n \ge 0$.  Clearly $\CC$ is a convex cone, and closed because $D^+$ is finite.   Also, $D^+ \subseteq \CC
\cap D$.  For the opposite inclusion, suppose that $\alpha_1 d_1 +
\cdots + \alpha_n d_n = d$, where $d_1, \ldots, d_n \in D^+$, $d \in D$,
and $\alpha_1, \ldots, \alpha_n \ge 0$.
Thus, there must exist
 $a_1, \ldots, a_n, b_1, \ldots, b_n, a, b \in C$ such
that $a-b = d$, $a_i - b_i = d_i$, and $a_i \succeq b_i$ for $i =
1,\ldots, n$. We want to show that $d \in D^+$ or, equivalently, that $a
\succeq b$.  Let $r = \alpha_1 + \cdots + \alpha_n + 1$.
Since $C$ is mixture-closed, both $(\alpha_1/r) a_1 + \cdots + (\alpha_n/r)
a_n + (1/r) b \in C$ and
$(\alpha_1/r) b_1 + \cdots + (\alpha_n/r) b_n + (1/r) a$ are in  $C$.
Moreover, since
$\alpha_1 d_1 + \cdots + \alpha_n d_n = d$ and $a_i \succeq b_i$ for $i =
1,\ldots, n$, it easily follows from mixture independence (which is a
consequence of extended mixture cancellation) that
$(\alpha_1/r) b_1 + \cdots + (\alpha_n/r) b_n + (1/r) a =
(\alpha_1/r) a_1 + \cdots + (\alpha_n/r) a_n + (1/r) b \succeq
(1/r) b_1 + \cdots + (1/r) b_n + (1/r) b.$
Another application of mixture independence gives us $a\succeq b$, as desired.
\end{proof}

Returning to the proof of Theorem~\ref{thm:rep4}, note that
if $a \succeq b$ for all $a$ and $b$, then $D^+ = D$, and we can take
$\CC$ to be the whole space.  If $\succeq$ is nontrivial, then $\CC$ is
not the whole space.  It is well known \cite{Rockafellar}
that every closed
cone that is
not the whole space is the intersection of closed half-spaces (where a
half-space is characterized by a vector $u$ such and consists of
all the vectors $x$ such that $u\cdot x \ge 0$).  Given our
identification of elements of $C$ with vectors, we can identify
the vector $u$ in $\Nspace{|S| \times |O|}$ characterizing a half-space
with a (state-dependent) utility function, where $u(s,o)$ is the $(s,o)$
component of the vector $u$.  If $\CC$ is the whole space, we can get a
representation by simply taking $\U$ to consist of the single utility
function such that $u(s,o) = 0$ for all $(s,o) \in S \times O$.
Otherwise, we can take $\U$ to consist of the utility functions
characterizing the half-spaces containing $\CC$.  It is easy to see that
for $a, b \in C$, we have that
$a \succeq b$ iff $a - b \in D^+$ iff $a - b \in \CC$  iff
$u \cdot (a-b) \ge 0$ for every half-space $u$ containing
$\CC$; i.e. iff (\ref{eq:char1}) holds.
\end{proof}

To prove Theorem~\ref{thm:aaeurep}, the following lemma, which shows that
we can identify complete preference relations with half-spaces, is also
useful. Given a subset $R$ of $\Nspace{|S| \times |O|}$, define the
relation $\succeq_R$ on $C$ by taking $a \succeq_R b$ iff $a-b \in R$.

\begin{lemma}\label{lem:halfspace}
$\EXAX = \{\succeq_R: R \mbox{ is either a half-space containing $\CC$
or the full space}\}$.
\end{lemma}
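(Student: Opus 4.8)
The plan is to route everything through Theorem~\ref{thm:rep4}, which says that a \emph{complete} preference order satisfying \ecm\ and $\Afour$ is represented by a single (state-dependent) utility, and to identify such a utility with the vector defining a half-space. Throughout I keep the notation of the proof of Theorem~\ref{thm:rep4}: elements of $C$ are regarded as vectors in $\Nspace{|S|\times|O|}$, $D=\{a-b:a,b\in C\}$, $D^+=\{a-b:a\succeq b\}$, and $\CC$ is the closed convex cone of Lemma~\ref{lem:cone}, which by its construction is the set of all nonnegative linear combinations of elements of $D^+$, hence the smallest convex cone containing $D^+$. Recall also that $a\succeq_R b$ means $a-b\in R$, and that $\EXAX$ here denotes the complete orders extending $\succeq$ and satisfying \ecm\ and $\Afour$.

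For the inclusion of $\EXAX$ in the right-hand set, I would start from such a complete order $\succeq'$. Being complete it satisfies $\Aone$, so by Theorem~\ref{thm:rep4} it is represented by a single utility $u$ on $S\times O$; viewing $u$ as a vector this reads $a\succeq' b$ iff $u\cdot a\ge u\cdot b$, i.e. iff $u\cdot(a-b)\ge 0$. Setting $R=\{x:u\cdot x\ge 0\}$, which is a half-space when $u\ne 0$ and the full space when $u=0$, gives $\succeq'=\succeq_R$ immediately. It then remains only to check $\CC\subseteq R$: each $d\in D^+$ has the form $a-b$ with $a\succeq b$, hence $a\succeq' b$ and $u\cdot d\ge 0$, so $D^+\subseteq R$; since $R$ is closed under nonnegative linear combinations and $\CC$ is the smallest convex cone containing $D^+$, we conclude $\CC\subseteq R$.

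For the reverse inclusion, let $R$ be either a half-space $\{x:u\cdot x\ge 0\}$ containing $\CC$ or the full space (the case $u=0$). Then $a\succeq_R b$ iff $u\cdot a\ge u\cdot b$, so $\succeq_R$ is exactly the relation represented by the singleton $\U=\{u\}$ in the sense of (\ref{eq:char1}); in particular it is reflexive, transitive, and complete, and by Theorem~\ref{thm:rep4} (existence of a representation forces \ecm\ and $\Afour$) it satisfies both axioms. Finally $\succeq_R$ extends $\succeq$, since $a\succeq b$ gives $a-b\in D^+\subseteq\CC\subseteq R$, i.e. $a\succeq_R b$. Hence $\succeq_R\in\EXAX$.

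I expect the argument to be essentially bookkeeping once Theorem~\ref{thm:rep4} is available; the one point deserving care is that the correspondence passes through the utility vector $u$ rather than through $R$ directly. Because $u$ is unique only up to positive affine transformation, and because the differences $a-b$ all lie in the mass-zero subspace on which an additive constant has no effect, distinct half-spaces can induce the very same order $\succeq_R$. This is harmless here: the lemma asserts only an equality of \emph{sets of relations}, not a bijection between cones and orders, so the non-uniqueness of the representing half-space does not matter. The degenerate full-space case, corresponding to the trivial order and $u=0$, is subsumed by the same computation.
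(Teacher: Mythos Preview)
Your proof is correct. The forward inclusion (half-spaces give elements of $\EXAX$) is essentially the paper's argument, just packaged as an appeal to the necessity direction of Theorem~\ref{thm:rep4} rather than verifying \Aone, cancellation, and \Afour\ by hand.

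The reverse inclusion, however, takes a genuinely different route. You invoke the ``moreover'' clause of Theorem~\ref{thm:rep4}: since $\succeq'$ is complete it is represented by a single $u$, and then $R=\{x:u\cdot x\ge 0\}$ is immediately the desired half-space (or full space). The paper instead goes back to Lemma~\ref{lem:cone}, forms the cone $\CC'$ of $\succeq'$, picks an \emph{arbitrary} half-space $H\supseteq\CC'$, and then argues by contradiction---using the Archimedean property---that $\succeq'=\succeq_H$. Your approach is shorter and cleaner, and legitimately available since Theorem~\ref{thm:rep4} has already been proved in full; the paper's argument is more self-contained in that it uses only Lemma~\ref{lem:cone} and the Archimedean consequence, and along the way establishes the slightly stronger geometric fact that \emph{every} half-space containing $\CC'$ already determines $\succeq'$. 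Your closing remark about non-uniqueness of $u$ (and hence of $R$) is exactly the right caveat and is consistent with that stronger fact.
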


\begin{proof} If $R$ is the full space, then $\succeq_R$ is the trivial
relation, so clearly $\succeq_R \, \in \EXAX(\succeq)$.  If $R$ is a
half-space $H$ containing $\CC$ and $H$ is characterized by $u$, then
$\succeq_H$ extends $\succeq$, since $\CC \subseteq H$.
To see that $\succeq_H$ satisfies $\Aone$, observe that if $(a,b)
\notin \, \succeq_H$, then $u \cdot (a-b) < 0$, so $u \cdot
(b-a) > 0$, and $b \succeq_H a$.   To see that $\succeq_H$ satisfies
$\AthreeprAX$, suppose that $a_1 + \cdot + a_n = b_1 + \cdot + b_n$,
and $a_i \succeq_H b_i$ for $i = 1, \ldots, n-1$.  Thus,
$u \cdot (a_1 + \cdot + a_n ) = u \cdot (b_1 + \cdot b_n ) $,
and $u \cdot (a_i - b_i) \ge 0$ for $i = 1,\ldots, n-1$.  It follows
that $(b_n - a_n) \cdot u \ge 0$, so $b_n \succeq_H a_n$.
Finally, for \Afour, it is clear that if $(a_n,b_n) \rightarrow (a,b)$,
and $u \cdot (a_n - b_n) \ge 0$, then
$u \cdot (a - b) \ge 0$, so $a \succeq_H b$.
Thus, $\succeq_H \, \in \EXAX(\succeq)$.

For the opposite inclusion, suppose that $\succeq' \, \in
\EXAX(\succeq)$.  Let $\CC'$ be the cone determined by $\succeq'$, as in
Lemma~\ref{lem:cone}.  Clearly $\CC \subseteq \CC'$.  If $\CC'$ is the
full space, then we are done, since $\succeq' = \succeq_{\CC'}$.
Otherwise, $\CC'$ is the intersection of half-spaces.
Choose a half-space $H$ such that $\CC' \subseteq H$.  We claim that
$\succeq' = \succeq_{H}$.  Suppose not.  Since $\CC' \subseteq H$, we
must have $\succeq' \, \subseteq \, \succeq_{H}$.   There must exist
$a, b \in C$ such that $a \succeq_{H} b$ and $a {\not{\succeq}}' b$.
Since $\succeq'$ is complete, we must have $b \succ' a$.  Thus,
$b \succeq_{H} a$, so $a \sim_H b$.  Since $H$ is not the full space,
there must be some $c$ such that $b \, {\not\sim}_H c$.  Suppose
that $c \succ_H b$.  We must have $c \succ b$, since otherwise $b
\succeq' c$, and it follows that $b \sim _H c$.  By the Archimedean
property (which holds by Proposition~\ref{prop:arch}), since $c \succ' b
\succ' a$, there exists $r > 0$ such that $b \succ' rc + (1-r)a$.
Thus we must have $b \succeq_H rc + (1-r)a \succeq_H b$.  But this
contradicts the  assumption that $c \succ_H b \sim_H a$.  We get a
similar contradiction if $b \succ_H c$, since then $a \succ_H c$.
\end{proof}

\begin{proof}[\textbf{\upshape Proof of Theorems \ref{thm:aaeurep} and \ref{thm:aaeurep}c}]
It is easy to check that if there is a constructive SEU representation
of $\succeq$, then $\succeq$ satisfies $\AthreeprpAX$ and $\Afour$.
For the converse, suppose that $\succeq$ satisfies $\AthreeprpAX$ and
$\Afour$.  Take  $S=\AtomsAX(T_0)$ and $O' =\A_0$.  Define
$\pi_{S}(t)$ to be the set of all atoms $\delta$ in $\AtomsAX(T_0)$ such
that $\delta \rimp t$.  Define $\rho^0_{SO'}(a)$ to be the constant
function $a$ for a primitive choice $a$.  It is easy to see that
$\rho_{SO'}(a) = f_a$ for all choices $a \in C$.  Define a preference
relation $\succeq_S$ on the AA acts of the form $\rho_{SO'}(a)$ by taking
$f_a \succeq_S f_b$ iff $a\succeq b$.
The fact that $\succeq_S$ is well defined follows from
Proposition~\ref{prop:equivalence},  for if $f_a = f_{a'}$, then it
easily follows that $a \equiv_{\T} a'$, so $a \sim a'$.
Clearly $\succeq_S$ satisfies \ecsa, and satisfies \Aone\ iff $\succeq$ does.
Thus, by Theorem~\ref{thm:rep4}, there is an additively separable
representation of $\succeq$.

Now we adjust the state and outcome spaces to get a constructive SEU
representation.  Suppose first that \Aone\ holds.  Take
$S=\AtomsAX(T_0)$ and take $O=\AtomsAX(T_0)\times\A_0$.  For a primitive
choice $a\in\A_0$, define $\rho^0_{SO}(a)(\delta)=(\delta,a)$.  To
complete the proof, it clearly suffices to find a probability measure
$p$ on $\AtomsAX(T_0)$ and a utility function $v$ on
$\AtomsAX(T_0)\times\A_0$ such that $u(\delta,a)=p(\delta )v(\delta,a)$,
where
$u$ is the state-dependent utility function whose existence is guaranteed
by Theorem~\ref{thm:rep4}.  This is accomplished by taking
\emph{any} probability measure $p$ on $\AtomsAX(T_0)$ such that for all
atoms $\delta$, $p(\delta)>0$, and taking
$v(\delta,a)=u(\delta,a)/p(\delta)$.

If \Aone\ does not hold, then proceed as above, using Theorem
\ref{thm:rep4} to get an entire set $\mathcal{U'}$ of functions
$u:\AtomsAX(T_0) \times \A_0 \to\R$, and a single probability
distribution $p$ that assigns positive probability to every atom.  Let
$\mathcal{U}$ consist
of all utility functions $u$ such that there exists some $u' \in
\mathcal{U}'$ such that $u(\alpha,\delta)=
u'(\delta,a)/p(\delta)$.
In this representation, again, the state space is $\AtomsAX(T_0)$.

We now give a representation using a single utility function.
Let $S' = \AtomsAX(T_0)\times\EXAX(\succeq)$ and let $O'' = \AtomsAX(T_0)\times\A_0 \times \EXAX(\succeq)$.	
Define \mbox{$\rho^0_{S'O''}(a)(\delta,\succeq') = (\delta,a,\succeq')$.}
For $t\in T_0$, define $\pi_{S'}(t) = \pi_S(t)
\times\EXAX(\succeq)$.  As before, let $\U'$ be the set of utility
functions on $\AtomsAX(T_0) \times \A_0$ that represent
\mbox{$\succeq$}.  Lemma~\ref{lem:halfspace} shows that $\U'$
consists of one utility function $u_{\succeq'}$ for every total order
$\succeq' \in \EXAX(\succeq)$.  Again,
fix a
probability measure $p$ on $\AtomsAX(T_0)$ such that $p(\delta) > 0$
for all $\delta \in \AtomsAX(T_0)$.   For each relation $\succeq' \in
\EXAX(\succeq)$,
define $p_{\succeq'}$ on $\AtomsAX(T_0) \times
\EXAX(\succeq)$
by taking $p_{\succeq'}(\delta,\succeq'') = p(\delta)$ if $\succeq' =
\succeq''$, and $p_{\succeq'}(\delta,\succeq'') = 0$ if $\succeq' \ne
\succeq''$.
Let $\P = \{p_{\succeq'}: \succeq' \in \EXAX(\succeq)\}$.
Define $v(\delta,a,\succeq') =
u_{\succeq'}(\delta,a)/p(\delta)$.
It is easy  to see that $\P \times \{v\}$ represents $\succeq$.
Moreover, it easily follows that, with this representation, $\succeq_t$
and $\succeq \oplus\, (a,b)$ can be represented by updating.
\end{proof}

Now we show how the ideas in this proof can be modified to prove
Theorems~\ref{thm:eurepa} and~\ref{thm:eurepa}c.

\begin{proof}[\textbf{\upshape Proof of Theorems~\ref{thm:eurepa} and
\ref{thm:eurepa}c}]
Again, it is easy to check that if there is a constructive SEU
representation of $\succeq$, then $\succeq$ satisfies $\AthreeprAX$.

For the converse, suppose that we are given a preference relation $\succeq$ that satisfies
$\AthreeprAX$.  The structure of the proof is identical to that of
Theorem~\ref{thm:aaeurep}.  We first prove an analogue of
Theorem~\ref{thm:rep4}.

\begin{theorem}\label{thm:staterep}
A preference relation $\succeq$ on a  set $C$ of Savage acts
mapping a finite set $S$ of states
to a finite set $O$ of outcomes satisfies \ecsa\
iff there
exists a
set $\mathcal{U}$ of utility functions on $S\times O$ such that
$a\succeq b$ iff
\begin{equation}
  \sum_{s\in S}u\bigl(s,a(s)\bigr)\geq \sum_{s\in S}u\bigl(s,b(s)\bigr)
  \mbox{for all $u\in\mathcal{U}$.}
\end{equation}
Moreover, $\succeq$ satisfies \Aone\ iff $\U$ can be
chosen to be a singleton.%
\footnote{This result is a generalization of Theorem 4.1 of
  \citet{Fishburn70}, which can be viewed as considering
  the case where $|S| = 1$ and the preference relation $\succeq$ is
  complete (i.e., satisfies $\Aone$).}
\end{theorem}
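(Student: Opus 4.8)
The plan is to mimic the proof of Theorem~\ref{thm:rep4}, with the role played there by mixture independence taken over by \ecsa\ itself. First I would identify each Savage act $a:S\to O$ with the $0/1$ vector $\bar a\in\Nspace{|S|\times|O|}$ whose $(s,o)$ component is $1$ if $a(s)=o$ and $0$ otherwise; then $\sum_{s}u(s,a(s))=u\cdot\bar a$, so the target representation is exactly $a\succeq b$ iff $u\cdot(\bar a-\bar b)\ge 0$ for all $u\in\U$. The ``only if'' direction is routine and parallels Theorem~\ref{thm:rep4}: if $\{\{a_1(s),\ldots,a_n(s)\}\}=\{\{b_1(s),\ldots,b_n(s)\}\}$ for all $s$, then counting outcomes state by state gives $\sum_i\bar a_i=\sum_i\bar b_i$, hence $\sum_i u\cdot\bar a_i=\sum_i u\cdot\bar b_i$ for every $u\in\U$; combining this with $u\cdot\bar a_i\ge u\cdot\bar b_i$ for $i\le k$ and the equal tails $a_{k+1}=\cdots=a_n$, $b_{k+1}=\cdots=b_n$ yields $u\cdot\bar b_n\ge u\cdot\bar a_n$ for every $u\in\U$, i.e. $b_n\succeq a_n$.

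For the ``if'' direction I would set $D=\{\bar a-\bar b:a,b\in C\}$ and $D^+=\{\bar a-\bar b:a\succeq b\}$, let $\CC$ be the (closed, convex) cone generated by the finite set $D^+$, and prove the analogue of Lemma~\ref{lem:cone}, namely $D^+=\CC\cap D$. The inclusion $D^+\subseteq\CC\cap D$ is immediate. For the reverse, suppose $\bar a-\bar b=\sum_i\alpha_i(\bar a_i-\bar b_i)$ with $\alpha_i\ge 0$ and $a_i\succeq b_i$. This is the one place the argument departs from Theorem~\ref{thm:rep4}: since every vector here is an integer vector, this solvable rational linear system has a nonnegative rational solution, so after clearing denominators there are a positive integer $m$ and nonnegative integers $c_i$ with $\sum_i c_i\bar a_i+m\bar b=\sum_i c_i\bar b_i+m\bar a$. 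Read statewise, this says that the multiset consisting of $c_i$ copies of each $a_i$ together with $m$ copies of $b$ agrees, in every state, with the multiset of $c_i$ copies of each $b_i$ together with $m$ copies of $a$. Applying \ecsa\ to the sequence listing the $a_i$ (with multiplicities $c_i$) followed by $m$ copies of $b$, against the one listing the $b_i$ followed by $m$ copies of $a$ --- taking $k=\sum_i c_i$ so that the two tails are the constant sequences $b$ and $a$ --- gives $a\succeq b$, i.e. $\bar a-\bar b\in D^+$. With the lemma in hand, the rest is verbatim Theorem~\ref{thm:rep4}: if $\CC$ is the whole space take $\U=\{0\}$; otherwise write $\CC$ as an intersection of closed half-spaces and let $\U$ be the utility functions $u(s,o)$ given by the characterizing vectors, so that $a\succeq b$ iff $\bar a-\bar b\in\CC$ iff $u\cdot(\bar a-\bar b)\ge 0$ for all $u\in\U$.

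It remains to prove the singleton/\Aone\ equivalence. A single $u$ trivially yields a complete order, so the content is the converse. Assuming \Aone, $\succeq$ satisfies ordinary \csa\ (by the equivalence already established), and I claim that no strict difference lies in $-\CC$: if $\bar b_0-\bar a_0=\sum_i\alpha_i(\bar a_i-\bar b_i)$ with $\alpha_i\ge0$, $a_i\succeq b_i$, and $a_0\succ b_0$, then the same rational-clearing step produces a statewise multiset identity to which \csa\ applies (placing a copy of the pair $(a_0,b_0)$ last), forcing $b_0\succeq a_0$ and contradicting $a_0\succ b_0$. Hence for each $d_0=\bar a_0-\bar b_0$ with $a_0\succ b_0$ we have $-d_0\notin\CC$, so by the same duality used above (separating a point from the closed convex cone $\CC$) there is $u_{d_0}\in\CC^{*}$ with $u_{d_0}\cdot d_0>0$; summing these finitely many $u_{d_0}$ gives a single $u$ that is nonnegative on $D^+$ and strictly positive on every strict difference, and this $u$ represents $\succeq$, since completeness converts $u\cdot(\bar a-\bar b)\ge0$ back into $a\succeq b$.

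The main obstacle is the reverse inclusion in the cone lemma: with no mixture structure there is no mixture independence to invoke, so a real conic combination of differences must be converted into an exact, integer-coefficient statewise multiset identity. The rationality (indeed integrality) of the generating vectors is precisely what makes this conversion possible, and it is what allows \ecsa\ (and, under \Aone, \csa) to stand in for mixture independence throughout.
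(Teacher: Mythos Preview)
Your proposal is correct and follows essentially the same approach as the paper: identify acts with $0/1$ vectors, prove the cone lemma $D^+=\CC\cap D$ by exploiting the integrality of the generating vectors to pass from a nonnegative real conic combination to a nonnegative rational (hence integer) one, and then apply \ecsa\ to the resulting statewise multiset identity---exactly the content of the paper's Lemma~\ref{lem:cone1}. The only noteworthy difference is that you supply an explicit, self-contained argument for the \Aone/singleton equivalence (separating each strict difference from $-\CC$ and summing the separating functionals), whereas the paper simply declares the proof ``identical to that of Theorem~\ref{thm:rep4}'' and there defers to Kreps for the complete case; your treatment is thus a bit more careful on this point, since Kreps' mixture-space result does not apply verbatim in the discrete Savage setting.
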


The proof of Theorem~\ref{thm:staterep} is identical to that of
Theorem~\ref{thm:rep4}, except that
we need an analogue of Lemma~\ref{lem:cone} for the case that $\succeq$
satisfies $\AthreeprAX$.   Let $D$ and $D^+$ be defined as in
Lemma~\ref{lem:cone}.

\begin{lemma}\label{lem:cone1} If $\succeq$ satisfies
  $\AthreeprAX$, there exists a cone $\CC$ such that $D^+ =\CC \cap D$.
\end{lemma}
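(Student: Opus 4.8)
The plan is to mimic the proof of Lemma~\ref{lem:cone}, but to replace its appeal to mixture independence (unavailable here, since a set of Savage acts cannot be mixture-closed) by an integrality argument that lets \ecsa\ do the work; recall that on Savage acts the hypothesis $\AthreeprAX$ is exactly \ecsa. As in that proof I would identify each act $a \in C$ with its $0$--$1$ vector $v_a \in \Nspace{|S|\times|O|}$, whose $(s,o)$ component is $1$ if $a(s)=o$ and $0$ otherwise; this embedding is injective. Set $D = \{a-b : a,b\in C\}$ and $D^+ = \{a-b : a\succeq b\}$, both finite since $C$ is finite. The key observation is that for any two sequences $\langle a_1,\dots,a_n\rangle$, $\langle b_1,\dots,b_n\rangle$ of acts, the pointwise multiset condition $\{\{a_1(s),\dots,a_n(s)\}\} = \{\{b_1(s),\dots,b_n(s)\}\}$ for all $s$ is \emph{exactly} the vector identity $\sum_i v_{a_i} = \sum_i v_{b_i}$, because the $(s,o)$ component of $\sum_i v_{a_i}$ counts $\#\{i : a_i(s)=o\}$. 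Thus \ecsa\ is precisely the device adapted to equalities of sums of these vectors.

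I would take $\CC$ to be the conic hull of $D^+$, i.e.\ the set of all $\alpha_1 d_1 + \cdots + \alpha_m d_m$ with $d_j \in D^+$ and $\alpha_j \ge 0$. Since $D^+$ is finite, $\CC$ is a finitely generated, hence closed and convex, (polyhedral) cone, and $D^+ \subseteq \CC\cap D$ is immediate. The content is the reverse inclusion. Suppose $d \in \CC \cap D$, say $d = a-b$ with $a,b\in C$ and $d = \sum_j \alpha_j d_j$ with $d_j = a_j - b_j$, $a_j \succeq b_j$, $\alpha_j \ge 0$. If $d=0$ then $a=b$ by injectivity and $d\in D^+$ by reflexivity, so assume $d\neq 0$. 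All the vectors $d, d_1,\dots$ have integer entries, so the feasible set $\{\alpha \ge 0 : \sum_j \alpha_j d_j = d\}$ is a nonempty rational polyhedron and therefore contains a rational point; clearing denominators yields a positive integer $N$ and nonnegative integers $\beta_j$ with $Nd = \sum_j \beta_j d_j$. Written as a vector identity this reads
\begin{equation*}
\sum_j \beta_j\, v_{a_j} + N v_b \;=\; \sum_j \beta_j\, v_{b_j} + N v_a .
\end{equation*}

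By the key observation, the sequence obtained by listing each $a_j$ exactly $\beta_j$ times followed by $N$ copies of $b$, and the sequence listing each $b_j$ exactly $\beta_j$ times followed by $N$ copies of $a$, have equal pointwise multisets in every state. The prefix entries are $a_j$ against $b_j$ with $a_j\succeq b_j$, and the common tails are $b$ (length $N$) in the first sequence and $a$ (length $N$) in the second. Taking $k=\sum_j\beta_j$ and total length $n = k+N$ (so $k<n$ since $N\ge 1$), \ecsa\ applies and yields $a \succeq b$, that is, $d \in D^+$, completing the proof. The main obstacle is precisely the step that mixtures handled in Lemma~\ref{lem:cone}: converting an arbitrary \emph{real} nonnegative combination $\sum_j\alpha_j d_j = d$ into something \ecsa\ can consume. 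Rationality of the data lets me replace it by an \emph{integer} combination, after which the multiset bookkeeping of \ecsa\ closes the argument with no recourse to mixtures.
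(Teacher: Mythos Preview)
Your proof is correct and follows essentially the same approach as the paper: take $\CC$ to be the conic hull of $D^+$, use the rationality of the data to replace a real nonnegative combination by an integer one, and then invoke extended statewise cancellation on the resulting multisets. Your presentation is in fact a bit more explicit than the paper's (you spell out the $0$--$1$ embedding, the multiset/vector-sum equivalence, and the $d=0$ case), but the underlying argument is the same.
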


\begin{proof}
Again, let $\CC$ consist of all vectors of the form $\alpha_1 d_1 + \cdots + \alpha_n d_n$ for some $n > 0$, where $d_1, \ldots, d_n \in D^+$ and $\alpha_1, \ldots, \alpha_n \ge 0$.  Clearly $\CC$ is a cone.  Since $C$ is closed and bounded, so is $D$, and \Afour\ implies that  $D^+$ is closed and bounded.  Therefore $CC$ is closed.  Furthermore, $D^+ \subseteq \CC
\cap D$.  For the converse inclusion, suppose that $\alpha_1 d_1 +
\cdots + \alpha_n d_n = d$, where $d_1, \ldots, d_n \in D^+$, $d \in D$,
and $\alpha_1, \ldots, \alpha_n \ge 0$.
That means that $\alpha_1, \ldots, \alpha_n$ is a nonnegative solution
to the system of equations $x_1 d_1 + \cdots + x_n d_n = d$.  Since all the
coefficients in these equations are rational (in fact, they are all 0,
1, and $-1$) there exists a nonnegative rational solution to this
system of equations.  It easily follows that there exist positive integers
$\beta_1,
\ldots, \beta_{n+1}$ such that
$\beta_1 d_1 + \cdots + \beta_n d_n = \beta_{n+1}d$.
By definition,
there must exist $a_1, \ldots, a_n, b_1, \ldots, b_n, a, b \in C$ such
that $a-b = d$, $a_i - b_i = d_i$, and $a_i \succeq b_i$ for $i =
1,\ldots, n$.
It follows that $\{\{a_1, \ldots, a_1, \ldots,
a_n, \ldots, a_n, b, \ldots, b\}\} = \{\{b_1, \ldots, b_1, \ldots,
a_n, \ldots, a_n, b, \ldots, b\}\}$, where $a_i$ occurs in the left-hand
multiset $\beta_i$ times and $b$ occurs $\beta_{n+1}$ times, and,
similarly, $b_i$ occurs in the right-hand side $\beta_i$ times and $a$
occurs $\beta_{n+1}$ times.  $\AthreeprAX$ now implies
that $a \succeq b$, so $d \in \CC$, as desired.
\end{proof}

The proof of Theorem~\ref{thm:staterep} is now identical to that of
Theorem~\ref{thm:rep4}.  Moreover, the proof of Theorem~\ref{thm:eurepa}
now follows from Theorem~\ref{thm:staterep} in exactly the same way that
the proof of Theorem~\ref{thm:aaeurep} follows from
Theorem~\ref{thm:rep4}.
\end{proof}

\begin{proof}[\textbf{\upshape Proof of Theorems \ref{thm:outcomes} and \ref{thm:outcomes}c}] First, let $S = \AtomsAX(T_0)$ and $O' = \A_0$.  As in the proof of Theorem~\ref{thm:aaeurep}, using Theorem~\ref{thm:rep4}, we can find an additively separable representation of $\succeq$; that is, we can find a set $\U$ of utility functions on
$S= \AtomsAX(T_0) \times \A_0$ that represent $\succeq$.
Let $o_1$ and $o_0$ denote the best and worst outcomes guaranteed to
exist by $\Afive$.  Note that it follows from $\Afive$ and $\Asix$ that $u(\delta,o_0) \le u(\delta,a) \le u(\delta,o_1)$ for all $(\delta,a)\in S$.  (For null atoms $\delta$, we must in fact have
$u(\delta,o_0) = u(\delta,a) = u(\delta,o_1)$ for all $a \in \A_0$).
Furthermore, note that we can replace $u$ by $u'$, where $u'(\delta,a) =
u(\delta,a) - u(\delta,o_0)$ for all $(\delta,a) \in S$ to get an
equivalent representation; thus, by appropriate scaling, we can assume
without loss of generality that, for all $u \in \U$, we have that
$u(\delta,o_0) = 0$ for all $\delta \in \AtomsAX(T_0$) (so $u(\delta,a)
\ge 0$ for all $(\delta,a) \in S$) and that
$\sum_{\delta' \in \AtomsAX(T_0)} u_{\succeq'}(\delta',o_1) = 1$.
Finally, note that it easily
follows from $\Afour$ that, for all $o \in O$, there exists a unique
$c_o \in [0,1]$ such that $o \sim c_o o_1 + (1-c_o)o_0$ (in fact,
$c_o = \inf\{c: co_1 + (1-c)o_0 \succeq o\}$).  Clearly $c_{o_1} = 1$
and $c_{o_0} = 0$.
By $\Asix$, it follows
that $o \sim_\delta c_o o_1 + (1-c_o)o_0$ for all atoms $\delta$.  Thus,
we must have that $u(\delta,o) = c_0u(\delta,o_1)$ for all atoms $\delta$
and all $u \in \U$.

We now construct a state-independent SEU representation using $O$ as the
outcome space.
Let $S' = \AtomsAX(T_0) \times \A_0 \times
\EXAXp(\succeq)$.
Define $\pi^0_{S'}$ by taking $\pi^0_S(t) = \union_{\delta \rimp t}
\{\delta \times  A_0 \times \EXAXp(\succeq)\}$, and
define $\rho_{SO}(a)((\delta,a',\succeq'))$ to be $a$
if $a \in O$; $o_1$ if $a \in A_0 - O$ and $a \succeq_{\delta}' a'$; and
$o_0$ otherwise.
Let $u'$ be defined by taking  $u'(o) = c_o$.
Finally, recall that we can take $\U = \{u_{\succeq'}: \succeq' \in
\EXAXp(\succeq)\}$, where $u_{\succeq'}$ represents $\succeq'$.
Let $p_{\succeq'}$ be defined so that
$p_{\succeq'}(\delta,a,\succeq'') = 0$ and, for all $a \in \A_0$,
$p_{\succeq'}(\{(\delta,a',\succeq'): a \succeq' a'\}) =
u_{\succeq'}(\delta,a).$
It is easy to check that a probability measure
$p_{\succeq'}$ can be defined so as to satisfy
this constraint.  In particular, note that $p_{\succeq'}(\{\delta\} \times
\A_0 \times \{\succeq'\}) = u_{\succeq'}(\delta,o_1)$.
For all $(\delta,a) \in S$, we have that
$$u_{\succeq'}(\delta,a) = \sum_{a': \, a \succeq' a'}
p_{\succeq'}(\delta,a',\succeq') =  \sum_{a' \in \A_0}
p_{\succeq'}(\delta,a',\succeq')u(\rho_{S'O}(a)(\delta,a',\succeq')).$$
It follows that $\P$ and $u$ represent $\succeq$, where $\P=
\{p_{\succeq'}: \succeq' \in \EXAXp(\succeq)\}$.
As usual, it is straightforward to verify that updating works appropriately.
\end{proof}

\bibliographystyle{ecca}
\bibliography{beh}

\begin{thebibliography}{47}
\providecommand{\natexlab}[1]{#1}

\bibitem[{Ahn(2008)}]{Ahn07}
\textsc{Ahn, D.} (2008). Ambiguity without a state space. \textit{Review of
  Economic Studies}, \textbf{75}~(1), 3--28.

\bibitem[{Ahn and Ergin(2007)}]{AE07}
\textsc{---} and \textsc{Ergin, H.} (2007). Framing contingencies, unpublished
  manuscript.

\bibitem[{Alon and Lehrer(2014)}]{AlonLehrer}
\textsc{Alon, S.} and \textsc{Lehrer, E.} (2014). Subjective multiple prior
  probability: A representation of a partial likelihood relation.
  \textit{Journal of Economic Theory}, \textbf{151}, 476--492.

\bibitem[{Anscombe and Aumann(1963)}]{AA63}
\textsc{Anscombe, F.~J.} and \textsc{Aumann, R.~J.} (1963). A definition of
  subjective utility. \textit{Annals of Mathematical Statistics}, \textbf{34},
  199--205.

\bibitem[{Aumann(1962)}]{Au62}
\textsc{Aumann, R.~J.} (1962). Utility theory without the completeness axiom.
  \textit{Econometrica}, \textbf{30}, 445--462.

\bibitem[{Aumann(1964)}]{au64}
\textsc{---} (1964). Utility theory without the completeness axiom: A
  correction. \textit{Econometrica}, \textbf{32}, 210--212.

\bibitem[{Aumann(1976)}]{Au}
\textsc{---} (1976). Agreeing to disagree. \textit{Annals of Statistics},
  \textbf{4}~(6), 1236--1239.

\bibitem[{Benartzi and Thaler(2001)}]{BT01}
\textsc{Benartzi, S.} and \textsc{Thaler, R.~H.} (2001). Diversification
  strategies in defined contribution savings plans. \textit{American Economic
  Review}, \textbf{91}~(1), 79--98.

\bibitem[{Bewley(2002)}]{Bewley}
\textsc{Bewley, T.} (2002). Knightian decision theory. \textit{Decisions in
  Economics and Finance}, \textbf{25}~(79--110).

\bibitem[{Billot and Vergopoulos(2018)}]{BV18}
\textsc{Billot, A.} and \textsc{Vergopoulos, V.} (2018). Exepcted utility
  without parsimony. \textit{Mathematical Social Sciences}, \textbf{93},
  14--21.

\bibitem[{Blume \textit{et~al.}(2006)Blume, Easley and Halpern}]{BEH06}
\textsc{Blume, L.}, \textsc{Easley, D.} and \textsc{Halpern, J.} (2006).
  Redoing the foundations of decision theory. In \textit{Principles of
  Knowledge Representation and Reasoning: Proc.~Tenth International Conference
  (KR '06)}, pp. 14--24.

\bibitem[{Dekel \textit{et~al.}(2001)Dekel, Lipman and Rustichini}]{DLR01}
\textsc{Dekel, E.}, \textsc{Lipman, B.} and \textsc{Rustichini, A.} (2001).
  Representing preferences with a unique subjective state space.
  \textit{Econometrica}, \textbf{69}, 891--934.

\bibitem[{Dubra \textit{et~al.}(2004)Dubra, Maccheroni and Ok}]{DMO04}
\textsc{Dubra, J.}, \textsc{Maccheroni, F.} and \textsc{Ok, E.} (2004).
  Expected utility theory without the completeness axiom. \textit{Journal of
  Economic Theory}, \textbf{115}, 118--133.

\bibitem[{Eliaz and Ok(2006)}]{eliazok}
\textsc{Eliaz, K.} and \textsc{Ok, E.} (2006). Indifference or indecisiveness?
  choice-theoretic foundations of incomplete preferences. \textit{Games and
  Economic Behavior}, \textbf{56}, 61--86.

\bibitem[{Fagin and Halpern(1988)}]{FH}
\textsc{Fagin, R.} and \textsc{Halpern, J.~Y.} (1988). Belief, awareness, and
  limited reasoning. \textit{Artificial Intelligence}, \textbf{34}, 39--76.

\bibitem[{Feinberg(2004)}]{Feinberg04}
\textsc{Feinberg, Y.} (2004). \textit{Subjective reasoning---games with
  unawareness}. Tech. Rep. Resarch Paper Series \#1875, Stanford Graduate
  School of Business.

\bibitem[{Fischhoff \textit{et~al.}(4)Fischhoff, Slovic and
  Lichtenstein}]{fischhoffetal78}
\textsc{Fischhoff, B.}, \textsc{Slovic, P.} and \textsc{Lichtenstein, S.} (4).
  Fault trees: Sensitivity of estimated failure probabilities to problem
  representation. \textit{Journal of Experimental Psychology: Human Perception
  and Performance}, \textbf{2}, 330--44.

\bibitem[{Fishburn(1970)}]{Fishburn70}
\textsc{Fishburn, P.~C.} (1970). \textit{Utility Theory for Decisionmaking}.
  New York: John Wiley and Sons.

\bibitem[{Fishburn(1987)}]{Fishburn92}
\textsc{---} (1987). Reconsiderations in the foundations of decision under
  uncertainty. \textit{The Economic Journal}, \textbf{97}~(388), 825--841.

\bibitem[{Galaabaatar and Karni(2013)}]{GK2013}
\textsc{Galaabaatar, T.} and \textsc{Karni, E.} (2013). Subjective expected
  utility with incomplete preferences. \textit{Econometrica}, \textbf{81},
  255--284.

\bibitem[{Ghirardato(2001)}]{Ghir01}
\textsc{Ghirardato, P.} (2001). Coping with ignorance: unforeseen contingencies
  and non-additive uncertainty. \textit{Economic Theory}, \textbf{17},
  247--276.

\bibitem[{Gilboa and Schmeidler(2004)}]{GS04}
\textsc{Gilboa, I.} and \textsc{Schmeidler, D.} (2004). Subjective
  distributions. \textit{Theory and Decision}, \textbf{56}, 345--357.

\bibitem[{Grabiszewski(2016)}]{Grabiszewski16}
\textsc{Grabiszewski, K.} (2016). On the rejectability of the subjective
  expected utility theory. \textit{BE Journal of Theoretical Economics},
  \textbf{16}, 437--454.

\bibitem[{Halpern(2001)}]{Hal34}
\textsc{Halpern, J.~Y.} (2001). Alternative semantics for unawareness.
  \textit{Games and Economic Behavior}, \textbf{37}, 321--339.

\bibitem[{Halpern and R\^ego(2006)}]{HR06}
\textsc{---} and \textsc{R\^ego, L.~C.} (2006). Extensive games with possibly
  unaware players. In \textit{Proc.~Fifth International Joint Conference on
  Autonomous Agents and Multiagent Systems}, pp. 744--751, full version
  available at arxiv.org/abs/0704.2014.

\bibitem[{Heifetz \textit{et~al.}(2006)Heifetz, Meier and Schipper}]{HMS03}
\textsc{Heifetz, A.}, \textsc{Meier, M.} and \textsc{Schipper, B.} (2006).
  Interactive unawareness. \textit{Journal of Economic Theory}, \textbf{130},
  78--94.

\bibitem[{Johnson \textit{et~al.}(1993)Johnson, Hershey, Meszaros and
  Kunreuther}]{jhmk}
\textsc{Johnson, E.}, \textsc{Hershey, J.}, \textsc{Meszaros, J.} and
  \textsc{Kunreuther, H.} (1993). Framing, probability distortions, and
  insurance decisions. \textit{Journal of Risk and Uncertainty}, \textbf{7},
  35--51.

\bibitem[{Karni(2006)}]{Karni06}
\textsc{Karni, E.} (2006). Subjective expected utility theory without states of
  the world. \textit{Journal of Mathematical Economics}, \textbf{42}, 325--342.

\bibitem[{Karni and Vier{\o}(2013)}]{karni2013reverse}
\textsc{---} and \textsc{Vier{\o}, M.-L.} (2013). ``{R}everse {B}ayesianism'':
  A choice-based theory of growing awareness. \textit{American Economic
  Review}, \textbf{103}~(7), 2790--2810.

\bibitem[{Krantz \textit{et~al.}(1971)Krantz, Luce, Suppes and
  Tversky}]{KLST71}
\textsc{Krantz, D.~H.}, \textsc{Luce, R.~D.}, \textsc{Suppes, P.} and
  \textsc{Tversky, A.} (1971). \textit{Foundations of Measurement, Vol 1:
  Additive and Polynomial Representations}. New York: Academic Press.

\bibitem[{Kreps(1992)}]{Kreps92}
\textsc{Kreps, D.} (1992). Static choice and unforeseen contingencies. In
  P.~Dasgupta, D.~Gale and O.~Hart (eds.), \textit{Economic Analysis of Markets
  and Games: Essays in Honor of Frank Hahn}, Cambridge, Mass.: MIT Press.

\bibitem[{Kreps(1979)}]{kreps79}
\textsc{Kreps, D.~M.} (1979). A representation theorem for "preference for
  flexibility". \textit{Econometrica}, \textbf{47}~(3), 565--77.

\bibitem[{Kreps(1988)}]{Kreps}
\textsc{---} (1988). \textit{Notes on the Theory of Choice}. Boulder, Colo.:
  Westview Press.

\bibitem[{Lipman(1999)}]{Lip99}
\textsc{Lipman, B.} (1999). Decision theory without logical omniscience:
  {T}oward an axiomatic framework for bounded rationality. \textit{Review of
  Economic Studies}, \textbf{66}, 339--361.

\bibitem[{Luce(1990)}]{Luce90}
\textsc{Luce, R.~D.} (1990). Rational versus plausible accounting equivalences
  in preference judgments. \textit{Psychological Science}, \textbf{1},
  225--234, reprinted with minor changes in Ward Edwards (Ed.), {\em Utility
  Theories: Measurements and Applications}, pp. 187--206. Boston: Kluwer, 1992.

\bibitem[{Machina(2006)}]{Machina03}
\textsc{Machina, M.} (2006). States of the world and state of decision theory.
  In D.~Meyer (ed.), \textit{The Economics of Risk}, Kalamazoo, MI: W. E.
  Upjohn Institute.

\bibitem[{Mandler(2001)}]{mandler}
\textsc{Mandler, M.} (2001). A difficult choice in preference theory:
  rationality implies completeness or transitivity but not both. In E.~Millgram
  (ed.), \textit{Varieties of Practical Reasoning}, Cambridge MA: MIT Press.

\bibitem[{McNeil \textit{et~al.}(1982)McNeil, Pauker, {Sox Jr.} and
  Tversky}]{MPST82}
\textsc{McNeil, B.~J.}, \textsc{Pauker, S.~J.}, \textsc{{Sox Jr.}, H.~C.} and
  \textsc{Tversky, A.} (1982). On the elicitation of preferences for
  alternative therapies. \textit{New England Journal of Medicine},
  \textbf{306}, 1259--1262.

\bibitem[{Modica and Rustichini(1999)}]{MR99}
\textsc{Modica, S.} and \textsc{Rustichini, A.} (1999). Unawareness and
  partitional information structures. \textit{Games and Economic Behavior},
  \textbf{27}~(2), 265--298.

\bibitem[{Nau(2006)}]{Nau2006}
\textsc{Nau, R.} (2006). The shpae of incomplete preferences. \textit{Annals of
  Statistics}, \textbf{34}, 2430--2448.

\bibitem[{Rantala(1982)}]{Rant}
\textsc{Rantala, V.} (1982). Impossible worlds semantics and logical
  omniscience. \textit{Acta Philosophica Fennica}, \textbf{35}, 18--24.

\bibitem[{Rockafellar(1970)}]{Rockafellar}
\textsc{Rockafellar, R.~T.} (1970). \textit{Convex Analysis}. Princeton, N.J.:
  Princeton University Press.

\bibitem[{Savage(1954)}]{Savage}
\textsc{Savage, L.~J.} (1954). \textit{Foundations of Statistics}. New York:
  Wiley.

\bibitem[{Schmeidler(1989)}]{Schmeidler89}
\textsc{Schmeidler, D.} (1989). Subjective probability and expected utility
  without additivity. \textit{Econometrica}, \textbf{57}, 571--587.

\bibitem[{Scott(1964)}]{scott64}
\textsc{Scott, D.} (1964). Measurement structures and linear inequalities.
  \textit{Journal of Mathematical Psychology}, \textbf{1}, 233--247.

\bibitem[{Thaler and Sunstein(2009)}]{TS09}
\textsc{Thaler, R.~H.} and \textsc{Sunstein, C.} (2009). \textit{Nudge:
  Improving Decisions About Health, Wealth, and Happiness}. Penguin Books.

\bibitem[{Tversky and Koehler(1994)}]{tverskyetal94}
\textsc{Tversky, A.} and \textsc{Koehler, D.~J.} (1994). Support theory: A
  nonextentional representation of subjective probability.
  \textit{Psychological Review}, \textbf{101}~(4), 547--67.

\end{thebibliography}
\end{document}